\def\compilelong{}
  \newenvironment{shortversion}{}{}
  \newenvironment{longversion}{}{}
\def\eps{\varepsilon}
\def\numfaces{|F|}
\def\concat{\circ}
\def\tuplecat{{+\!\!+}\,}
\def\arcto{\mathord\shortrightarrow}
\def\wallwith{\mathord\mid}
\def\fencewith{\mathord\shortuparrow}
\def\fence#1#2{#1\mathord\shortuparrow#2}
\def\dartof#1{\vec{#1}}
\def\dartsof#1{\dartof{#1}}
\def\rev{\mathit{rev}}
\def\permutation{\pi}
\def\cycle{\gamma}
\def\cycles{\Gamma}
\def\walk{P}
\def\corewalk{\bar{P}}
\def\tree{T}
\def\cotree{C}
\def\leftovers{L}
\def\fundcycle{\mathit{cycle}}
\def\cutgraph{X}
\def\core{\bar{\cutgraph}}
\def\hair{H}
\def\cutpath{\pi}
\def\cutpaths{\Pi}
\def\reducedcutgraph{\tilde{\cutgraph}}
\def\cutedge{\pi}
\def\cutdart{\xi}
\def\cutdarts{\vec{\Pi}}
\def\darttree{\mathcal{A}}
\def\facefinger{f}
\def\finger{\facefinger}
\def\flow{f}
\def\imbalance#1{\delta{#1}}
\def\coflow{z}
\def\coimbalance#1{\partial{#1}}
\def\cwneighbors{\partial^-}
\def\cost{c}
\def\capacity{\mu}
\def\demand{b}
\def\homsig#1{[#1]}
\def\potential{\alpha}
\def\dist{\mathit{dist}}
\def\parameter{\lambda}
\def\uparameter{\lambda_0}
\def\slack{slack}
\def\uslack{slack_0}
\def\shortestpath{\sigma}
\def\correspondence{\mathcal{C}}
\newtheorem{theorem}{Theorem}[section]
\newtheorem{corollary}[theorem]{Corollary}
\newtheorem{lemma}[theorem]{Lemma}
\title{Holiest Minimum-Cost Paths and Flows in Surface Graphs}
\author{Jeff Erickson}
\affiliation{%
  \department{Computer Science}
  \institution{University of Illinois at Urbana-Champaign}
  \city{Urbana}
  \state{IL}
  \postcode{61801-2302}
  \country{USA}}
\email{jeffe@illinois.edu}
\author{Kyle Fox}
\affiliation{%
  \department{Computer Science}
  \institution{The University of Texas at Dallas}
  \city{Richardson}
  \state{TX}
  \postcode{75080}
  \country{USA}}
\email{kyle.fox@utdallas.edu}
\author{Luvsandondov Lkhamsuren}
\affiliation{%
  \institution{Airbnb}
  \city{San Francisco}
  \state{CA}
  \postcode{94103}
  \country{USA}}
\email{lkhamsurenl@gmail.com}
\begin{document}

\ifx\compilelong\undefined
\else
\pagestyle{myheadings}
\markboth{Holiest Minimum-Cost Paths and Flows in Surface Graphs}
		 {Jeff Erickson, Kyle Fox, and Luvsandondov Lkhamsuren}
\begin{titlepage}
\title{Holiest Minimum-Cost Paths and Flows in Surface Graphs\footnote{
This work was initiated at Dagstuhl seminar 16221 ``Algorithms for Optimization Problems in Planar
Graphs''.
The latest full version of this paper can be found at
\url{https://utdallas.edu/\~kyle.fox/publications/holiest.pdf}.
The research presented in this paper was partially supported by NSF grants CCF-1408763, IIS-1408846,
IIS-1447554, CCF-1513816, CCF-1546392, CCF-1527084, and CCF-1535972; by ARO grant W911NF-15-1-0408,
and by grant 2012/229 from the U.S.-Israel Binational Science Foundation.
}
}

\author{
  Jeff Erickson%
  \thanks{Department of Computer Science,
    University of Illinois, Urbana-Champaign; \url{jeffe@illinois.edu}.
    }
  \and
  Kyle Fox%
  \thanks{Department of Computer Science,
    The University of Texas at Dallas; \url{kyle.fox@utdallas.edu}.
    Portions of this work were done while the author was a postdoctoral associate at Duke
    University.
    }
  \and
  Luvsandondov Lkhamsuren%
  \thanks{Airbnb; \url{lkhamsurenl@gmail.com}.
    Portions of this work were done while this author was a student at the University of Illinois at
    Urbana-Champaign.
    }
  }

\maketitle
\fi
\begin{abstract}
Let \(G\) be an edge-weighted directed graph with \(n\) vertices embedded on an orientable surface
of genus \(g\).
We describe a simple deterministic lexicographic perturbation scheme that guarantees uniqueness of
minimum-cost flows and shortest paths in~\(G\).
The perturbations take~\(O(gn)\) time to compute.
We use our perturbation scheme in a black box manner to derive a deterministic \(O(n \log \log n)\)
time algorithm for minimum cut in \emph{directed} edge-weighted planar graphs and a deterministic
\(O(g^2 n \log n)\) time proprocessing scheme for the multiple-source shortest paths problem of
computing a shortest path oracle for all vertices lying on a common face of a surface embedded
graph.
The latter result yields faster deterministic near-linear time algorithms for a variety of
problems in constant genus surface embedded graphs.

Finally, we open the black box in order to generalize a recent linear-time algorithm for
multiple-source shortest paths in unweighted undirected planar graphs to work in arbitrary
orientable surfaces.
Our algorithm runs in \(O(g^2 n \log g)\) time in this setting, and it can be used to give improved
linear time algorithms for several problems in unweighted undirected surface embedded graphs of
constant genus including the computation of minimum cuts, shortest topologically non-trivial cycles,
and minimum homology bases.
\end{abstract}
\ifx\compilelong\undefined
\setcopyright{acmcopyright}
\acmPrice{15.00}
\acmDOI{10.1145/3188745.3188904}
\acmYear{2018}
\copyrightyear{2018}
\acmISBN{978-1-4503-5559-9/18/06}
\acmConference[STOC'18]{50th Annual ACM SIGACT Symposium on the Theory of Computing}{June 25--29, 2018}{Los Angeles, CA, USA}

\begin{CCSXML}
<ccs2012>
<concept>
<concept_id>10002950.10003624.10003633.10003643</concept_id>
<concept_desc>Mathematics of computing~Graphs and surfaces</concept_desc>
<concept_significance>500</concept_significance>
</concept>
<concept>
<concept_id>10002950.10003624.10003633.10010917</concept_id>
<concept_desc>Mathematics of computing~Graph algorithms</concept_desc>
<concept_significance>500</concept_significance>
</concept>
<concept>
<concept_id>10003752.10003809.10003635.10010037</concept_id>
<concept_desc>Theory of computation~Shortest paths</concept_desc>
<concept_significance>500</concept_significance>
</concept>
<concept>
<concept_id>10002950.10003624.10003633.10003644</concept_id>
<concept_desc>Mathematics of computing~Network flows</concept_desc>
<concept_significance>300</concept_significance>
</concept>
</ccs2012>
\end{CCSXML}

\ccsdesc[500]{Mathematics of computing~Graphs and surfaces}
\ccsdesc[500]{Mathematics of computing~Graph algorithms}
\ccsdesc[500]{Theory of computation~Shortest paths}
\ccsdesc[300]{Mathematics of computing~Network flows}
\keywords{computational topology, graphs, surfaces, shortest paths, network flows}
\maketitle
\else
\noindent

\thispagestyle{empty}
\setcounter{page}{0}
\end{titlepage}
\fi

\section{Introduction}
\label{sec:intro}

Many recent combinatorial optimization algorithms for directed surface embedded graphs rely on a
common assumption:
the shortest path between any pair of vertices is unique.
The most commonly applied consequence of this assumption is that the shortest paths entering (or
leaving) a common vertex do not cross one another.
From this consequence, one can prove near-linear running time bounds for a variety of problems,
including the computation of maximum flows~\cite{bk-amfdp-09,e-mfpsp-10,bkmnw-msmsm-17,ek-lafms-13}
and global minimum cuts~\cite{mnnw-mcdpg-18} in directed planar (genus \(0\)) graphs
as well as the computation of minimum cut oracles in planar and more general embedded
graphs~\cite{bsw-mscop-15,benw-apmcn-16} (see also Wulff-Nilsen~\cite{w-mcbap-09}).

This assumption is also used in algorithms for the multiple-source shortest paths problem introduced
for planar graphs by Klein~\cite{k-msspp-05}.
In the multiple-source shortest paths problem, one is given a surface embedded graph \(G = (V, E,
F)\) of genus \(g\) with vertices \(V\), edges \(E\), and faces \(F\).
The goal is to compute a representation of all shortest paths from vertices on a common face \(r \in
F\) to all other vertices in the graph.
Assuming uniqueness of shortest paths, multiple-source shortest paths can be computed in only~\(O(g
n \log n)\) time~\cite{k-msspp-05,cce-msspe-13}.
Algorithms for this problem can be used to solve a variety of problems in planar and more general
surface embedded graphs of constant genus in near-linear time.
Such results include the computation of shortest cycles with non-trivial
topology~\cite{cce-msspe-13,ew-csec-10,en-mcsnc-11,e-sncds-11,f-sntcd-13, bcfn-mchbs-17}, the
computation of maximum flows and minimum
cuts~\cite{en-mcsnc-11,insw-iamcmf-11,bkmnw-msmsm-17,efn-gmcse-12,cen-hfcc-12,lnsw-ssasm-12,cl-cgpgl-13}, the
computation of exact and approximate
distance oracles~\cite{c-mdpg-10,kks-lsado-11,ms-edopg-12}, and even the computation of \emph{single}-source shortest
paths~\cite{kmw-spdpg-09,mw-sppgr-10}.

\paragraph*{Enforcing uniqueness.}
Unfortunately, it is often difficult to actually enforce the assumption that shortest paths are
unique.
One popular method is to add tiny random perturbations to the lengths of edges, and then apply a
variant of the Isolation Lemma of Mulmuley \etal~\cite{mvv-memi-87} to argue that shortest paths are
unique \emph{with high probability}.
This method is used directly by Erickson~\cite{e-mfpsp-10},
Mozes~\etal~\cite{mnnw-mcdpg-18}, Cabello \etal~\cite{cce-msspe-13}, and the numerous
papers that rely on the latter result.

As an alternative to using randomness, one can instead use a \emph{lexicographic perturbation
scheme} where one redefines edge lengths to be multidimensional vectors so that comparisons can be
done lexicographically.
One such scheme was proposed by Charnes~\cite{c-odlp-52} and Dantzig \etal~\cite{dow-gsmml-55}, and
variants of it have been used for computing minimum cut oracles in planar
graphs~\cite{hm-apmcb-94,w-mcbap-09,bsw-mscop-15}.
In short, the scheme turns every edge length into an \(n+1\)-dimensional vector where \(n\) is the
number of edges in the graph.
The first component of the vector is the true length of the edge, but then there is a single other
component set to \(1\) based purely on the edge for which we are reassigning the length.
Naively implementing the scheme adds an~\(O(n)\) time overhead to all operations involving edge
length.
There are faster ways to use the scheme depending on the application one has in mind.
In particular, Cabello \etal~\cite{cce-msspe-13} implement the scheme with only a \(\log n\) factor
increase in the running time of their multiple-source shortest paths algorithm.
However, these fast implementations require some fairly heavy machinery, and even implementing
Dijkstra's~\cite{d-ntpcg-59} algorithm for single-source shortest paths requires that same \(\log
n\) factor increase in the running time and the use of relatively complex dynamic tree data
structures~\cite{t-dtste-97,hk-rfdga-99,tw-satt-05};
see Cabello \etal~\cite[Section 6.2]{cce-msspe-13}.

\paragraph*{Parametric shortest paths and the leafmost rule.}
Several algorithms for multiple-source shortest paths in embedded
graphs~\cite{k-msspp-05,cce-msspe-13,ek-lafms-13} and maximum flows in planar
graphs~\cite{bk-amfdp-09,e-mfpsp-10,ek-lafms-13} rely (at least by some interpretations) on the
parametric shortest paths framework introduced by Karp and Orlin~\cite{ko-pspaa-81,yto-fpspm-91}.
In short, these algorithms redefine the length of a subset of edges to increase or decrease by an
amount equal to some parameter \(\parameter\).
The algorithms then continuously increase \(\parameter\) while maintaining a shortest path tree
\(T\).
At certain values of \(\parameter\), an edge will \emph{pivot} into \(T\) while another edge pivots
out.
Uniqueness of shortest paths guarantees the total number of pivots to be small for the algorithms
mentioned above.

That said, one can sometimes avoid the need for unique shortest paths by utilizing properties of
planar embeddings.
Klein~\cite{k-msspp-05}, Borradaile and Klein~\cite{bk-amfdp-09}, and Eisenstat and
Klein~\cite{ek-lafms-13} all give efficient algorithms that successfully use the parametric shortest
path framework without doing anything explicit to the edge lengths to guarantee unique shortest paths.
In particular, Eisenstat and Klein~\cite{ek-lafms-13} give linear-time maximum flow and
multiple-source shortest paths algorithms that \emph{cannot} take advantage of the perturbation
schemes mentioned above, because they crucially rely on the edge capacities/lengths being small
non-negative integers.
(Weihe~\cite{w-edstp-97} also describes a linear-time maximum flow algorithm for unweighted
undirected planar graphs, and Brandes and Wagner~\cite{bw-ltaad-00} give an algorithm for unweighted
\emph{directed} planar graphs.)

Instead of using perturbation schemes, these algorithms all take advantage of the \emph{leafmost
rule} for selecting edges to pivot into the shortest path tree \(T\).
The leafmost rule works as follows:
The edges outside of \(T\) form a spanning tree \(\cotree\) of the planar dual graph.
Consider rooting \(\cotree\) at some dual vertex (primal face);
the root we choose depends upon the particular algorithm we are attempting to implement.
When \(\parameter\) reaches a value that requires pivoting an edge into \(T\) but there are multiple
appropriate candidate edges to choose from, the leafmost rule dictates that we should always select
the candidate edge lying closest to a leaf of \(\cotree\).
As a result, these algorithms all maintain \emph{leftmost} shortest path trees, assuming the initial
shortest path tree was itself leftmost.
We note the leafmost rule bears a strong resemblance to Cunningham's~\cite{c-nsm-76} rule for
maintaining a \emph{strongly feasible basis} during network simplex.

Despite these successes, the leafmost rule and leftmost shortest path trees still do not present an
ideal solution for algorithms requiring unique shortest paths.
For one, these algorithms need to be designed with leftmost shortest path trees in mind.
In contrast, random perturbations and lexicographic perturbation schemes can be implemented with only
minor changes in how comparisons and basic arithmetic operations are performed.
And perhaps more seriously, there is no obvious generalization of leftmost shortest path trees or
the leafmost rule for pivots in surface embedded graphs of non-zero genus.
In particular, the complement of a spanning tree is not itself a tree in this case.
Certain algorithms such as the multiple-source shortest paths algorithm of
Cabello~\etal~\cite{cce-msspe-13} appear to crucially rely on a guarantee that shortest paths really
are unique.

\subsection{Our Results}
\label{subsec:intro_results}

Let \(G\) be a graph of size \(n\) embedded in an orientable surface of genus \(g\) with lengths on
the edges.
We present a deterministic lexicographic perturbation scheme that guarantees uniqueness of shortest
paths despite using only \(O(g + 1)\)-dimensional vectors for the perturbed edge lengths.
The perturbation terms we use are all integers of absolute value \(O(n)\), so our scheme can be
employed in any combinatorial algorithm implemented in the word RAM model.
Using our scheme increases the asymptotic running time of such algorithms by at most a factor of
\(g\).

As detailed in Section~\ref{sec:scheme}, the perturbation vectors can be computed in \(O(gn)\)
time using a simple algorithm.
In short, we compute a \(2g\)-bit signature \([e]\) for each edge \(e\) so that the sum of edge
signatures along a cycle characterizes the \emph{homology class} of that cycle with coefficients in
\(\Z\).
A cycle's homology class describes how it wraps around the holes on a surface.
We also compute a single integer \(\coflow(e)\) for each edge \(e\) so that given a cycle \(\cycle\)
bounding a subset of faces \(F' \subseteq F\), the absolute value of the sum of these integers along
\(\cycle\) is equal to the number of faces in \(F'\).
This latter assignment of integers is inspired in part by results of Park and
Philips~\cite{pp-fmcpg-93} and Patel~\cite{p-deeoc-13} on the minimum quotient and sparsest cut
problems in planar graphs.
Our perturbation vectors contain both \([e]\) and \(\coflow(e)\), and it is not difficult to show
that every (simple) cycle in \(G\) has non-zero cost according to our lexicographic perturbation
scheme.
Uniqueness of shortest paths follows as an easy consequence.

In fact, our scheme can be used to modify the \emph{costs} of edges in the more general minimum-cost
flow problem, guaranteeing that the minimum-cost flow itself is unique.
It turns out that our scheme encourages the selection of leftmost shortest paths or minimum-cost
flows in planar graphs, so we refer to the unique optimal solutions to these two problems as
\emph{homologically lexicographic least leftmost} minimum-cost paths and flows, or \emph{holiest}
paths and flows, for short.

After describing our perturbation scheme for computing holiest paths and flows, we turn to its
applications.
Using our scheme in a black box manner, we immediately derandomize the recent \(O(n \log \log n)\)
time minimum cut algorithm for directed planar graphs by
Mozes~\etal~\cite{mnnw-mcdpg-18}.%
\footnote{We admit that Mozes~\etal~were aware of the current work as they were writing their paper,
so they may not have felt a strong need to derandomize their algorithm themselves.}

Our scheme can also be used in the multiple-source shortest paths algorithm of
Cabello~\etal~\cite{cce-msspe-13} for arbitrary surface embedded graphs, bringing its total running
time to \(O(g^2 n \log n)\).
Compared to the deterministic perturbation scheme they consider, our alternative provides a factor
\((\log n) / g\) improvement in running time, and the implementation is considerably simpler.
In turn, we obtain the same \((\log n) / g\) factor improvement to the deterministic versions of
nearly every algorithm that uses their data structure.
Cabello~\etal~actually require a slightly stronger condition than mere uniqueness of shortest paths,
but we are able to show our scheme guarantees the condition holds in Section~\ref{sec:MSSP}.
The exposition in that section also helps set us up for our remaining results.

It turns out that holiest paths and flows are not only leftmost objects of minimum-cost, but our
perturbation scheme also forces the aforementioned parametric shortest path based algorithms to
choose leafmost edges during pivots.
\begin{longversion}
See Section~\ref{sec:leafmost}.
\end{longversion}
Based on this observation, we generalize the linear-time multiple-source shortest paths algorithm of
Eisentstat and Klein~\cite{ek-lafms-13} for small integer edge lengths so that it works in surface
embedded graphs of arbitrary genus.
Our generalization runs in \(O(g(g n \log g + L))\) time where \(L\) is the sum of the integer edge
lengths.
Like Eisenstat and Klein, we must assume every edge has a reversal, essentially modeling unweighted
undirected graphs in the case that edge lengths are all \(1\).

The high level idea behind our algorithm is to generalize the leafmost rule using our new
perturbation scheme.
When we must pivot an edge into the holiest shortest path tree \(T\), we partition the set of
candidate edges into collections based on the homology class of their fundamental cycles with \(T\).
We pick a collection based on the homology signature portion of our scheme's perturbation vectors,
and then essentially apply the leafmost rule to edges \emph{within that collection} to select the
one that enters \(T\).
Finding the leafmost edge requires individually checking edges to see which ones can be pivoted into
\(T\).
Fortunately, we can charge the time spent checking these edges to changes in the \emph{homotopy}
class of the holiest paths to these edges' endpoints.
We give our algorithm and analysis in Section~\ref{sec:linear-time}.

Finally, using our linear-time algorithm for multiple-source shortest paths, we immediately obtain
new linear time algorithms for a variety of problems in unweighted undirected surface embedded
graphs, including the computation of \(s,t\)- and global minimum cuts, shortest cycles with
non-trivial embeddings, and shortest homology bases.
By combining known works, one can obtain linear time algorithms for each of these problems, assuming
the genus is a constant.
However, our new algorithms improve the running time for computing cuts from \(g^{O(g)} n\) to
\(2^{O(g)} n\), and they improve the running time for the other problems from \(2^{O(g)} n\) to
\(O(\poly(g) n)\).
In particular, ours are the first algorithms for the latter problems that simultaneously have
polynomial dependency on \(g\) and linear dependency on \(n\).
We describe these applications in Section~\ref{sec:applications}.

\begin{shortversion}
Because of space constraints, we are unable to provide full details for some of our algorithms and
lemma proofs in this version of the paper.
We refer the reader to the full version of the paper available at
\url{https://utdallas.edu/~kyle.fox/publications/holiest.pdf} for these details.
\end{shortversion}

\subsection{Additional Related Work}
\label{subsec:related}

Although they may sometimes go by different names such as \emph{uppermost} or \emph{rightmost}, the
idea of computing leftmost paths and flows in planar graphs appears as far back as the original
maximum flow-minimum cut paper of Ford and Fulkerson~\cite{ff-mfn-56}.
Several researchers have designed efficient algorithms for specializations of the maximum flow problem
in planar graphs using this idea~\cite{bg-pgtn-65,is-mfpn-79,h-mfpn-81,rww-vmppg-97,w-edstp-97,w-mstfp-97,bk-amfdp-09}.
There is a deep connection between flows in planar graphs and shortest paths in their duals
(see, for example, Venkatesan~\cite{v-anf-83}).
As far as we are aware, though, Klein~\cite{k-msspp-05} was the first to apply the idea of directly
computing \underline{\hspace{0.75cm}}most shortest path trees.

Khuller, Naor, and Klein~\cite{knk-lsfpg-93} observed that the set of integral \emph{circulations}
in a planar graph form a distributive lattice, and solutions to the minimum cost circulation problem
form a sublattice.
Indeed, a planar circulation is the boundary of a potential function (or \(2\)-chain) on the faces,
and the meet and join can be defined by taking the component-wise max and min of the potential
function, respectively.
Many of the flow algorithms mentioned above actually find the top or bottom element in the
(sub)lattice.
Depending on which specifics one chooses, our lexicographic perturbation scheme simply enforces that
one choose the minimum flow or circulation according to this sublattice.
Matuschke and Peis~\cite{mp-lmfap-10b} show that the left/right relation on \(s,t\)-paths in planar
graphs also forms a lattice.

Bourke, Tewari, and Vinodchandran~\cite{btv-dprul-09} observed that the \emph{reachability} problem
for planar directed graphs lies in \emph{unambiguous log-space} (UL).
A key aspect of their algorithm is computing a set of lengths for the edges of a grid graph so that
shortest paths are unique.
Their edge weighting scheme was later extended to arbitrary planar graphs by Tewari and
Vinodchandran~\cite{tv-gtipg-12} and graphs embedded on constant genus surfaces by
Datta~\etal~\cite{dktv-scpmb-12}.
This latter result is similar to ours in that the length of each edge is the linear combination of
\(O(g)\) separate length functions including parts that encode the topology of paths and one part
encoding face containment for topologically trivial cycles.
However, our lexicographic perturbation scheme is arguably easier to implement than Datta~\etal's
scheme in that they (and Tewari and Vinodchandran~\cite{tv-gtipg-12}) must compute a straight-line
embedding of a subgraph of the input, while we work directly with the graph's \emph{combinatorial
embedding}.
Also, they use about twice as many length functions as we use vector components, and it is unclear
if their scheme is as directly useful as ours for designing a linear time algorithm for
multiple-source shortest paths in embedded graphs with small integer edge lengths.

\section{Preliminaries}
\label{sec:prelims}

We begin with an introduction to surface embedded graphs.
For more background we refer the reader to books and
surveys~\cite{h-at-02,m-t-00,eh-cti-10,z-tc-05,c-tags-12,mt-gs-01} related to the topic.

\paragraph*{Surfaces.}
A \EMPH{surface} or \(2\)-manifold with boundary \(\Sigma\) is a compact Hausdorff space where every
point lies in an open neighborhood homeomorphic to either the Euclidean plane or the closed half
plane.
The points whose neighborhoods are homeomorphic to the closed half plane constitute the
\EMPH{boundary} of the surface.
Every component of the boundary is homeomorphic to the unit circle.
A \EMPH{cycle} in the surface \(\Sigma\) is a continuous function \(\cycle : S^1 \to \Sigma\) where
\(S^1\) is the unit circle.
Cycle \(\cycle\) is \EMPH{simple} if \(\cycle\) is injective.
A \EMPH{path} \(\walk\) in the surface \(\Sigma\) is a continuous function \(\walk : [0,1] \to
\Sigma\);
again, \(p\) is simple if it is injective.
A \EMPH{loop} is a path \(\walk\) such that \(\walk(0) = \walk(1)\);
in other words, it is a cycle with a designated base point.
The \EMPH{genus} of the surface \(\Sigma\), which we denote as \(g\), is the maximum number of
pairwise disjoint simple cycles \(\cycle_1, \dots, \cycle_g\) in \(\Sigma\) such that \(\Sigma
\setminus (\cycle_1 \cup \dots \cup \cycle_g)\) is connected.
Surface \(\Sigma\) is \EMPH{non-orientable} if any subset of \(\Sigma\) is homeomorphic to the
\Mobius~band.
Otherwise, \(\Sigma\) is \EMPH{orientable}.
Up to homeomorphism, a surface is characterized by its genus, the number of boundary components, and
whether or not it is orientable.
We directly work only with orientable surfaces in this paper.%
\footnote{Cabello~\etal~\cite{cce-msspe-13} describe a reduction for their multiple-source shortest
paths algorithm in graphs embedded in non-orientable surfaces to the same problem in graphs
embedded in orientable surfaces.
We can apply our perturbation scheme or linear-time algorithm \emph{after} applying their reduction
in order to extend at least some of our results to graphs embedded in non-orientable surfaces.}

Let \(\walk_1\) and \(\walk_2\) be two paths in \(\Sigma\).
Paths \(\walk_1\) and \(\walk_2\) \EMPH{cross} if no continuous infinitesimal perturbation makes
them disjoint.
Otherwise, we call them \EMPH{non-crossing}.
They are \EMPH{homotopic} if one can be continuously deformed into the other without changing their
endpoints.
More formally, there must exist a \EMPH{homotopy} between them, defined as a continuous map \(h :
[0,1] \times [0,1] \to \Sigma\) such that \(h(0, \cdot) = p\) and \(h(1, \cdot) = q\).
Homotopy defines an equivalence relation over the set of paths with any fixed pair of endpoints.
A cycle is \EMPH{contractible} if it is homotopic to a constant map, and a loop is contractible if
it is homotopic to its base point.
The concatenation of a path \(\walk\) and loop \(\gamma\) with common endpoint is homotopic to
\(\walk\) if and only if \(\gamma\) is contractible.

\paragraph*{Graph embeddings.}
The \EMPH{surface embedding} of an \emph{undirected} graph \(G\) with vertex set \(V\) and edge set
\(E\) is a drawing of \(G\) on a surface \(\Sigma\) which maps vertices to distinct points on
\(\Sigma\) and edges to internally disjoint simple paths whose endpoints lie on their incident
vertices' points.
A \EMPH{face} of the embedding is a maximally connected subset of \(\Sigma\) that does not
intersect the image of \(G\).
An embedding is \EMPH{cellular} if every face is homeomorphic to an open disc.
In a cellular embedding, every boundary component is covered by the image of a cycle in \(G\).
Let \(F\) be the set of faces of a cellular embedding, and let \(b\) be the number of boundary
components.
By Euler's formula, \(|V| - |E| + |F| = 2 - 2g - b\).

To more easily support directed graphs, we will assume \(G\) is connected and that its embedding is
given as a \EMPH{rotation system}.
These embeddings are sometimes referred to as \EMPH{combinatorial embeddings} as well (see, for
example, Eisenstat and Klein~\cite{ek-lafms-13}).
Let \(\dartsof{E}\) denote a collection of ``directed edges'' we refer to as \EMPH{darts}.
Let \(rev: \dartsof{E} \to \dartsof{E}\) be an involution on the darts we refer to as their
\EMPH{reversals}.
Each \EMPH{edge} \(e\) is an orbit in the involution \(\rev\).
We refer to one dart in \(e\)'s orbit as the \EMPH{canonical dart} of \(e\) and denote it by
\(\dartof{e}\).
In addition to \(\rev\), we have a permutation \(\permutation : \dartsof{E} \to \dartsof{E}\).
Each orbit of \(\permutation\) gives the counterclockwise cyclic ordering of darts ``directed into''
a vertex \(v\).
We refer to \(v\) as the \EMPH{head} of the darts in \(v\)'s orbit.
Vertex \(v\) is the \EMPH{tail} of these darts' reversals.
Orbits of the permutation \(\rev \circ \permutation\) give the \emph{clockwise} ordering of darts
around each face of the embedding.
We use the notation \(G = (V, E, F)\) to denote a surface embedded graph \(G\) with vertex
set \(V\), edge set \(E\), and face set \(F\).
From here on, we refer to such triples simply as \EMPH{graphs}.
In this setting, we can actually define the \EMPH{genus} \(g\) of \(G\) to be the \emph{solution} to
\(|V| - |E| + |F| = 2 - 2g\).

Given a graph \(G = (V, E, F)\), we define the \EMPH{dual graph} \(G^* = (F, E, V)\).
The given graph \(G\) is sometimes called the \EMPH{primal} graph.
Graph $G^*$ contains a vertex for every face of $G$, an edge for every edge of $G$, and a vertex for
every face of $G$.
Two dual vertices are connected by a dual edge if and only if the corresponding primal faces are separated by the corresponding primal edge.
In terms of combinatorial embeddings, the vertices of the dual graph are the orbits of the
permutation \(\rev \circ \permutation\);
moreover, the orbits of \(\rev \circ \permutation\) define the cyclic order of darts directed into
each dual vertex.
The drawing of dart \(d\) in the dual graph goes left to right across the drawing of \(d\) in the
primal graph.

\begin{figure}[t]
\centering
\begin{longversion}
\includegraphics[scale=0.3]{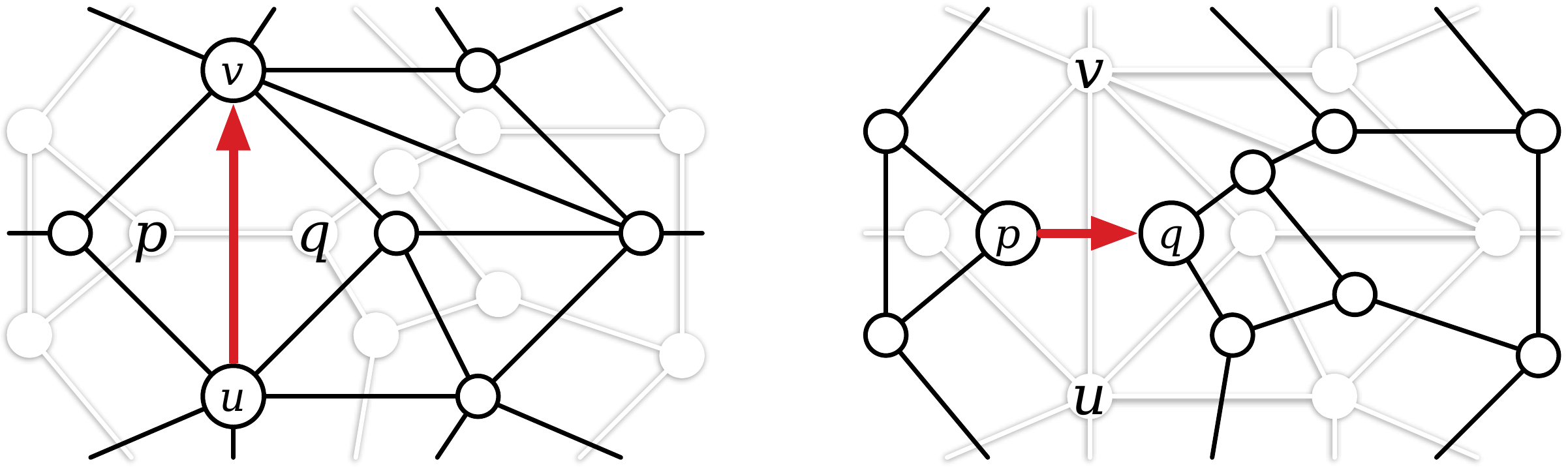}
\end{longversion}
\begin{shortversion}
\includegraphics[width=\columnwidth]{Fig/primal-dual}
\end{shortversion}
\caption{A dart \(u \arcto v\) in the primal graph $G$ and the corresponding dart  \(p \fencewith q\) in the dual graph $G^*$.}
\label{fig:dual}
\vspace{-5mm}
\end{figure}

For notational convenience, we will not distinguish between primal faces and dual vertices, primal
and dual darts/edges, or dual vertices and primal faces.
However, we will generally use the variables \(u\), \(v\), \(w\), \(x\), and \(y\) to denote primal vertices/dual
faces, and the variables \(o\), \(p\), \(q\), and \(r\) to denote dual vertices/primal faces.
We let \(u \arcto v\) denote a dart with tail \(u\) and head \(v\) in the primal graph, and let \(p
\fencewith q\) denote a dart with tail \(p\) and head \(q\) in the dual graph.
Finally, \(uv\) and \(p \wallwith q\) denote edges between vertices \(u\) and \(v\) or between dual
vertices \(p\) and \(q\), respectively.
See Figure~\ref{fig:dual}.

\paragraph*{Flows, homology, and final definitions.}
Flows are naturally defined either as \emph{non-negative} functions on the darts (without loss of
generality equal to zero on at least one dart of each edge) or as \emph{anti\-symmetric} functions
on the darts (where the values on the two darts of each edge sum to zero).
It will prove convenient to use the non-negative formulation to describe flows in the primal graph $G$ and the antisymmetric formulation to describe flows in the dual graph $G^*$.%
\footnote{This apparent asymmetry is actually a consequence of LP duality.
If we formulate minimum-cost flows in $G$ as a linear program using one formulation, the dual LP describes minimum-cost flows in $G^*$ in the other formulation!}
For convenience in our proofs, our formal definitions will require non-negativity only when
determining \emph{feasibility} of flows.

A (primal) \EMPH{flow} \(\flow : \dartsof{E} \to \R\) is an assignment of real values to the darts
of \(G\).
The \EMPH{imbalance} \(\imbalance \flow : V \to \R\) of flow \(\flow\) is the net flow going into
each vertex.
Formally, \(\imbalance \flow(v) = \sum_{u \arcto v} \flow(u \arcto v) - \sum_{v \arcto w} \flow(v
\arcto w)\).
Flow \(\flow\) is a \EMPH{circulation} if \(\imbalance(v) = 0\) for all \(v \in V\).
A \EMPH{potential function} \(\potential : F \to \R\) is an assignment of real values to
\emph{faces} of \(G\).
We say flow \(\flow\) is a \EMPH{boundary flow} of potential function \(\potential\) if for every
dart \(u \arcto v = q \fencewith p\), we have \(\flow(u \arcto v) - \flow(v \arcto u) =
\potential(p) - \potential(q)\).
In other words, high potentials to the \emph{right} of darts encourage high flow values while high
potentials to the left encourage low flow values.
All boundary flows are circulations.
Those familiar with concepts from algebraic topology may recognize the similarity between flows,
imbalances, and potentials functions with \(1\)-chains, boundaries of \(1\)-chains, and
\(2\)-chains, respectively.%
\footnote{This similarity is somewhat more natural with the antisymmetric formulation of flows.}
Two flows \(\flow_1\) and \(\flow_2\) are \EMPH{homologous} if their componentwise difference is the
boundary of some potential function.
Similar to homotopy, homology defines an equivalence relation over any set of flows with identical
vertex imbalances that is isomorphic with \(\R^{2g}\).

A \EMPH{dual flow} \(\coflow : \dartsof{E} \to \R\) assigns real values to the darts of the
\emph{dual} graph \(G^*\) such that \({\coflow(d) = -\coflow(\rev(d))}\) for every dart \(d\).  Equivalently, we consider a dual flow to be a function on the \emph{edges} of~$G^*$ by defining $\coflow(e) = \coflow(\dartof{e})$.
The \EMPH{dual imbalance} \(\coimbalance \coflow : F \to \R\) of dual flow \(\coflow\) is the total
dual flow going clockwise around each primal face, or equivalently, into each dual vertex.
Formally, \(\coimbalance \coflow(p) = \sum_{q \fencewith p} \coflow(q \fencewith p)\).
Let \(F' \subseteq F\) be any subset of faces.
Somewhat abusing notation, we let \(\cwneighbors(F')\) denote the \EMPH{clockwise neighborhood} of
\(F'\) so that \(\cwneighbors(F') = \set{p \fencewith q : p \notin F', q \in F'}\).
Thus, darts of \(\cwneighbors(F')\) are directed clockwise around the boundary of \(F’\) in the primal graph $G$ and enter \(F'\) in the dual graph $G^*$.

Let \(\cost : \dartsof{E} \to \R\) be a \EMPH{dart cost function}, let \(\capacity : \dartsof{E} \to
\R^+\) be a \EMPH{dart capacity function}, and let \(\demand : V \to \R\) be a \EMPH{vertex demand
function}.
The \EMPH{cost} of a flow \(\flow\) is \(\cost(\flow) = \sum_{d \in \dartsof{E}} \flow(d) \cdot
\cost(d)\).
A flow \(\flow\) is feasible with respect to \(\capacity\) and \(\demand\) if for all darts \(d \in
\dartsof{E}\) we have \(0 \leq \flow(d) \leq \capacity(d)\) and for all vertices \(v \in V\) we have
\(\imbalance \flow(v) = \demand(v)\).
A \EMPH{minimum-cost flow} with respect to \(\cost\), \(\capacity\), and \(\demand\) is a feasible
flow of minimum cost (if it exists).

\begin{longversion}
A (directed) \EMPH{path} \(\walk\) in \(G\) is a sequence of darts \(\seq{v_0 \arcto v_1, v_1 \arcto
v_2, \dots, v_{k-1} \arcto v_k}\) where consecutive darts share vertices.
\end{longversion}
\begin{shortversion}
A (directed) \EMPH{path} \(\walk\) in \(G\) is a sequence of darts \(\langle v_0 \arcto v_1, v_1
\arcto v_2,\)\linebreak
\(\dots, v_{k-1} \arcto v_k\rangle\) where consecutive darts share vertices.
\end{shortversion}
We often abuse terminology and identify a path with its drawing in \(G\)'s embedding.
A path is \EMPH{simple} if it does not repeat any vertices, except possibly its first and last
vertex.
The \EMPH{concatenation} of paths \(\walk_1\) and \(\walk_2\) is denoted \(\walk_1 \concat \walk_2\).
Path \(\walk\) is a \EMPH{cycle} if \(v_0 = v_k\).
Abusing notation, we may treat \(\walk\) as a flow where \(\walk(d)\) is equal to the number of
times dart \(d\) appears in \(\walk\).
Given a vertex \(s \in V\), let \(\capacity : \dartsof{E} \to \R^+\) be a capacity function where
\(\capacity(d) = \infty\) for all \(d \in \dartsof{E}\), and let \(\demand : V \to \R\) be a demand
function where \(\demand(v) = 1\) for all \(v \neq s\) and \(\demand(s) = 1 - |V|\).
Given dart costs \(\cost : \dartsof{E} \to \R\) where no cycle has negative cost, the \EMPH{shortest
paths} from \(s\) to all other vertices can be defined as the set of paths starting at \(s\) and
composing the minimum-cost flow with respect to \(\capacity\) and \(\demand\).
Let \(\dist_c(s, t)\) denote the distance from \(s\) to \(t\) according to costs \(\cost\).
Let \(\shortestpath(s,t)\) denote the shortest path from \(s\) to \(t\).

A \EMPH{spanning tree} \(\tree\) of \(G\) is a subset of edges that form a tree containing every
vertex.
We may \EMPH{root} \(\tree\) at a vertex \(s\) by considering the darts of \(\tree\) oriented
\emph{away} from \(s\).
Given a root \(s\) and vertex \(v \neq s\), the \EMPH{predecessor} of \(v\) in \(\tree\) is the
unique dart \(u \arcto v\) that lies on the path from \(s\) to \(v\) in \(\tree\).
Given an edge \(e \notin T\), the \EMPH{fundamental cycle} of \(e\) with \(T\), denoted
\(\fundcycle(T, e)\) is the unique simple cycle of edges in \(T + e\).
If \(e \in T\), then \(\fundcycle(T, e)\) is empty.
Given a dart \(d\) of an edge \(e\), its fundamental cycle \(\fundcycle(T, d)\) with \(T\) is the
orientation of \(\fundcycle(T, e)\) that contains \(d\).
A \EMPH{spanning cotree} \(\cotree\) of \(G\) is a subset of edges that form a spanning tree in the
dual graph.
We may root \(\cotree\) at a dual vertex \(r\) by now considering the darts of \(\cotree\) oriented
\emph{toward} \(r\).
The \EMPH{successor} of dual vertex \(p \neq r\) is the dart \(p \fencewith q\) that lies on the
dual path from \(p\) to \(r\) in \(\cotree\).
Dual vertex \(o\) is a \EMPH{descendant} of \(p\) in \(\cotree\) if \(p\) is on the dual path from
\(o\) to \(r\) in \(\cotree\).
\begin{longversion}
A \EMPH{tree-cotree decomposition}~\cite{e-dgteg-03} of \(G\) is a partition of \(E\) into \(3\)
disjoint edge subsets \(\tree, L, \cotree\), where \(\tree\) is a spanning tree of \(G\),
\(\cotree\) is a spanning cotree, and \(L\) is a set of \(2g\) leftover edges.
\end{longversion}

Given a vector \(a\), let \(a_i\) denote the \(i\)th component of \(a\).
\begin{longversion}
Let \((\tree, L, \cotree)\) be an arbitrary tree-cotree decomposition of \(G\).
Let \(\cycles = \seq{\cycle_1, \cycle_2, \dots, \cycle_{2g}}\) be some ordering of the \(2g\) dual
fundamental cycles of edges in \(L\) with \(\cotree\), and orient each \(\cycle_i\) in an arbitrary
direction.
The \EMPH{homology signature} \([e]\) of an edge \(e \in E\) with respect to \(\cycles\) is a
\(2g\)-dimensional integer vector where \([e]_i = 1\) if \(\dartof{e} \in \cycle_i\),
\([e]_i = -1\) if \(\rev(\dartof{e}) \in \cycle_i\), and \([e] = 0\) otherwise.
We can compute homology signatures in~\(O(gn)\) time by computing~\(\tree\) and~\(\cotree\)
in~\(O(n)\) time and then taking an~\(O(n)\)-time walk around each of the~\(O(g)\)
cycles~\(\cycle_i\), updating the \(i\)th component of each edge's signature vector along the walk.
\end{longversion}
\begin{shortversion}
In \(O(gn)\) time, we can compute a \EMPH{homology signature} \([e]\) for each edge \(e \in E\).
\end{shortversion}
Given a dart \(d\) of edge \(e\), we define the homology signature of \(d\) so that \([d] = [e]\) if
\(d = \dartof{e}\) and \([d] = -[e]\) otherwise.
Homology signatures given an implicit representation of a \emph{cohomology basis} in \(G\).
See Erickson and Whittlesey~\cite{ew-gohhg-05} and subsequest
papers~\cite{bccdw-ashba-12,en-mcsnc-11,cfn-csmcg-14,bcfn-mchbs-17}.
The homology signature of a flow \(\flow\) is \([\flow] = \sum_{d \in \dartsof{E}} \flow(d) \cdot
[d]\).
Finally, we have the following lemma, easily derived by modifying known results for homology
signatures.
\begin{lemma}
  \label{lem:signatures}
  Let \(\flow_1\) and \(\flow_2\) be two flows.
  Flow \(\flow_i\) is the boundary of some potential function if and only if \([\flow_i] = 0\).
  Further, \(\flow_1\) and \(\flow_2\) are homologous if and only if \([\flow_1] = [\flow_2]\).
\end{lemma}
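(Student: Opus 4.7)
My plan is to prove both statements together by first reducing the second to the first by linearity, and then proving the first directly using the tree--cotree decomposition already fixed when the signature \([\cdot]\) was defined. For the reduction, \([\flow_1]-[\flow_2]=[\flow_1-\flow_2]\), so \(\flow_1\) and \(\flow_2\) are homologous (meaning \(\flow_1-\flow_2\) is a boundary) if and only if \([\flow_1-\flow_2]=0\), if and only if \([\flow_1]=[\flow_2]\). It therefore suffices to prove the first statement. For the forward direction, I would rearrange the defining sum by pairing each dart of \(\cycle_i\) with its reversal to obtain \([\flow]_i = \sum_{d\in\cycle_i}(\flow(d)-\flow(\rev(d)))\); if \(\flow = \partial\potential\), each summand equals \(\potential(\head(d))-\potential(\tail(d))\), and the total telescopes to zero around the closed dual cycle \(\cycle_i\).

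The main work is the reverse direction: given \([\flow]=0\) (and hence that \(\flow\) is a circulation, since every boundary flow is), construct \(\potential\) with \(\flow = \partial\potential\). I would define \(\potential\) by integration along \(\cotree\): root \(\cotree\) at an arbitrary face \(r\), set \(\potential(r)=0\), and for every other face \(p\) set \(\potential(p) = \sum_{j=1}^{k}(\flow(d_j)-\flow(\rev(d_j)))\), where \(d_1,\dots,d_k\) is the unique sequence of dual darts from \(r\) to \(p\) along \(\cotree\). By construction, \(\flow(d)-\flow(\rev(d)) = \potential(\head(d))-\potential(\tail(d))\) on every dart of every \(\cotree\)-edge. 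The remaining verification splits in two. For a leftover edge \(\ell_i\), telescoping around \(\cycle_i\) gives \(\sum_{d\in\cycle_i}(\potential(\head(d))-\potential(\tail(d)))=0\); combined with the already matched values on \(\cycle_i\cap\cotree\) and the hypothesis \([\flow]_i=0\), this forces the required identity on the dart of \(\ell_i\). For a primal tree edge \(e\in\tree\), consider the fundamental dual cycle of \(e\) with \(\cotree\): it uses only edges of \(\tree\cup\cotree\) and so is null-homologous on \(\Sigma\), since the leftover edges \(\leftovers\) form a basis of \(H_1(\Sigma)\). A null-homologous dual cycle bounds a set of primal vertices, so \(\sum_d(\flow(d)-\flow(\rev(d)))\) around it equals the net flow of \(\flow\) across the corresponding primal cut, which vanishes because \(\flow\) is a circulation. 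Combined with telescoping of \(\potential\) around the same cycle, this yields the desired identity on \(e\).

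The main obstacle is the tree-edge case, since the signature \([\flow]\) carries no direct information about primal tree edges. The circulation hypothesis on \(\flow\) must supply the missing constraint, and it does so precisely because every dual cycle avoiding the leftover edges is null-homologous on \(\Sigma\) --- the same topological fact that makes \(\leftovers\) a basis of \(H_1(\Sigma)\), which the paper already uses implicitly.
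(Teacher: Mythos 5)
The paper never proves Lemma~\ref{lem:signatures}; it cites it to Erickson and Whittlesey and subsequent work, so there is no in-paper proof to compare against. Your reduction of the second statement to the first by linearity, your telescoping argument for the forward direction, and your verification of the boundary relation on $\cotree$-edges and $L$-edges are all correct. But the argument for the $T$-edge case has a genuine gap.

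First, a smaller issue: the parenthetical ``given $[\flow]=0$ (and hence that $\flow$ is a circulation, since every boundary flow is)'' is backwards. Having $[\flow]=0$ does \emph{not} imply $\flow$ is a circulation --- a single unit of flow on one dart of a tree edge has zero signature but nonzero imbalance. The lemma must implicitly be restricted to circulations (the only flows that can possibly be boundaries), and your proof needs the circulation hypothesis explicitly, not as a consequence.

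The serious gap is the claim that $\fundcycle(\cotree,e)$ is null-homologous in $\Sigma$ for every $e\in\tree$ ``since the leftover edges $\leftovers$ form a basis of $H_1$.'' This inference is false. The fact that the dual fundamental cycles $\fundcycle(\cotree,\ell)$ for $\ell\in\leftovers$ form a basis of $H_1(\Sigma)$ does not imply that a dual cycle supported on $\tree\cup\cotree$ is null-homologous; a cycle's homology class is not determined by which edges it avoids. Here is a concrete counterexample on the torus. Take $V=\Z_2\times\Z_2$ with the usual $2\times 2$ grid: horizontal edges $h_{ij}$ and vertical edges $v_{ij}$, faces $f_{ij}$. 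Choose $\tree=\{h_{00},v_{00},h_{10}\}$, $\cotree=\{h_{01},v_{01},v_{11}\}$, $\leftovers=\{h_{11},v_{10}\}$. Then $v_{00}\in\tree$, and the dual edge $v_{00}$ joins $f_{00}$ to $f_{01}$, which are adjacent in $\cotree$ via $v_{01}$, so $\fundcycle(\cotree,v_{00})=\{v_{00},v_{01}\}$. This dual cycle is \emph{not} null-homologous: there is no $A\subseteq V$ with $\delta(A)=\{v_{00},v_{01}\}$, because both $v_{00}$ and $v_{10}$ join $(0,0)$ to $(1,0)$, so $v_{00}\in\delta(A)$ forces $v_{10}\in\delta(A)$ as well. (Geometrically, the curve $f_{00}\to f_{01}\to f_{00}$ wraps once around the torus.) So the circulation hypothesis alone does not close the tree-edge case.

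The repair is to use \emph{both} hypotheses, which is in fact what the dimension count of the lemma demands ($|\tree|$ circulation constraints plus $2g$ signature constraints against $|\tree|+|\leftovers|$ remaining edge relations). Let $A$ be one component of $\tree-e$, and let $\delta(A)$ be the corresponding dual cycle, which is null-homologous. Both $\delta(A)$ and $\fundcycle(\cotree,e)$ contain $e$ and no other $\tree$-edge, so their difference is a cycle supported on $\cotree\cup\leftovers$ and hence equals $\sum_i c_i\,\cycle_i$ where $c_i$ is the coefficient of $\ell_i$ in $\delta(A)$. Integrating $g(d)=\flow(d)-\flow(\rev(d))$ around $\fundcycle(\cotree,e)=\delta(A)-\sum_i c_i\cycle_i$ gives zero: the $\delta(A)$ part is a sum of imbalances over $A$, which vanishes by the circulation hypothesis, and each $\sum_{d\in\cycle_i} g(d)=[\flow]_i=0$ by hypothesis. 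Combined with the telescoping of $\potential$, this yields the boundary relation on $e$. Alternatively, one can sidestep the constructive argument altogether: the forward direction shows $[\cdot]$ vanishes on boundaries, and evaluating $[\cdot]$ on the primal fundamental cycles $\fundcycle(\tree,\ell_j)$ shows it maps circulations onto $\R^{2g}$, so by dimension counting its kernel on circulations is exactly the boundary space.
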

In particular, Lemma~\ref{lem:signatures} implies that classes of flows with equivalent homology
signatures do not depend upon the particular choice of basis used to define the signatures.

\section{Holiest Perturbation}
\label{sec:scheme}

Let~\(G = (V,E,F)\) be a graph of size \(n\) and genus~\(g\).
Our lexicographic perturbation scheme relies on the properties of certain dual flows we refer to as
drainages.
Given a designated face \(r \in G\), we define a \EMPH{drainage} as a dual flow~\(\coflow\) where
\(\coimbalance{\coflow}(q) < 0\) for all \(q \in F \setminus \{r\}\).
The definition of a drainage immediately implies \(\coimbalance{\coflow}(r) = -\sum_{q \in F
\setminus \{r\}} \coimbalance{\coflow}(q) \).
Because \(r\) is the only face with positive dual imbalance with respect to \(\coflow\), we refer to
\(r\) as the \EMPH{sink} of the drainage \(\coflow\).

We now describe our perturbation scheme.
Let \(c : \dartsof{E} \to \R\) be a dart cost function.
We compute a set of homology signatures for the darts in~\(O(gn)\) time as described in
Section~\ref{sec:prelims}.
We then compute a drainage \(\coflow\) of \(G^*\) in \(O(n)\) time.
While any drainage will do, we describe one here that is easily computed.
We begin by computing an arbitrary spanning tree~\(\cotree\) of \(G^*\).
Let \(r \in F\) be an arbitrary face.
We define \(\coflow\) as if each face \(p \in F \setminus \{r\}\) is sending one unit of dual flow
along \(\cotree\) to \(r\).

Formally, we set the dual flow for each dart \(p \fencewith q \in \dartsof{E}\) as follows.
We root cotree \(\cotree\) at \(r\).
If \((p \fencewith q)\)'s edge is not in \(\cotree\), then \(\coflow(p \fencewith q) = 0\).
Otherwise, if \(p \fencewith q\) is the successor of \(p\) in \(\cotree\), then
\(\coflow(p \fencewith q)\) is the number of descendants of \(p\) (including \(p\) itself) in \(\cotree\);
otherwise, \(\coflow(p \fencewith q)\) is the negation of the number of descendants of \(q\) in \(\cotree\).
One can easily verify that \(\coimbalance{\coflow}(q) = -1\) for all \(q \in F \setminus \{r\}\) and
\(\coimbalance{\coflow}(r) = \numfaces - 1\).

We now redefine the costs of darts in \(G\).
Intuitively, we add a sequence of progressively smaller infinitesimal values to the cost of each
dart based partially on the homology signatures of their edges and the dual flow they carry from the
drainage \(\coflow\).
More concretely, we define a new dart cost function \(\cost' : \dartsof{E} \to \R \times \N^{2g+2}\) as follows.
Let \(\tuplecat\) denote the concatenation of two vectors, and define
\[
  \cost’(d) := (\cost(d), 1) \tuplecat \homsig{d} \tuplecat (\coflow(d)).
\]

The definition for the cost of a flow \(\flow\) can be modified easily to work with our new cost
function: \(\cost'(\flow)\) is the vector \(\sum_{d \in \dartsof{E}} \flow(d) \cdot \cost’(d)\).
Given a cost vector \(\cost'\) for either a single dart or a whole flow, we refer to the components of dart and flow costs determined by homology signatures as the
\EMPH{homology parts} of \(\cost'\), denoted \(\homsig{\cost'}\).
The last component is referred to as the \EMPH{face part} and denoted \(\coflow(\cost')\).
Comparisons between dart and flow costs are performed lexicographically.
As a consequence, any minimum-cost flow with respect to \(\cost'\) is also a minimum-cost flow with
respect to the original scalar cost function \(\cost\).
Intuitively, minimizing the cost of a flow according to \(\cost'\) means first minimizing the
original cost according to \(\cost\), then minimizing the sum of the darts' flow values, then
lexicographically minimizing the homology class of the flow, and finally choosing the leftmost flow
subject to all other conditions.
In particular, when \(g = 0\), one is computing a leftmost minimum-cost flow (after also minimizing
the sum of darts' flows).
The following lemma is immediate.
\begin{lemma}
  \label{lem:cost_components}
  For any flow \(\flow\) and \(j \in \{1,\ldots,2g\}\), we have \(\homsig{\flow}_j =
  \cost'_{j+2}(\flow)\).
  In addition,
\begin{longversion}
  \[
  \cost'_{2g+3}(\flow) = \sum_{d \in \dartsof{E}} \flow(d) \cdot \coflow(d).
  \]
\end{longversion}
\begin{shortversion}
  \(
  \cost'_{2g+3}(\flow) = \sum_{d \in \dartsof{E}} \flow(d) \cdot \coflow(d)
  \).
\end{shortversion}
\end{lemma}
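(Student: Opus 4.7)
The plan is a direct unpacking of the definitions; nothing interesting is really at stake beyond confirming that the indexing works out.

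First I would recall the structure of the vector $\cost'(d)$. By definition, $\cost'(d) = (\cost(d), 1) \tuplecat \homsig{d} \tuplecat (\coflow(d))$, so it is a $(2g+3)$-dimensional vector whose first two components are $\cost(d)$ and $1$, whose next $2g$ components (indexed $3$ through $2g+2$) are exactly the coordinates of the homology signature $\homsig{d}$, and whose final component (indexed $2g+3$) is $\coflow(d)$. Consequently, for $j \in \{1,\ldots,2g\}$, we have $\cost'_{j+2}(d) = \homsig{d}_j$, and $\cost'_{2g+3}(d) = \coflow(d)$.

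Next I would invoke the definition of the cost of a flow: $\cost'(\flow) = \sum_{d \in \dartsof{E}} \flow(d) \cdot \cost'(d)$, interpreted componentwise. Fixing any index $i$, this gives $\cost'_i(\flow) = \sum_d \flow(d) \cdot \cost'_i(d)$. Specializing to $i = j+2$ for $j \in \{1,\ldots,2g\}$ and using the previous paragraph yields $\cost'_{j+2}(\flow) = \sum_d \flow(d) \cdot \homsig{d}_j$. Similarly, $\cost'_{2g+3}(\flow) = \sum_d \flow(d) \cdot \coflow(d)$, which is the second claim verbatim.

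Finally, I would compare the expression $\sum_d \flow(d) \cdot \homsig{d}_j$ to the definition of the homology signature of a flow introduced in Section~\ref{sec:prelims}, namely $\homsig{\flow} = \sum_d \flow(d) \cdot \homsig{d}$. Taking the $j$-th coordinate on both sides gives $\homsig{\flow}_j = \sum_d \flow(d) \cdot \homsig{d}_j$, and combining with the previous paragraph yields $\cost'_{j+2}(\flow) = \homsig{\flow}_j$, as required.

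The main ``obstacle'' is simply being careful with indexing conventions (the $+2$ shift coming from the two prefix entries $(\cost(d),1)$ in the tuple concatenation), and confirming that componentwise summation of vector-valued dart costs commutes with the sums defining $\homsig{\flow}$. There is no genuine technical content beyond bookkeeping, so the proof should be short.
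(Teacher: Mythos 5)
Your proof is correct and is precisely the definition-unpacking the paper has in mind when it calls this lemma ``immediate'' (the paper offers no explicit proof). The indexing arithmetic and the interchange of componentwise sums all check out, so there is nothing to add.
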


Computing the perturbations takes \(O(gn)\) time total.
The time for every addition, multiplication, and comparison is now \(O(g)\) instead of \(O(1)\).
For planar graphs in particular, this scheme requires only linear preprocesing time, and
combinatorial algorithms relying on the new costs do not have higher asymptotic running times.
Note that the cost of each dart \(d\) is strictly larger than \((\cost(d), 0, \ldots, 0)\).
No negative-cost directed cycles are created, even when some directed cycles had length~\(0\)
originally, meaning shortest paths are still well-defined.
In fact, the perturbation scheme does not create any new negative-cost darts, so combinatorial
shortest path algorithms relying on non-negative dart costs still function correctly.
As stated, however, these algorithms and those for negative costs do slow down by a factor of
\(g\).

\subsection{Analysis}
\label{subsec:scheme-analysis}
We now prove our perturbation scheme guarantees uniqueness of minimum-cost flows and shortest
paths as promised.
We begin by discussing the former as the latter follows as an easy consequence.
The key observation behind our proof is that drainages encode the total imbalance of vertices
lying on one side of a dual cut.
In turn, we use this observation to show that every non-trivial circulation has a part of non-zero
cost after using our perturbation scheme.
The former observation is a slight generalization of one by Patel~\cite[Lemma 2.4]{p-deeoc-13}
who in turn generalized a result for planar graphs by Park and Phillips~\cite{pp-fmcpg-93}.
\begin{longversion}
We could use Patel's result directly for the tree-based drainage described above.
However, we are able to give a short proof of the more general lemma below.
\end{longversion}
\begin{lemma}
  \label{lem:drainages}
  Let \(\coflow\) be a drainage with sink \(r\), and let \(F' \subseteq F\).
  We have
  \[
  \sum_{d \in \cwneighbors(F')} \coflow(d) = \sum_{q \in F'}
  \coimbalance{\coflow}(q) = -\sum_{q \in F \setminus F'} \coimbalance{\coflow}(q).
  \]
  In particular, for non-empty \(F' \neq F\), we have \(\sum_{d \in \cwneighbors(F')} \coflow(d)\)
  is positive if \(r \in F'\) and negative otherwise.
\end{lemma}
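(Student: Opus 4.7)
The plan is to prove the two equalities by a standard double-counting argument using antisymmetry of dual flows, and then derive the sign claim from the defining property of drainages.

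First, I would unfold the definition of $\coimbalance{\coflow}(q)$ and expand
\[
  \sum_{q \in F'} \coimbalance{\coflow}(q) = \sum_{q \in F'} \sum_{p \fencewith q} \coflow(p \fencewith q).
\]
Every dart $p \fencewith q$ with both endpoints in $F'$ is counted together with its reversal $q \fencewith p$: the two terms sum to $\coflow(p \fencewith q) + \coflow(q \fencewith p) = 0$ by antisymmetry of dual flows. What survives is exactly the contribution from darts whose head is in $F'$ and whose tail is not, i.e.\ the darts of $\cwneighbors(F')$. This gives the first equality.

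For the second equality, I would apply the same cancellation argument to $F$ itself: because every dart and its reversal together contribute zero, $\sum_{q \in F} \coimbalance{\coflow}(q) = 0$, from which $\sum_{q \in F'} \coimbalance{\coflow}(q) = -\sum_{q \in F \setminus F'} \coimbalance{\coflow}(q)$ follows immediately. (Alternatively, one can view this as the special case where the ``cut'' is empty.)

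Finally, for the sign statement, I would invoke the drainage hypothesis that $\coimbalance{\coflow}(q) < 0$ for every non-sink face $q$. If $r \notin F'$, then every face in the non-empty set $F'$ has strictly negative dual imbalance, so $\sum_{q \in F'} \coimbalance{\coflow}(q) < 0$. If $r \in F'$, then $F \setminus F'$ is non-empty and contains only non-sink faces, so $\sum_{q \in F \setminus F'} \coimbalance{\coflow}(q) < 0$, and the second equality flips the sign to make the quantity positive. The only subtlety is being careful about the non-emptiness of $F'$ and $F \setminus F'$ in these two cases, but both are given in the hypothesis. There is no real obstacle here; the proof is essentially a bookkeeping exercise in antisymmetry plus a one-line application of the drainage definition.
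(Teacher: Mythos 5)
Your proof is correct and matches the paper's argument: both rely on the same double-counting cancellation via antisymmetry of dual flows to equate the cut-dart sum with the imbalance sum, and both deduce the sign claim directly from the drainage definition. The only cosmetic difference is that you expand $\sum_{q\in F'}\coimbalance{\coflow}(q)$ and cancel interior darts, while the paper starts from $\sum_{d\in\cwneighbors(F')}\coflow(d)$ and adds a telescoping zero term for interior edges---the same computation read in opposite directions.
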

\begin{longversion}
\begin{proof}
  We have
  \begin{align*}
    \sum_{d \in \cwneighbors(F')} \coflow(d)
    &= \sum_{p \fencewith q \in \dartof{E} : p \notin F', q \in F'}\coflow(p \fencewith q) + \sum_{p \wallwith q \in E :
    p,q \in F'}[\coflow(p \wallwith q) - \coflow(p \wallwith q)]\\
    &= \sum_{p \fencewith q \in \dartof{E} : p \notin F', q \in F'}\coflow(p \fencewith q) + \sum_{p \wallwith q \in E :
    p,q \in F'}[\coflow(p \fencewith q) + \coflow(q \fencewith p)]\\
    &= \sum_{p \fencewith q \in \dartsof{E} : q \in F'} \coflow(p \fencewith q)\\
    &= \sum_{q \in F'} \sum_{p \fencewith q \in \dartsof{E}} \coflow(p \fencewith q)\\
    &= \sum_{q \in F'} \coimbalance{\coflow}(q).
  \end{align*}
  The final lemma statement follows easily from the definition of drainages.
\end{proof}
\end{longversion}

Fix a cost function \(\cost : \dartsof{E} \to \R\), and let \(\cost'\) be the perturbation of
\(\cost\) defined above.
Also fix a capacity function \(\capacity : \dartsof{E} \to \R^+\) and a demand function \(\demand :
V \to \R\).
We give the following lemma, which immediately implies the uniqueness of minimum-cost flows.
\begin{lemma}
  \label{lem:something_cheaper}
  Let \(\flow_1\) and \(\flow_2\) be distinct feasible flows with respect to \(\capacity\) and
  \(\demand\).
  There exists a feasible flow~\(\flow\) such that at least one of \(\cost'(\flow) <
  \cost'(\flow_1)\) or \(\cost'(\flow) < \cost'(\flow_2)\) is true.
\end{lemma}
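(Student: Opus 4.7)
The plan is to split into two cases depending on whether $\cost'(\flow_1) = \cost'(\flow_2)$. If the two perturbed costs differ, then the lex-smaller of $\flow_1$ and $\flow_2$ is itself a witness for $\flow$. So the real work is in the harder case $\cost'(\flow_1) = \cost'(\flow_2)$, where I must construct a third feasible flow strictly cheaper than both.

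In this case set $h := \flow_1 - \flow_2$. Since both flows share the demand function $\demand$, $h$ is a circulation; it is nonzero because $\flow_1 \neq \flow_2$, and $\cost'(h) = 0$ by the cost equality. Decompose $h$ in the residual graph of $\flow_2$ as $h = \sum_i \alpha_i \gamma_i$ with $\alpha_i > 0$, refining the decomposition (splitting at repeated vertices) until each $\gamma_i$ is a vertex-simple directed cycle; each $\gamma_i$ can then be pushed by any amount in $(0, \alpha_i]$ starting from $\flow_2$ while preserving feasibility. The heart of the argument is the claim that $\cost'(\gamma_i) \neq 0$ for every such $\gamma_i$.

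To prove the claim I split on the homology class. If $[\gamma_i] \neq 0$, Lemma~\ref{lem:cost_components} immediately gives some homology coordinate $\cost'_{j+2}(\gamma_i) = [\gamma_i]_j \neq 0$. Otherwise $[\gamma_i] = 0$, and by Lemma~\ref{lem:signatures} the cycle is the boundary of an integer face potential $\potential$. For a vertex-simple cycle on an orientable surface, $\potential$ may be taken as the $\{0,1\}$-indicator of a subset $F' \subseteq F$ with $\emptyset \neq F' \neq F$; a short computation using antisymmetry of $\coflow$ rewrites the face coordinate as $\sum_d \gamma_i(d)\,\coflow(d) = \sum_{q \in F'} \coimbalance{\coflow}(q)$, which is nonzero by Lemma~\ref{lem:drainages}. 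The one degenerate case is a length-two digon traversing both darts of a single edge: its homology and face parts both vanish, but the perturbation's second coordinate contributes $\sum_d \gamma_i(d) = 2 > 0$, so $\cost'(\gamma_i)$ is still nonzero.

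With the claim in hand, the conclusion is short. Lexicographic order on $\R^{2g+3}$ is a total order, and no positive combination of lex-nonzero vectors summing to $\mathbf{0}$ can consist entirely of lex-positive (or entirely lex-negative) summands. Hence some $\gamma_j$ in the decomposition satisfies $\cost'(\gamma_j) <_{\mathrm{lex}} 0$, and $\flow := \flow_2 + \epsilon \gamma_j$ is feasible for any $\epsilon \in (0, \alpha_j]$ with $\cost'(\flow) = \cost'(\flow_2) + \epsilon\,\cost'(\gamma_j) < \cost'(\flow_2) = \cost'(\flow_1)$, proving the lemma. The main obstacle I expect is the technical bookkeeping around the residual decomposition: simple residual cycles may correspond to non-simple walks in $G$ or to digons, so the degenerate cases must be treated separately to make the homology-or-face-bounding dichotomy apply uniformly — the ``$1$'' injected into the second coordinate of $\cost'$ is precisely what handles the digon case that neither the homology signature nor the drainage coordinate would catch.
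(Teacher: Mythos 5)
Your proof is correct, but it takes a genuinely different route from the paper. The paper works with the whole difference circulation $\tilde{\flow} = \flow_2 - \flow_1$ at once: it first observes that any scaled-down copy of $\pm\tilde{\flow}$ can be added to one of the two flows while preserving feasibility, then splits on whether $[\tilde{\flow}] \neq 0$, then (when the homology part vanishes) on whether $\tilde{\flow}$ carries the same positive flow on both darts of some edge, and in the remaining case takes the top level set $F_{\overline{\potential}}$ of the potential realizing $\tilde{\flow}$ as a boundary and applies Lemma~\ref{lem:drainages} to the uniform-$\eps$ flow crossing $\partial^-(F_{\overline{\potential}})$. You instead invoke a residual flow decomposition of $\tilde{\flow}$ into \emph{vertex-simple} cycles, which isolates the same three phenomena (nontrivial homology, digon, simple separating cycle) to individual summands; this makes each case cleaner — in particular the drainage argument needs only a $\{0,1\}$ face indicator for a single simple separating curve rather than the level-set construction — at the price of having to set up and justify the decomposition and its feasibility properties, which you do correctly. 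One small imprecision: your digon's second coordinate contributes $\pm 2$, not necessarily $+2$, since a digon in the residual graph may consist of two backward arcs (decreasing flow on both darts of the edge); this does not affect the conclusion that $\cost'(\gamma_i) \neq 0$. Your observation that the ``$+1$'' injected into the second coordinate is precisely what catches the digon case matches the role that same term plays in the paper's subcase with $\tilde{\flow}(\dartof{e}) = \tilde{\flow}(\rev(\dartof{e})) > 0$.
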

We emphasize that our perturbation scheme does not guarantee \emph{all} feasible flows have distinct
costs, and it may be that \(\cost'(\flow_1) = \cost'(\flow_2)\).
However, we would then have \(\flow\) costing strictly less than both \(\flow_1\) and \(\flow_2\),
implying neither \(\flow_1\) nor \(\flow_2\) is a minimum-cost feasible flow.
\begin{proof}
  We will prove existence of a circulation \(\hat{\flow}\) such that \(\flow_i + \hat{\flow}\) is
  feasible for some \(i \in \{1,2\}\) and \(\cost'(\hat{\flow}) < 0\).
  We set \(\flow = \flow_i + \hat{\flow}\), proving the lemma.

  Let \(\tilde{\flow} = \flow_2 - \flow_1\).
  Both \(\flow_2\) and \(\flow_1\) are feasible with respect to demand function \(\demand\), so
  \(\tilde{\flow}\) must be a non-trivial circulation.
  Further, for any dart~\(d\) and scalar \(a\) with \(0 \leq a \leq 1\), we have
  \begin{align*}
    \flow_1(d) + a\tilde{\flow}(d)
    &= \flow_1(d) + a(\flow_2(d) - \flow_1(d))\\
    &\geq \min\{\flow_1(d), \flow_1(d) + \flow_2(d) - \flow_1(d)\}\\
    &\geq 0 \text{ and}\\
    \flow_1(d) + a\tilde{\flow}(d) &= \flow_1(d) + a(\flow_2(d) - \flow_1(d))\\
    &\leq \max\{\flow_1(d), \flow_1(d) + \flow_2(d) - \flow_1(d)\}\\
    &\leq \capacity(d).
  \end{align*}

  In other words, we can add any circulation consisting of scaled down components of \(\tilde{\flow}\)
  to \(\flow_1\) and still have a feasible flow.
  Similarly, we can add any circulation consisting of scaled down components of \(-\tilde{\flow}\)
  to \(\flow_2\) and still have a feasible flow.
  We now consider two cases.
  \paragraph{Case 1: \([\tilde{\flow}] \neq 0\).}
  Let \([\tilde{\flow}]_j\) be non-zero.
  Lemma~\ref{lem:cost_components} implies \(\cost'_{j+2}(\tilde{\flow})\) is also non-zero, further
  implying \(\cost'(\tilde{\flow})\) is itself non-zero.
  If \(\cost'(\tilde{\flow}) < 0\), then let \(\hat{\flow} = \tilde{\flow}\).
  Otherwise, let \(\hat{\flow} = -\tilde{\flow}\).

  \paragraph{Case 2: \([\tilde{\flow}] = 0\).}
  \def\plower{{\underline{\potential}}}
  \def\pupper{{\overline{\potential}}}

  We consider two subcases.

  First, suppose there exists an edge \(e \in E\) such that \(\tilde{\flow}(\dartof{e}) =
  \tilde{\flow}(\rev(\dartof{e})) > 0\).
  Let \(\flow_e : \dartsof{E} \to \R\) be a flow that is everywhere-zero except
  \(\flow_e(\dartof{e}) = \flow_e(\rev(\dartof{e})) = \tilde{\flow}(\dartof{e})\).
  Then, \(\cost'_{2}(\flow_e) = 2\tilde{\flow}(\dartof{e})\), implying \(\cost'(\flow_e)\) is
  non-zero.
  If \(\cost'(\flow_e) < 0\), then let \(\hat{\flow} = \flow_e\).
  Otherwise, let \(\hat{\flow} = -\flow_e\).

  Now, suppose there is no such edge \(e\) as defined above.
  Then, Lemma~\ref{lem:signatures} implies \(\tilde{\flow}\) is a boundary flow for some non-trivial
  potential function~\(\potential\).
  Let \(\plower = \min_{q \in F} \potential(q)\) and \(\pupper = \max_{q \in F} \potential(q)\).
  Because \(\potential\) is non-trivial, at least one of \(\plower\) and \(\pupper\) is non-zero.
  Assume \(\pupper \neq 0\); the other case is similar.
  Let \(F_\pupper = \{q \in F : \potential(q) = \pupper\}\), and let \(\dartsof{E}_\pupper =
  \{\fence{p}{q} \in \dartsof{E} : p \in F \setminus F_\pupper, q \in F_\pupper\}\).
  For each dart \(\fence{p}{q} \in \dartsof{E}_\pupper\), we have \(\tilde{\flow}(\fence{p}{q}) -
  \tilde{\flow}(\fence{q}{p}) > 0\), because \(\potential(q) > \potential(p)\).

  Let \(\eps = \min_{d \in \dartsof{E}_\pupper} (\tilde{\flow}(d) - \tilde{\flow}(\rev(d)))\).
  For each dart \(d \in \dartsof{E}_\pupper\), let \(a_d = \eps / (\tilde{\flow}(d) -
  \tilde{\flow}(\rev(d)))\).
  Note that \(0 < a_d \leq 1\).
  Finally, let \(\flow_\eps : \dartsof{E} \to \R\) be a flow that is everywhere-zero except for each
  dart \(d \in E_\pupper\), we have \(\flow_\eps(d) = a_d \tilde{\flow}(d)\) and
  \(\flow_\eps(\rev(d)) = a_d \tilde{\flow}(\rev(d))\);
  in other words, \(\flow_\eps(d) - \flow_\eps(\rev(d)) = \eps\).

  Let~\(\coflow\) be the drainage used to define \(\cost'\).
  Lemmas~\ref{lem:cost_components} and~\ref{lem:drainages} imply \(\cost'_{2g+3}(\flow_\eps) =
  \eps\sum_{d \in \cwneighbors(F_\pupper)}\).
  Set \(F_\pupper\) is a non-empty strict subset of \(F\), because \(\tilde{\flow}\) is
  non-trivial.
  Therefore, Lemma~\ref{lem:drainages} also implies \(\cost'_{2g+3}(\flow_\eps)\) is
  non-zero, meaning \(\cost'(\flow_\eps)\) is also non-zero.
  If \(\cost'(\flow_\eps) < 0\), then let \(\hat{\flow} = \flow_\eps\).
  Otherwise, let \(\hat{\flow} = -\flow_\eps\).
\end{proof}
\begin{theorem}
  \label{thm:unique_flow}
  Let \(G = (V, E, F)\) be a graph of genus~\(g\), let \(\cost : \dartsof{E} \to \R\) be a
  dart cost function, and let \(\cost' : \dartsof{E} \to \R\) be the output of our lexicographic
  perturbation scheme on \(\cost\).
  Let \(\capacity : \dartsof{E} \to \R^+\) and \(\demand : V \to \R\) be a dart capacity and vertex
  demand function, respectively.
  The minimum-cost feasible flow with respect to \(\cost'\), \(\capacity\), and \(\demand\) is
  unique and is a minimum-cost feasible flow with respect to \(\cost\) as well.
\end{theorem}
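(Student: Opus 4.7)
The plan is to derive the theorem directly from Lemma~\ref{lem:something_cheaper}, together with the simple observation that lexicographic minimization with the scalar cost in the first coordinate automatically enforces optimality with respect to the original cost function.

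First I would establish the easier half: any feasible flow that minimizes \(\cost'\) also minimizes \(\cost\). For any flow \(\flow\), the first coordinate of \(\cost'(\flow)\) is precisely \(\cost(\flow)\), since the definition \(\cost'(d) = (\cost(d), 1) \tuplecat \homsig{d} \tuplecat (\coflow(d))\) puts \(\cost(d)\) in position one and summation commutes with projection. Because comparisons on \(\cost'\) are lexicographic, any \(\cost'\)-minimizer must in particular have its first coordinate as small as possible among feasible flows, so it is a \(\cost\)-minimum feasible flow.

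For uniqueness I would argue by contradiction. Suppose \(\flow_1 \neq \flow_2\) are two distinct feasible flows each achieving the minimum value of \(\cost'\); in particular \(\cost'(\flow_1) = \cost'(\flow_2)\). Apply Lemma~\ref{lem:something_cheaper} to obtain a feasible flow \(\flow\) with \(\cost'(\flow) < \cost'(\flow_1)\) or \(\cost'(\flow) < \cost'(\flow_2)\). Either inequality contradicts the fact that the common value \(\cost'(\flow_1) = \cost'(\flow_2)\) is minimal, completing the argument.

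The only mild obstacle is the asymmetric form of Lemma~\ref{lem:something_cheaper}, which beats only one of the two candidates rather than both simultaneously. This is entirely harmless here: since any two minimizers share the same \(\cost'\)-value, a strict improvement over either one is strict improvement over both. I anticipate no other subtleties, as the lemma does all the heavy lifting (separately handling the nontrivial homology case, the case where some edge carries positive flow in both directions, and the boundary-flow case whose non-triviality is detected through the face-part of \(\cost'\) via the drainage).
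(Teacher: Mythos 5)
Your proposal is correct and takes exactly the same route the paper does: the theorem is stated without a separate proof precisely because it follows immediately from Lemma~\ref{lem:something_cheaper}, and the paper's remark immediately after that lemma makes the same observation you do, namely that if two distinct minimizers had equal \(\cost'\)-value, the flow \(\flow\) from the lemma would beat both, contradicting minimality. Your first paragraph (projection onto the first coordinate) is the standard and intended justification of the ``is also a \(\cost\)-minimum'' clause, so nothing is missing.
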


Recall, the shortest \(s,t\)-path problem is a special case of minimum-cost flow where for each dart
\(d\), \(\capacity(d) = \infty\).
In addition, all demands are zero except \(\demand(t) = -\demand(s) = 1\).
Every directed cycle has its cost strictly increase, so if \(\cost\) has no negative-length
directed cycles, then \(\cost'\) has no negative or even \emph{zero}-length directed cycles.
Any feasible flow with a directed cycle \(\cycle\) can be made cheaper by removing \(\cycle\).
The unique minimum-cost flow with respect to \(\cost'\) guaranteed by
Theorem~\ref{thm:unique_flow} is a directed path from \(s\) to \(t\).
\begin{corollary}
  \label{cor:unique_path}
  Let \(G = (V, E, F)\) be a graph of genus~\(g\), let \(\cost : \dartsof{E} \to \R\) be a
  dart cost function, and let \(\cost' : \dartsof{E} \to \R\) be the output of our lexicographic
  perturbation scheme on \(\cost\).
  Let \(s,t \in V\).
  The shortest \(s,t\)-path with respect to \(\cost'\) is unique and is a shortest \(s,t\)-path
  with respect to \(\cost\) as well.
\end{corollary}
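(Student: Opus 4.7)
The plan is to reduce Corollary~\ref{cor:unique_path} to Theorem~\ref{thm:unique_flow} via the reformulation of shortest $s,t$-paths as a minimum-cost flow problem described in the paragraph just above the statement (infinite capacities; demand $+1$ at $t$, $-1$ at $s$, and $0$ elsewhere). The crux is to verify that the unique minimum-cost feasible flow $\flow^*$ with respect to $\cost'$ guaranteed by Theorem~\ref{thm:unique_flow} is actually the indicator flow of a single simple $s,t$-path.

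To that end, I would first observe that every directed cycle $\cycle$ of $G$ has strictly positive $\cost'(\cycle)$ in the lexicographic order: the first component is $\cost(\cycle) \geq 0$ by the standing no-negative-cycle assumption, and the second component is just the number of darts of $\cycle$, because $\cost'$ assigns a ``$1$'' in that coordinate to every dart. It follows that the support of $\flow^*$ cannot contain any directed cycle $\cycle$ --- otherwise, subtracting $\eps \cycle$ for $\eps = \min_{d \in \cycle} \flow^*(d) > 0$ would yield a feasible flow of strictly smaller $\cost'$, contradicting optimality.

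Given that $\flow^*$ has acyclic support and carries exactly one unit of flow from $s$ to $t$, I would appeal to standard flow decomposition to write $\flow^* = \sum_i \alpha_i \walk_i$ as a convex combination of simple $s,t$-paths. Each $\walk_i$ is itself a feasible unit $s,t$-flow, so $\cost'(\walk_i) \geq \cost'(\flow^*)$, and the convex combination forces equality throughout; uniqueness from Theorem~\ref{thm:unique_flow} then collapses all $\walk_i$ into a single path $\shortestpath$. Both conclusions of the corollary follow: any $\cost'$-shortest $s,t$-path is itself a feasible unit flow, hence equals $\flow^* = \shortestpath$ by uniqueness, and Theorem~\ref{thm:unique_flow} also says $\shortestpath$ is a minimum-cost flow for the original $\cost$, so its $\cost$-value matches that of any $\cost$-shortest $s,t$-path.

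I expect the main obstacle to be the path-decomposition step --- ruling out the possibility that the abstract minimum-cost flow $\flow^*$ spreads flow fractionally across several paths rather than concentrating on one --- since the rest is essentially bookkeeping. The argument above resolves this by pairing acyclicity of the support (forced by strict positivity of $\cost'$ on directed cycles) with uniqueness of the minimum-cost flow applied individually to each path in the decomposition.
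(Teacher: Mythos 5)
Your proposal matches the paper's argument essentially step for step: recast the shortest $s,t$-path problem as the minimum-cost flow instance with infinite capacities and demands $+1$ at $t$, $-1$ at $s$; observe that the ``$+1$'' coordinate of $\cost'$ makes every directed cycle strictly positive, so the unique optimum from Theorem~\ref{thm:unique_flow} is cycle-free; and conclude it is a single simple $s,t$-path. The only difference is one of explicitness --- the paper simply asserts that a cycle-free unit flow from $s$ to $t$ ``is a directed path,'' whereas you spell out the flow decomposition into simple paths and invoke uniqueness again to collapse the convex combination to one summand; this is a correct and slightly more careful filling-in of a step the paper leaves implicit.
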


From here on, we refer to the unique minimum-cost flows and shortest paths guaranteed by our
perturbation scheme as \EMPH{homologically lexicographic least leftmost} or \EMPH{holiest} flows and
paths.

\section{Minimum Cut in Directed Planar Graphs}
\label{sec:planar_cut}
As discussed in the introduction, our perturbation scheme can be used in a black box fashion to
immediately derandomize the \(O(n \log \log n)\) time minimum cut algorithm of
Mozes~\etal~\cite{mnnw-mcdpg-18} for directed planar graphs.
The only change necessary to derandomize their algorithm is to guarantee uniqueness of shortest
paths in the \emph{dual graph}.
\begin{corollary}
  \label{cor:global_cut}
  Let \(G = (V, E, F)\) be a planar graph of size \(n\), and let \(\cost : \dartsof{E} \to \R\) be a
  dart cost function.
  There exists a deterministic algorithm that computes a global minimum cut of \(G\) with respect to
  \(\cost\) in \(O(n \log \log n)\) time.
\end{corollary}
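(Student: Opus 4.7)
The plan is to take the $O(n \log \log n)$ minimum cut algorithm of Mozes~\etal~\cite{mnnw-mcdpg-18} essentially verbatim and substitute our deterministic perturbation scheme of Section~\ref{sec:scheme} for the random perturbations they use to enforce uniqueness of dual shortest paths. Since the corollary claims exactly their running time, the obligation is simply to verify that our scheme can serve as a drop-in replacement without slowing down any part of their algorithm.

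First, I would recall that the Mozes~\etal~algorithm reduces directed planar minimum cut to a sequence of shortest path computations in the dual graph $G^*$, and that its correctness proof assumes that these dual shortest paths are \emph{unique}. In their presentation, this assumption is justified by perturbing dual dart costs with small random values and appealing to the Isolation Lemma. My replacement would be to apply our holiest perturbation scheme to the dual cost function: compute the homology signatures (which are empty since $g = 0$) and a drainage $\coflow$ on the primal graph $G$ via a spanning tree in $O(n)$ time, and then work with the $3$-dimensional dart costs $(\cost(d), 1, \coflow(d))$.

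Second, I would invoke Corollary~\ref{cor:unique_path}, which ensures that holiest shortest paths in $G^*$ with respect to the perturbed costs are unique and are also shortest paths with respect to the original scalar dual costs. Consequently, any step of Mozes~\etal's algorithm that reasons about the structure or value of dual shortest paths (non-crossing of shortest paths from a common source, correctness of their divide-and-conquer, etc.) remains valid, and the minimum cut it ultimately outputs has the correct original cost.

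Third, I would check that the overhead of the perturbation is negligible. Since $g = 0$, the perturbed costs are $3$-vectors, so each arithmetic operation or lexicographic comparison takes $O(1)$ time in the word RAM model; preprocessing the perturbation takes $O(n)$ time. Thus the asymptotic running time of the algorithm remains $O(n \log \log n)$.

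The main (and essentially only) obstacle is a bookkeeping one: confirming that Mozes~\etal's algorithm interacts with dart costs solely through arithmetic, comparison, and the uniqueness property of shortest paths, and not through any property specific to scalar or random costs (for instance, that it never exploits independence of perturbations across darts, and never requires the perturbations to be infinitesimal in a way incompatible with our integer face-part values of magnitude $O(n)$). Given the black-box nature of their use of uniqueness, this verification is routine.
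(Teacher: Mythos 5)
Your proposal matches the paper's argument exactly: the paper likewise observes that the Mozes~\etal\ algorithm needs only uniqueness of dual shortest paths, supplies this via the holiest perturbation scheme applied to the dual costs (a drop-in for their random perturbations), and notes that with $g=0$ the perturbed costs are constant-dimensional so all operations remain $O(1)$. The paper treats this as an immediate black-box consequence with essentially no further verification, so your additional bookkeeping remarks are, if anything, more careful than what appears in the text.
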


\section{Multiple-Source Shortest Paths}
\label{sec:MSSP}

Our scheme can be used in a black box fashion in the multiple-source shortest paths algorithm of
Cabello~\etal~\cite{cce-msspe-13}.
However, they depend on another property of the dart costs beyond uniqueness of shortest paths.
Our perturbation scheme does guarantee the additional property, but we must first describe their
algorithm in order to even explain what that property is.
Understanding their algorithm is also a crucial first step in describing our linear-time algorithm
for embedded graphs with constant genus and small integer dart costs.
In order to more cleanly explain our linear-time algorithm in later sections, we describe a slight
variant of Cabello~\etal's algorithm.
This variant is based on the linear-time multiple-source shortest paths algorithm of Eisenstat and
Klein~\cite{ek-lafms-13} for planar graphs with small integer dart costs.

Let~\(G = (V,E,F)\) be an embedded graph of size~\(n\) and genus~\(g\), and let \(\cost : \dartsof{E} \to \R\)
be a cost function on the darts.
Let \(r \in F\) be an arbitrary face of \(G\) from whose vertices we want to preprocess shortest
paths with regard to \(\cost\).
The multiple-source shortest paths algorithm begins by computing a shortest path tree \(T\) rooted at
an arbitrary vertex of \(r\).
The algorithm proceeds by iteratively changing the source of the shortest path tree to each of the
vertices in order around \(r\);
each change is implemented as a sequence of \EMPH{pivots} wherein one dart \(x \arcto y\) enters
\(T\) and another dart \(w \arcto y\) leaves.

Consider one \EMPH{iteration} of the algorithm where the source moves from a vertex \(u\) to a
vertex \(v\).
To move the source, the algorithm performs a \EMPH{special pivot}.
Let \(x \arcto v\) be the predecessor dart of \(v\) in \(T\).
During the special pivot, the algorithm removes dart \(x \arcto v\) from \(T\) and adds dart \(v
\arcto u\);
afterward, \(T\) is rooted at \(v\).
Let \(\parameter \in \R\), and let \(\cost_\parameter : \dartsof{E} \to \R\) be a parameterized cost
function where \(\cost(v \arcto u) = \parameter\) and \(\cost_\parameter(d) = \cost(d)\) for all \(d
\neq v \arcto u\).
This special pivot is accompanied by temporarily redefining the dart costs in terms of
\(\cost_\parameter\) with \(\parameter\) initially set to \(-\dist_{\cost}(u, v)\).
Changing the costs in this way guarantees that \(T\) is a shortest path tree rooted at \(v\)
\emph{given dart \(v \arcto u\) has cost \(\parameter\)}.

Conceptually, the rest of the iteration is performed by continuously increasing \(\parameter\) until
it reaches \(\cost(v \arcto u\)), the original cost of \(v \arcto u\), and maintaining \(T\) as a
shortest path tree as \(\parameter\) is increased.
Following convention from Cabello~\etal~\cite{cce-msspe-13}, we say a vertex \(x\) is
\EMPH{\color{red} red} if the \(v\) to \(x\) path in \(T\) uses dart \(v \arcto u\);
otherwise, the vertex is \EMPH{\color{blue} blue}.
Let \(\dist_{\parameter}\) denote \(\dist_{\cost_\parameter}\) for simplicity.%
\begin{longversion}
Define the \EMPH{slack} of dart \(x \arcto y\) with regard to \(\parameter\) to be
\[
\slack_\parameter(x \arcto y) := \dist_{\parameter}(v, x) + \cost_\parameter(x \arcto y) -
\dist_\parameter(v, y).
\]
\end{longversion}
\begin{shortversion}
Define the \EMPH{slack} of dart \(x \arcto y\) with regard to \(\parameter\) to be
\(
\slack_\parameter(x \arcto y) := \dist_{\parameter}(v, x) + \cost_\parameter(x \arcto y) -
\dist_\parameter(v, y) \geq 0
\).
\end{shortversion}
For any \(\parameter \in \R, x \arcto y \in \dartsof{E}\), we have \(\slack_\parameter(x \arcto y)
\geq 0\).
We say dart \(x \arcto y\) is \EMPH{tense} if \(\slack_\parameter(x \arcto y) = 0\).
A spanning tree \(T'\) rooted at \(v\) is a shortest path tree if and only if every dart in \(T'\)
is tense.
Dart \(x \arcto y\) is \EMPH{active} if \(\slack_\parameter(x \arcto y)\) is decreasing in
\(\parameter\).
A dart \(x \arcto y\) is active if and only if \(x\) is blue and \(y\) is red~\cite[Lemma
3.1]{cce-msspe-13}.
All active darts see the same rate of slack decrease as \(\parameter\) rises.

As \(\parameter\) increases, it reaches certain critical values where an active dart \(x \arcto y\)
becomes tense.
The algorithm then performs a pivot by inserting \(x \arcto y\) into \(T\) and removing the original
predecessor \(w \arcto y\) of~\(y\).
Because \(x \arcto y\) is tense when the pivot occurs, \(T\) remains a shortest path tree rooted at
\(v\).
Note that, with the exception of \(v \arcto u\) during the special pivot, slacks do not change
during pivots.

\begin{longversion}
Using appropriate dynamic-tree data structures~\cite{st-dsdt-83,t-dtste-97,hk-rfdga-99,tw-satt-05},
these critical values for \(\parameter\) can be computed and pivots can be performed in amortized
\(O(\log n)\) time per pivot.
Eisenstat and Klein use simpler data structures for the case of planar graphs with small integer
costs;
see Section~\ref{sec:linear-time} for details.

Across all iterations, the total number of pivots is \(O(gn)\)~\cite[Lemma 4.3]{cce-msspe-13}.
Therefore, between performing pivots and some \(O(g n \log n)\) time additional work, the algorithm
of Cabello~\etal~spends \(O(g n \log n)\) time total.
However, their algorithm and analysis depend upon two genericity assumptions:
all vertex-to-vertex shortest paths are unique, and exactly one dart becomes tense at each critical
value of \(\parameter\).
\end{longversion}

\begin{shortversion}
The algorithm and analysis of Cabello~\etal~\cite{cce-msspe-13} depend upon two genericity
assumptions:
all shortest paths are unique, and exactly one dart becomes tense at each critical value of
\(\parameter\).
Suppose we apply our perturbation scheme and work perturbed costs~\(\cost'\).
Observe that~\(\parameter\) is now an increasing vector instead of a scalar.
Corollary~\ref{cor:unique_path} guarantees that the first assumption of Cabello~\etal\ is now
enforced.
The second assumption can be shown to hold by treating the selection of pivots as a minimum-cost
flow problem, for which our perturbation scheme guarantees a unique solution
(see Cabello~\etal~\cite[Section 6]{cce-msspe-13}).
\end{shortversion}

\begin{longversion}
Suppose we apply our lexicographic perturbation scheme so we are maintaining the holiest shortest
path tree \(T\).
Let \(\cost' : \dartsof{E} \to \R \times \N^{2g+2}\) be the perturbed costs.
Observe that~\(\parameter\) is now an increasing vector instead of a scalar.
Corollary~\ref{cor:unique_path} guarantees that the first assumption is enforced.
For the second assumption, we prove the following lemma.
\begin{lemma}
  Consider an iteration where the source of \(T\) moves from vertex \(u\) to vertex \(v\).
  For each value of \(\parameter\), there is at most one active dart of minimum slack.
\end{lemma}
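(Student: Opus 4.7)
The plan is to derive a contradiction from the uniqueness of shortest paths guaranteed by Corollary~\ref{cor:unique_path}. Suppose, toward a contradiction, that at some parameter value \(\parameter\), two distinct active darts \(x_1 \arcto y_1\) and \(x_2 \arcto y_2\) both achieve the minimum slack \(s^*\) among active darts.

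First I would show that, for \emph{any} active dart \(x \arcto y \neq v \arcto u\), the slack \(\slack_\parameter(x \arcto y)\) decreases at exactly unit rate as \(\parameter\) increases. The reasoning: because \(x\) is blue, the shortest \(v\)-to-\(x\) path avoids \(v \arcto u\), so \(\dist_\parameter(v, x)\) is independent of \(\parameter\); because \(y\) is red, the \(T\)-path to \(y\) traverses \(v \arcto u\), so \(\dist_\parameter(v, y)\) grows in lockstep with \(\parameter\); and \(\cost_\parameter(x \arcto y)\) is itself independent of \(\parameter\). Consequently, our two equal-slack active darts become tense at the same value \(\parameter' := \parameter + s^*\).

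Next, I would argue that at \(\parameter'\), each tense dart \(x_i \arcto y_i\) witnesses a second shortest \(v\)-to-\(y_i\) path under the perturbed, parameterized cost function. Specifically, tenseness gives \(\dist_{\parameter'}(v, y_i) = \dist_{\parameter'}(v, x_i) + \cost_{\parameter'}(x_i \arcto y_i)\), so \(\shortestpath_{\parameter'}(v, x_i) \concat (x_i \arcto y_i)\) is a shortest path. This path avoids \(v \arcto u\) (since \(x_i\) is blue), whereas the \(T\)-path to \(y_i\) traverses \(v \arcto u\) (since \(y_i\) is red); the two paths are therefore distinct. Applying Corollary~\ref{cor:unique_path} at parameter \(\parameter'\) yields the desired contradiction, already in the special case \(i=1\).

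The main obstacle is justifying that Corollary~\ref{cor:unique_path} applies to the parameterized perturbed cost function at \(\parameter'\), which is not literally the output of our lexicographic perturbation scheme applied to any scalar cost (the dart \(v \arcto u\) has been reassigned the vector cost \(\parameter'\)). I plan to address this by checking that the proof of Lemma~\ref{lem:something_cheaper} goes through verbatim when one dart's cost is replaced by an arbitrary vector: every other dart retains its original homology signature and drainage contribution, so any non-trivial circulation still has a non-zero homology part or a non-zero face part by the same Case~1 / Case~2 analysis, which gives uniqueness of minimum-cost flows and therefore of shortest paths at \(\parameter'\).
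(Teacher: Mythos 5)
Your high-level strategy (simultaneous tenseness of two minimum-slack active darts would produce two distinct shortest $v$-to-$y_1$ paths, contradicting uniqueness) is reasonable and differs from the paper's argument, which instead reformulates the pivot choice as a minimum-cost flow problem in a \emph{residual} graph and invokes Theorem~\ref{thm:unique_flow} directly. However, there is a genuine gap in the last step of your proposal — and it occurs exactly at the obstacle you yourself flag.

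You claim the proof of Lemma~\ref{lem:something_cheaper} ``goes through verbatim'' for $\cost'_{\parameter'}$ because ``every other dart retains its original homology signature and drainage contribution.'' The problem is that the circulation $\tilde\flow$ produced by your own argument is precisely the difference of the $T$-path to $y_1$ (which \emph{uses} $v\arcto u$, since $y_1$ is red) and the witness path $\shortestpath_{\parameter'}(v,x_1)\concat(x_1\arcto y_1)$ (which \emph{avoids} $v\arcto u$, since $x_1$ is blue). Thus $\tilde\flow(v\arcto u)=\pm 1\neq 0$, so the modified dart's cost enters $\cost'_{\parameter'}(\tilde\flow)$. Since $\cost'_{\parameter'}(v\arcto u)=\parameter'$ has homology and face components that need not equal $\homsig{v\arcto u}$ or $\coflow(v\arcto u)$ (indeed $\parameter$ is initialized to $-\dist_{\cost'}(u,v)$, whose homology part is $-\homsig{\shortestpath(u,v)}$, generically nonzero), Lemma~\ref{lem:cost_components} fails for $\cost'_{\parameter'}$: the homology part of $\cost'_{\parameter'}(\tilde\flow)$ is no longer $\homsig{\tilde\flow}$, and likewise for the face part. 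Cases~1 and~2 of Lemma~\ref{lem:something_cheaper} therefore do not apply ``verbatim,'' and it is not clear that $\cost'_{\parameter'}(\tilde\flow)\neq 0$.

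The paper's residual-flow formulation is specifically designed to sidestep this: it poses the pivot selection as finding a minimum-$\cost_p$-cost flow shipping one unit from $v$ to $u$, where the residual cost $\cost_p(v\arcto u)$ is set to $\infty$. Any two competing flows in that formulation differ by a circulation supported away from $v\arcto u$, so the $\parameter$-dependence disappears from the homology and face components, and the original uniqueness analysis applies unchanged. To repair your argument along the lines you proposed, you would need a substitute for the ``verbatim'' claim that handles circulations passing through $v\arcto u$ — for instance, by factoring out the $v\arcto u$ contribution and reducing to a circulation avoiding it, which is essentially what the residual-graph construction accomplishes.
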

\begin{proof}
  Suppose there is at least one active dart.
  Let \(\capacity : \dartsof{E} \to \R^+\) be the dart capacities and \(\demand : V \to \R\) be the
  vertex demands for shortest paths rooted at \(v\), and let \(\flow\) be the minimum-cost flow with
  respect to \(\cost'_\parameter\), \(\capacity\), and \(\demand\) that uses darts of \(T\).
  Let \(\demand_p : V \to \R\) (`p' stands for pivot) be a vertex demand function that is
  zero-everywhere except \(\demand_p(u) = -\demand_p(v) = 1\), and let \(\cost_p : \dartsof{E} \to
  \R^+\) be a \emph{residual} cost function where \(\cost_p(v \arcto u) = \infty\), \(\cost_p(d) =
  -\cost'_\parameter(\rev(d))\) if \(\rev(d) \neq v \arcto u\) and \(\rev(d)\) lies on a shortest
  path according to \(\cost'_\parameter\), and \(\cost_p(d) = \cost'_\parameter(d)\) otherwise.
  Finding the active dart of minimum slack is equivalent to computing a holiest flow with respect to
  \(\cost_p\), \(\capacity\), and \(\demand_p\).
  However, the holiest flow is unique by Theorem~\ref{thm:unique_flow}.
%
\end{proof}
\end{longversion}

After applying our perturbation scheme, the time to do basic operations on costs increases by a
factor of \(g\).
\begin{longversion}
In particular, we can construct the multiple-source shortest paths data structure of Cabello
\etal~\cite{cce-msspe-13} for \emph{unperturbed} dart costs with only a~\(g\) factor increase in
the construction time.
The~\(O(g n \log n)\) space needed to store the final data structure is already more than
the~\(O(gn)\) space we need to store perturbed edge lengths and auxiliary structures used during
construction.
\end{longversion}
\begin{shortversion}
We conclude:
\end{shortversion}
\begin{theorem}
  \label{thm:MSSP}
  Let \(G = (V, E, F)\) be a graph of size \(n\) and genus \(g\), let \(\cost : \dartsof{E} \to \R\)
  be a dart cost function, and let \(r \in F\) be any face of \(G\).
  We can deterministically preprocess \(G\) in \(O(g^2 n \log n)\) time and \(O(g n \log n)\) space
  so that the (unperturbed) length of the shortest path from any vertex incident to \(r\) to any
  other vertex can be retrieved in \(O(\log n)\) time.
\end{theorem}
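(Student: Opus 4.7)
The plan is to invoke the holiest perturbation scheme of Section~\ref{sec:scheme} on the cost function \(\cost\) to obtain perturbed costs \(\cost'\), and then run the multiple-source shortest paths algorithm of Cabello~\etal~\cite{cce-msspe-13} sketched above with respect to \(\cost'\). Because the first coordinate of \(\cost'(d)\) equals \(\cost(d)\), any stored distance trivially yields the original (unperturbed) distance in its first coordinate, so queries require no additional work beyond the standard mechanism. Corollary~\ref{cor:unique_path} immediately supplies the first genericity assumption of Cabello~\etal, namely uniqueness of shortest paths.

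The substantive content is verifying the second genericity assumption: that at each critical value of \(\parameter\), exactly one active dart becomes tense. I would recast the choice of the next pivoting dart as a small minimum-cost flow problem in a residual graph built from the current holiest tree \(T\) and the current parameterized costs. In this residual problem, the tightest active dart corresponds to a flow of unit supply from the new source \(v\) back to the old source \(u\) that avoids the current dart \(v \arcto u\), and the minimum-slack active dart is distinguished by being the support of the optimal residual flow. Since the perturbation scheme was applied before we started pivoting and the residual construction preserves its lexicographic structure, Theorem~\ref{thm:unique_flow} guarantees the residual optimum is unique, so the minimum-slack active dart is unique at every critical parameter value.

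For the resource accounting, Cabello~\etal~perform \(O(gn)\) pivots in total across all source changes, each pivot handled in \(O(\log n)\) amortized time by appropriate dynamic tree data structures, and their persistent shortest path tree occupies \(O(gn \log n)\) space. Replacing scalar costs with the \((2g+3)\)-dimensional vectors produced by our scheme inflates every arithmetic operation and comparison by a factor of \(O(g)\), so the total preprocessing cost becomes \(O(g^2 n \log n)\). The \(O(gn)\) time and \(O(gn)\) space needed to compute the homology signatures and the drainage underlying the perturbation (Section~\ref{sec:scheme}) are absorbed by these bounds. Queries remain \(O(\log n)\) since they only traverse the persistent structure and read off the first coordinate of the returned distance.

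The step I expect to be the main obstacle is formalizing the reduction from pivot selection to a residual minimum-cost flow instance---in particular, checking that the natural residual cost function inherits the lexicographic structure of \(\cost'\) so that ``holiest'' is meaningful in the residual graph, and hence that Theorem~\ref{thm:unique_flow} may be applied directly. Once that bookkeeping is done, the rest of the argument is a routine overlay of our perturbation on Cabello~\etal's existing framework.
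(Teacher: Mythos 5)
Your proposal is correct and follows essentially the same route as the paper: Corollary~\ref{cor:unique_path} supplies the uniqueness-of-shortest-paths assumption, the second genericity assumption is discharged by recasting pivot selection as a residual minimum-cost flow from $v$ to $u$ (with $v \arcto u$ blocked) whose uniqueness follows from Theorem~\ref{thm:unique_flow}, and the $O(g^2 n \log n)$ bound arises from the $O(g)$ arithmetic overhead on Cabello et al.'s $O(gn \log n)$-time construction. The residual-flow step you flag as the main obstacle is handled at a comparably terse level in the paper's own lemma, so your level of rigor matches the target.
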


\begin{longversion}
\section{The Leafmost Rule in Planar Graphs}
\label{sec:leafmost}

In the previous section, we discussed using our perturbation scheme to efficiently solve the
multiple-source shortest paths problem in embedded graphs.
Recall, perturbations (deterministic or randomized) are not required for algorithms designed to
compute multiple-source shortest paths exclusively in \emph{planar}
graphs~\cite{k-msspp-05,ek-lafms-13}.
In fact, the linear-time algorithm of Eisenstat and Klein~\cite{ek-lafms-13} \emph{cannot} rely on
perturbation schemes as it crucially depends upon all dart costs being small non-negative integers.
Instead, these algorithms rely on the \emph{leafmost} rule for selecting darts to pivot into the
shortest path tree.
The leafmost rule is also used in efficient \(s,t\)-maximum flow algorithms based on parametric
shortest paths in the dual graph~\cite{bk-amfdp-09,e-mfpsp-10,ek-lafms-13}.

Let~\(G = (V,E,F)\) be a connected planar graph of size~\(n\), and let \(\cost : \dartsof{E} \to
\R\) be a cost function on the darts.
Let \(T\) be the shortest path tree maintained while running Section~\ref{sec:MSSP}'s
multiple-source shortest paths tree algorithm along face \(r \in F\).
Suppose we are moving the source of \(T\) from vertex \(u\) to vertex \(v\), and let \(v \arcto u =
r \fencewith q\).
Let \(\cotree\) be a spanning tree of \(G^*\) complementary to \(T\) with darts oriented toward
\(r\).
In other words, the darts of \(T\) and \(\cotree\) belong to distinct edges.
The set of active darts are precisely the darts in the \(q,r\)-path through
\(\cotree\)~\cite{cce-msspe-13}.
In this setting, the leafmost rule says we should pivot in the active dart of minimum slack that
lies closest to a leaf of \(\cotree\);
in other words, we choose the minimum slack dart encountered first on the \(q\) to \(r\) path
through \(\cotree\).
Following the leafmost rule guarantees we always maintain \emph{leftmost} shortest path trees.

\subsection{Modified Perturbation Scheme and the Leafmost Rule}
\label{subsec:leafmost_modify}

We briefly return to the problem of computing multiple-source shortest paths around a face \(r\) in
a graph \(G = (V, E, F)\) of arbitrary genus~\(g\).
Consider the following slight modifications to our perturbation scheme described in
Section~\ref{sec:scheme}.
First, the drainage \(\coflow\) used to define the perturbed costs is required to use \(r\) as its
sink.
We can easily compute such a drainage using a dual spanning tree as described in
Section~\ref{sec:scheme}.
Second, we forgo the \(+1\) added as the second component to each dart's cost, instead only using
\(2g + 1\) integers based on the homology signatures and \(\coflow\) to perturb each dart's cost.
Let \(\cost' : \dartsof{E} \to \R \times \N^{2g + 1}\) be the resulting perturbed costs.

The first modification described above has no effect on our scheme's guarantee for minimum-cost
flows and shortest paths.
The second modification may have consequences, though.
Namely, our modified scheme may introduce negative cost directed cycles where the original cost
function had zero cost cycles, and minimum-cost flows and ``shortest paths'' may not be unique if a
dart and its reversal both have zero cost.
From this point forward, we will work under the assumption that every directed cycle has strictly
positive cost according to the unperturbed cost function \(\cost\).%
\footnote{Our assumption does not appear necessary for guaranteeing correctness or efficiency of
Eisenstat and Klein's~\cite{ek-lafms-13} planar graph multiple-source shortest paths algorithm.
Unlike the generalization presented in Section~\ref{sec:linear-time}, their algorithm may begin with
an \emph{arbitrary} shortest path tree.
In this case, they do not maintain a holiest shortest path tree, but their leafmost pivots do
provide the weaker but sufficient guarantee that shortest paths to a common endpoint do not cross.}

Now, let us return to the planar setting for the rest of this section (so \(g = 0\)).
We claim the pivots chosen using the leafmost rule with dart costs \(\cost\) are actually the same
as the unique pivot choices guaranteed using perturbed costs \(\cost'\).
We state this claim formally in the following lemma.
\begin{lemma}
  \label{lem:leafmost_explanation}
  Consider an iteration of the multiple-source shortest paths algorithm in planar graphs where the
  source of \(T\) moves from vertex \(u\) to vertex \(v\).
  For each value of \(\parameter\), the leafmost active dart of minimum slack according to \(\cost\)
  is also the unique active dart of minimum slack according to \(\cost'\).
\end{lemma}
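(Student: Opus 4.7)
The plan is to compare slacks of active darts under \(\cost'_\parameter\) at each value of \(\parameter\) by reducing them to \(\cost'_\parameter\)-costs of fundamental cycles, and then reading off the face part via the drainage identity in Lemma~\ref{lem:drainages}. I would start with the telescoping identity
\[
\slack_\parameter(d) \;=\; \cost'_\parameter(d) + \dist_{\cost'_\parameter}(v,\tail(d)) - \dist_{\cost'_\parameter}(v,\head(d)) \;=\; \cost'_\parameter\bigl(\fundcycle(\tree,d)\bigr),
\]
where the last equality uses the fact that \(\tree\) is the holiest \(\cost'_\parameter\)-shortest path tree rooted at \(v\), and \(\fundcycle(\tree,d)\) is oriented with \(d\) forward. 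For two active darts tied in unperturbed \(\cost_\parameter\)-slack the first (scalar) components of their \(\cost'_\parameter\)-slacks agree, so the lex comparison reduces to the face parts \(\coflow\bigl(\slack_\parameter(d)\bigr)\), which are just the \(\coflow\)-sums along the respective fundamental cycles.

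Next I would identify each fundamental cycle geometrically in the planar embedding. An active dart \(d\) corresponds to a dual dart \(\fence{p}{p'}\) on the unique \(q\)-to-\(r\) path in the cotree \(\cotree\) rooted at~\(r\); let \(S(p)\) denote the set of faces in the subtree of \(\cotree\) rooted at \(p\) (i.e., on the \(q\)-side of \(\fence{p}{p'}\)). Because \(\cotree\) is a dual spanning tree, every dual edge crossing between \(S(p)\) and \(F \setminus S(p)\) other than \(\fence{p}{p'}\) already lies in \(\tree\); hence the primal cycle \(\fundcycle(\tree,d)\) is precisely the boundary of \(S(p)\). Invoking Lemma~\ref{lem:drainages} with \(F' = F \setminus S(p)\) (which contains the sink \(r\)), and recalling that the drainage assigns imbalance \(|F|-1\) at \(r\) and \(-1\) at every other face, the \(\coflow\)-sum along \(\fundcycle(\tree,d)\) evaluates to \((|F|-1) - (|F \setminus S(p)|-1) = |S(p)|\). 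The sign convention here is forced by the invariant \(\slack_\parameter(d) \geq 0\) under the perturbed cost, which picks out the orientation placing \(F \setminus S(p)\) on the clockwise-neighborhood side.

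To close, I would use subtree containment in \(\cotree\). If \(d_1\) is the leafmost minimum-\(\cost_\parameter\)-slack active dart and \(d_2\) is any other such dart, then \(p_1\) lies strictly closer to \(q\) on the \(q\)-to-\(r\) path than \(p_2\), so \(p_2\) is a proper ancestor of \(p_1\) in \(\cotree\) and \(S(p_1) \subsetneq S(p_2)\). Hence \(|S(p_1)| < |S(p_2)|\), giving \(\slack_\parameter(d_1) < \slack_\parameter(d_2)\) lexicographically under \(\cost'_\parameter\); active darts with strictly larger \(\cost_\parameter\)-slack are already beaten by the first scalar component. The main obstacle will be pinning down the orientation convention precisely, because the sign of the drainage sum depends on whether \(S(p)\) or its complement is viewed as the region bounded by the cycle; once this is fixed using nonnegativity of \(\slack_\parameter\), the rest of the argument is a routine count of faces in nested subtrees of \(\cotree\).
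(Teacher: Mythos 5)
Your plan is correct in its essentials but takes a genuinely different route from the paper. The paper first invokes \emph{induction} on earlier iterations and earlier values of \(\parameter\) to assume both schemes are maintaining the same tree \(T\), then uses Lemma~\ref{lem:drainages} to argue that one may assume, without loss of generality, that the drainage \(\coflow\) is supported only on the cotree \(\cotree\) complementary to \(T\). Under that normalization the shortest-path contributions to the slack all have zero face part, so \(\coflow(\slack_\parameter(d))=\coflow(d)-\coflow(\parameter)\), and one reads off the leafmost property directly from the tree-based drainage values: darts nearer a leaf of \(\cotree\) carry less dual flow. Your route keeps \(\coflow\) arbitrary, identifies the fundamental cycle of an active dart with the boundary of a face set \(S(p)\) in the planar embedding, and invokes Lemma~\ref{lem:drainages} to evaluate the face part as \(|S(p)|\); the leafmost property then comes from nestedness of these face sets. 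Both arguments hinge on the same combinatorial fact, but yours is more explicit and avoids justifying the WLOG normalization, at the cost of being a bit longer.

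Two imprecisions you should fix. First, the telescoping identity \(\slack_\parameter(d)=\cost'_\parameter(\fundcycle(T,d))\) is false for the scalar (first) component: the tree path from \(\head(d)\) back to \(\tail(d)\) uses reversals of tree darts, and the primal cost \(\cost\) is not antisymmetric, so \(\cost(\rev(d'))\neq -\cost(d')\). The identity only holds for the face part (and the homology part), precisely because the dual flow \(\coflow\) \emph{is} antisymmetric; since you only use the face part, the argument survives, but you should say this. Second, even for the face part the two sides differ by the additive constant \(\coflow(\cost'(v\arcto u))-\coflow(\parameter)\), coming from the special dart \(v\arcto u\) appearing with cost \(\parameter\) in the shortest path to the red endpoint but with its unperturbed cost in the fundamental cycle (compare Lemma~\ref{lem:slack_homology}); this constant is the same for all active darts so it drops out of the comparison, but the identity as written is wrong. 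Finally, your ``sign convention forced by non-negativity of slack'' remark is a little loose when the minimum scalar slack is strictly positive; the orientation of \(\fundcycle(T,d)\) is actually determined unambiguously by the fact that the cycle contains \(u\arcto v=q\fencewith r\), placing \(r\) inside and \(q\) outside, which is a cleaner way to pin down the sign.
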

\begin{proof}
  We can assume the lemma holds inductively across earlier iterations of the algorithm and earlier
  values of \(\parameter\) within the current iteration, meaning the current holiest path tree \(T\)
  is the same for the algorithm using the leafmost rule with \(\cost\) and the algorithm using
  perturbed costs \(\cost'\).
  Recall, the drainage \(\coflow\) used to define \(\cost'\) has sink \(r\).
  Lemma~\ref{lem:drainages} essentially states that the exact choice of \(\coflow\) does not
  matter;
  all drainages with equivalent dual imbalances result in the same unique minimum-cost flows and
  shortest paths in \(G\).
  Therefore, we can assume without loss of generality that \(\coflow\) is non-zero only within the
  complementary dual spanning tree \(\cotree\) of \(T\).

  Now, suppose there are two active darts \(x \arcto y\) and \(x' \arcto y'\) of minimum slack
  according to \(\cost\).
  No darts along \(\shortestpath(v, x)\), \(\shortestpath(v, x')\), \(\shortestpath(u, y)\), or
  \(\shortestpath(u, y')\) have a non-zero face part to their perturbed costs.
  Therefore, \(\coflow(\slack_\parameter(x \arcto y)) = \coflow(\cost'(x \arcto y) - \parameter)\),
  and \(\coflow(\slack_\parameter(x' \arcto y')) = \coflow(\cost'(x' \arcto y') - \parameter)\).
  That perturbation term is lower for whichever dart of \(x \arcto y\) and \(x' \arcto y'\) lies
  closer to the leaf of \(\cotree\).
\end{proof}

The intuition provided by Lemma~\ref{lem:leafmost_explanation} will prove crucial in the next
section where we describe our linear-time multiple-source shortest paths algorithm for small integer
costs in embedded graphs of constant genus.
While there does not appear to be an intuitive definition of leafmost in higher genus embedded
graphs, we \emph{do} have an equivalent perturbation scheme already defined for that setting.
Our goal will be to efficiently find the unique pivots guaranteed by our use of the scheme.

\paragraph*{Remark}
We briefly remark that our perturbation scheme also explains the use of the leafmost rule in
efficient \(s,t\)-maximum flow algorithms based on parametric shortest paths in the dual
graph~\cite{bk-amfdp-09,e-mfpsp-10,ek-lafms-13}.
For this problem, the drainage is actually computed in the \emph{primal} graph using \(t\) as the
sink.

\end{longversion}
\section{Linear-time Multiple-Source Shortest Paths for Small Integer Costs}
\label{sec:linear-time}

Let~\(G = (V,E,F)\) be an embedded graph of size~\(n\) and constant genus~\(g\) and let \(r \in F\) be a face
of \(G\).
Let \(\cost : \dartsof{E} \to \R\) be a dart cost function where each \(\cost(d)\) is a \emph{small
non-negative integer}.
Let \(L\) be the sum of the dart costs.
We now describe an algorithm for computing multiple-source shortest paths in this setting that runs
in \(O(g(g n \log g + L))\) time.
Like Eisenstat and Klein~\cite{ek-lafms-13}, we primarily focus on computing an initial shortest
path tree and then performing the pivots needed to move the source of the tree around~\(r\).
We will address computing shortest path distances later.

\begin{longversion}
Let \(\cost' : \dartsof{E} \to \R \times \N^{2g + 1}\) be the perturbed costs computed using the
\emph{modified} scheme presented in Section~\ref{subsec:leafmost_modify}.
Namely, \(\cost'\) is computed using a drainage \(\coflow\) with sink \(r\), and its \(2g + 1\)
perturbation terms are based only on homology signatures and \(\coflow\).
\end{longversion}
\begin{shortversion}
Our algorithm implicitly maintains holiest shortest path trees according to a slight modification of our
lexicographic perturbation scheme.
First, the drainage \(\coflow\) used to define the perturbed costs is required to use \(r\) as its
sink.
Second, we forgo the \(+1\) added as the second component to each dart's cost, instead only using
\(2g + 1\) integers based on the homology signatures and \(\coflow\) to perturb each dart's cost.
Let \(\cost' : \dartsof{E} \to \R \times \N^{2g + 1}\) be the resulting perturbed costs.
In order to guarantee shortest paths are still well defined and unique according to \(\cost'\), we
must make the small assumption that there are no cycles with zero unperturbed cost.%
\footnote{Our assumption does not appear necessary for guaranteeing correctness or efficiency of
Eisenstat and Klein's~\cite{ek-lafms-13} planar graph multiple-source shortest paths algorithm.
Unlike our generalization, their algorithm may begin with an \emph{arbitrary} shortest path tree.
In this case, they do not maintain a holiest shortest path tree, but their leafmost pivots do
provide the weaker but sufficient guarantee that shortest paths to a common endpoint do not cross.}
\end{shortversion}
Let \(T\) be the shortest path tree maintained while running Section~\ref{sec:MSSP}'s
multiple-source shortest paths algorithm along face \(r \in F\).
Suppose we are moving the source of \(T\) from vertex \(u\) to vertex \(v\), and let \(v \arcto u =
r \fencewith q\).
Let \(\cost'_\parameter\) be the cost function parameterized by \(\parameter\) as described in
Section~\ref{sec:MSSP}.

For planar graphs, Eisenstat and Klein~\cite{ek-lafms-13} explicitly maintain the slacks of the
darts based on the original cost function \(\cost\).
In other words, they maintain the first component of the slacks according to \(\cost'_\lambda\).
We will refer to these values as the \EMPH{original slacks} and slacks defined using every component
of \(\cost'_\lambda\) as the \EMPH{perturbed slacks}.
Recall, the edges outside of \(T\) form a spanning tree \(\cotree\) in \(G^*\) if \(G\) is planar.
To find pivots, Eisenstat and Klein walk a pointer up the directed dual path from \(q\) to \(r\) in
\(\cotree\).
When they find a dart \(x \arcto y\) with \(0\) original slack, they perform a pivot by adding \(x
\arcto y\) to \(T\) and removing the old predecessor dart \(w \arcto y\) from \(T\).
\begin{longversion}
According to Lemma~\ref{lem:leafmost_explanation}, this pivot is exactly the pivot \emph{required}
by using the perturbed slacks.
\end{longversion}
After performing the pivot, Eisenstat and Klein reset their pointer to continue the walk from the
first dart that only appears in the new \(q\) to \(r\) path in \(\cotree\).
If their walk reaches \(r\), then every dart along the current \(q\) to \(r\) path has positive
original slack.
They decrement the unperturbed slack values for every dart in the path, increment the unperturbed
slacks for those darts' reversals, and start a new walk from \(q\).

As described above, their algorithm does not appear to generalize cleanly to higher genus surfaces.
The dual complement to \(T\) is no longer a spanning tree, so it is not clear what route a pointer
should take.
In particular, it is completely unclear what the leafmost dart of \(0\) original slack should be,
especially when the set of active darts may not even be a connected subgraph of \(G^*\).
While leafmost may not cleanly generalize, however, our perturbation scheme is already defined for
higher genus embeddings.

\subsection{Preliminary Observations}
\label{subsec:linear-time_observations}
We now present some useful observations, slightly modifying conventions and terminology from
Erickson and Har-Peled~\cite{eh-ocsd-04} and Cabello \etal~\cite{cce-msspe-13}.
Let \(\cutgraph\) be the set of edges complementary to \(T\).
We refer to \(\cutgraph\) as a \EMPH{cut graph};
removing the dual embedding of \(X\) cuts the underlying surface \(\Sigma\) into a disk.
Let \(\core\) be the \EMPH{2-core} of \(\cutgraph\) obtained by repeatedly removing vertices of
degree \(1\) except for \(q\) and \(r\) until no others remain.
We refer to the dual forest of removed edges as the \EMPH{hair} \(\hair\) of \(\cutgraph\).
The 2-core \(\core\) consists of up to \(6g + 1\) dual paths \(\cutpath_1, \cutpath_2, \dots\) that
meet at up to \(4g + 2\) dual vertices (see Erickson and Har-Peled~\cite[Lemma 4.2]{eh-ocsd-04}).
Each of these \(4g + 2\) dual vertices except possibly \(q\) and \(r\) has degree at least \(3\).
We refer to the (unoriented) dual paths \(\cutpath_1, \cutpath_2, \dots\) as \EMPH{cut paths}.
We let \(\cutpath_q\) denote the (possibly trivial) maximal subpath of \(\core\) with one
endpoint on \(q\) that contains at most one vertex that is either degree \(3\) or equal to \(r\).
We let~\(\cutdart_q\) denote the orientation of~\(\cutpath_q\) that begins with~\(q\).

The 2-core \(\core\) of \(\cutgraph\) is useful, because there is a subset of oriented cut paths
containing precisely the set of active darts~\cite[Section 4.2]{cce-msspe-13}.
In particular, let~\(\corewalk\) denote the \EMPH{blue boundary walk}, the clockwise facial walk
along dual darts of \(\core \cup \set{rq}\) that includes every primal dart with a blue tail in
\(G\) \emph{except for} \(r \fencewith q = v \arcto u\).
A dart~\(d\) is active if and only if \(d\) is in~\(\corewalk\) but \(\rev(d)\) \emph{is not}.
Either every dart in a given oriented cut path has this property, or none of them do.
See Figure~\ref{fig:core}.

\begin{figure}[t]
\centering
\begin{longversion}
\includegraphics[scale=0.4]{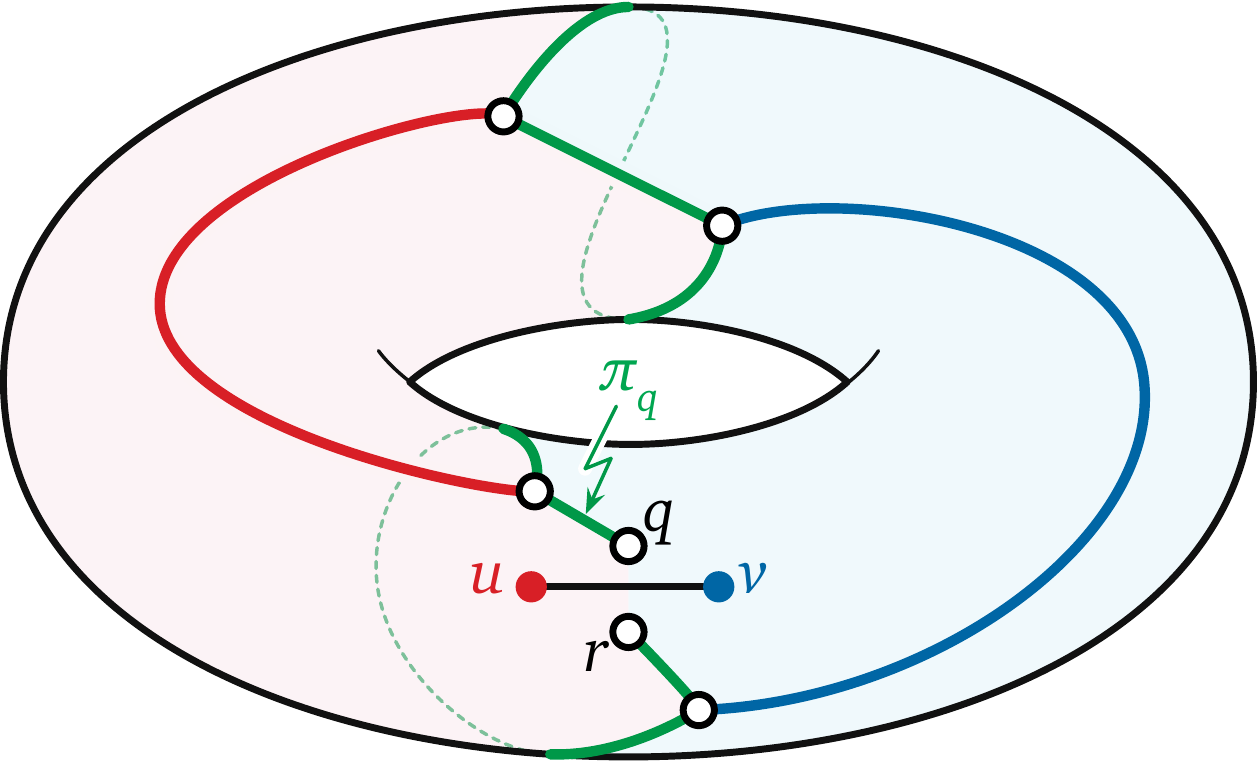}
\end{longversion}
\begin{shortversion}
\includegraphics[width=0.8\columnwidth]{Fig/torus}
\end{shortversion}
\caption{The 2-core \(\core\) of a cut graph \(\cutgraph\) on the torus.
Cut paths between red vertices are red, cut paths between blue vertices are blue, and cut paths containing active darts are green.}
\label{fig:core}
\end{figure}

An edge \(e \notin T\) is in \(\core\) if and only if it forms a non-contractible fundamental cycle
with \(T\) with respect to \(\Sigma - r\) or it forms a contractible fundamental cycle with \(T\)
and lies on \(\cutpath_q\) (see Cabello \etal~\cite[Section 4.1]{cce-msspe-13}).
We have the following lemmas, the first three of which follow immediately from the symmetry present
in the definitions of slack and our perturbation scheme.
\begin{shortversion}
We again refer the reader to the full version of this paper for proofs of the remaining lemmas.
\end{shortversion}

\begin{lemma}
  \label{lem:zero_slack}
  Let \(d\) be any dart of~\(T\).
  We have
\begin{longversion}
  \[\homsig{\slack_\parameter(\rev(d))} \tuplecat \coflow(\slack_\parameter(\rev(d))) =
  \homsig{\slack_\parameter(d)} \tuplecat \coflow(\slack_\parameter(d)) = 0.\]
\end{longversion}
\begin{shortversion}
  \(\homsig{\slack_\parameter(\rev(d))}\)\linebreak
  \(\tuplecat \coflow(\slack_\parameter(\rev(d))) = \homsig{\slack_\parameter(d)} \tuplecat
  \coflow(\slack_\parameter(d)) = 0.\)
\end{shortversion}
\end{lemma}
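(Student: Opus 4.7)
The plan is to unpack the definition of slack in the two cases and exploit the antisymmetry of the homology signatures and of the drainage $\coflow$ on darts.

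First I would handle $d \in T$ directly. Since $T$ is a holiest shortest path tree rooted at $v$, the holiest $v$-to-$\head(d)$ path is the concatenation of the holiest $v$-to-$\tail(d)$ path with $d$. Hence $\dist_{\cost'_\parameter}(v,\head(d)) = \dist_{\cost'_\parameter}(v,\tail(d)) + \cost'_\parameter(d)$, so $\slack_\parameter(d) = 0$ in every component, which takes care of both the homology and face parts for $d$.

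For $\rev(d)$, I would substitute the same distance identity into the definition of slack to obtain
\[
\slack_\parameter(\rev(d)) = \cost'_\parameter(d) + \cost'_\parameter(\rev(d)).
\]
Now I would invoke the antisymmetry built into the perturbation vector: by the definition of homology signature on darts, $\homsig{\rev(d)} = -\homsig{d}$, so $\homsig{\slack_\parameter(\rev(d))} = \homsig{d} + \homsig{\rev(d)} = 0$. Similarly, since $\coflow$ is by definition a dual flow, $\coflow(\rev(d)) = -\coflow(d)$, so $\coflow(\slack_\parameter(\rev(d))) = \coflow(d) + \coflow(\rev(d)) = 0$.

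There is no real obstacle here; the lemma is essentially a bookkeeping consequence of two facts that are immediate from Section~\ref{sec:scheme}: that the perturbation components added on top of the scalar cost are antisymmetric in the dart, and that darts of $T$ have zero slack against their own shortest path tree. The only thing to be careful about is that we are working with the modified perturbation of Section~\ref{sec:linear-time} (without the extra $+1$ coordinate), but since that modification only changes the scalar part and leaves the homology signatures and drainage coordinate untouched, the antisymmetry argument still applies verbatim.
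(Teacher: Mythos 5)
Your proof is correct and matches the paper's intent exactly; the paper itself asserts that Lemma~\ref{lem:zero_slack} ``follows immediately from the symmetry present in the definitions of slack and our perturbation scheme,'' which is precisely the two facts you isolate (tenseness of tree darts, and antisymmetry of \(\homsig{\cdot}\) and \(\coflow\) on darts). The only unspoken caveat, which the paper also glosses over, is the dart \(v \arcto u\) whose entire cost vector has been replaced by the variable \(\parameter\): for that one dart the cancellation \(\cost'_\parameter(d) + \cost'_\parameter(\rev(d))\) need not vanish in the homology or face coordinates, but the lemma is never invoked on it, so your argument covers every case that actually arises.
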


\begin{lemma}
  \label{lem:slack_homology}
  Let \(d\) be an active dart.
\begin{longversion}
  We have \(\homsig{\slack_\parameter(d)} = \homsig{\fundcycle(T, d)} + \homsig{\cost'(v \arcto u)}
  - \homsig{\parameter}\).
\end{longversion}
\begin{shortversion}
  We have \(\homsig{\slack_\parameter(d)}\)\linebreak
  \(= \homsig{\fundcycle(T, d)} + \homsig{\cost'(v \arcto u)}
  - \homsig{\parameter}\).
\end{shortversion}
\end{lemma}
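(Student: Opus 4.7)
The plan is to prove the equation by directly expanding both sides using linearity of homology signatures. I would rely on two previously established facts: an active dart $d = x \arcto y$ has $x$ blue and $y$ red (as noted in Section~\ref{sec:MSSP}), and the homology components of $\cost'$ applied to any flow equal that flow's homology signature (Lemma~\ref{lem:cost_components}).

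First, I would expand $\slack_\parameter(d) = \dist_\parameter(v,x) + \cost'_\parameter(d) - \dist_\parameter(v,y)$ according to how each tree path uses the dart $v \arcto u$. Because $x$ is blue, the $v$-to-$x$ path in $T$ avoids $v \arcto u$, so $\dist_\parameter(v,x) = \cost'(\shortestpath_T(v,x))$. Because $y$ is red, the $v$-to-$y$ path in $T$ decomposes as $(v \arcto u) \concat \shortestpath_T(u,y)$, so $\dist_\parameter(v,y) = \parameter + \cost'(\shortestpath_T(u,y))$. Since $d \neq v \arcto u$, we also have $\cost'_\parameter(d) = \cost'(d)$. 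Taking homology parts via Lemma~\ref{lem:cost_components} then yields
\[
\homsig{\slack_\parameter(d)} = \homsig{\shortestpath_T(v,x)} + \homsig{d} - \homsig{\parameter} - \homsig{\shortestpath_T(u,y)}.
\]

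Next, I would compute $\homsig{\fundcycle(T,d)}$ by noting that the oriented fundamental cycle is homologous to the closed walk $\shortestpath_T(v,x) \concat d \concat \reverse{\shortestpath_T(v,y)}$, since the portions of the two tree paths between $v$ and the LCA of $x$ and $y$ cancel once one of them is reversed. Splitting $\shortestpath_T(v,y) = (v \arcto u) \concat \shortestpath_T(u,y)$ as before gives
\[
\homsig{\fundcycle(T,d)} = \homsig{\shortestpath_T(v,x)} + \homsig{d} - \homsig{v \arcto u} - \homsig{\shortestpath_T(u,y)}.
\]
Finally, Lemma~\ref{lem:cost_components} applied to the unit flow on $v \arcto u$ yields $\homsig{\cost'(v \arcto u)} = \homsig{v \arcto u}$, so adding $\homsig{\cost'(v \arcto u)} - \homsig{\parameter}$ to $\homsig{\fundcycle(T,d)}$ cancels the $-\homsig{v \arcto u}$ term and matches $\homsig{\slack_\parameter(d)}$ exactly. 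The whole argument is a mechanical expansion; the one point demanding care is the blue/red bookkeeping that ensures the tree paths decompose correctly across $v \arcto u$ and that the sign conventions line up between the slack expression and the oriented fundamental cycle.
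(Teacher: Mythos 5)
Your proof is correct and is essentially the expansion the paper alludes to when it says these lemmas ``follow immediately from the symmetry present in the definitions of slack and our perturbation scheme''; the paper gives no explicit proof, so your mechanical decomposition via blue/red bookkeeping and Lemma~\ref{lem:cost_components} is exactly the intended argument. One small simplification you could note: since \(x\) is blue and \(y\) is red, their LCA in \(T\) is \(v\) itself (the \(v\)-to-\(y\) path's second vertex is \(u\), which the \(v\)-to-\(x\) path never visits), so \(\fundcycle(T,d)\) equals the closed walk \(\shortestpath_T(v,x) \concat d \concat \reverse{\shortestpath_T(v,y)}\) outright, with no cancellation needed.
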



\begin{lemma}
  \label{lem:seek_homology}
  Let \(D = \seq{d_1, d_2, \dots}\) be a sequence of the active darts in increasing lexicographic
  order of \(\homsig{\slack_\parameter(d_i)}\).
  Sequence~\(D\) is in increasing lexicographic order of \(\homsig{\fundcycle(T, d_i)}\) as well.
\end{lemma}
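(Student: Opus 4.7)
The plan is to derive this lemma as an immediate corollary of Lemma~\ref{lem:slack_homology}. That lemma states that for every active dart $d$,
\[
  \homsig{\slack_\parameter(d)} = \homsig{\fundcycle(T,d)} + \homsig{\cost'(v\arcto u)} - \homsig{\parameter}.
\]
The crucial observation is that the two correction terms $\homsig{\cost'(v\arcto u)}$ and $\homsig{\parameter}$ depend only on the current iteration and the current value of the parameter, not on the choice of active dart. So as $d$ ranges over all active darts, the vector $\homsig{\slack_\parameter(d)}$ differs from $\homsig{\fundcycle(T,d)}$ by a single fixed $2g$-dimensional vector $c := \homsig{\cost'(v\arcto u)} - \homsig{\parameter}$.

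I would then invoke the elementary fact that lexicographic order on $\Z^{2g}$ (or indeed any totally ordered abelian group) is translation-invariant: for any fixed vector $c$ and any $a,b$, we have $a \le b$ lexicographically if and only if $a + c \le b + c$ lexicographically. Applying this with $a = \homsig{\fundcycle(T,d_i)}$ and $b = \homsig{\fundcycle(T,d_j)}$ shows that the sequence $D$ is sorted by $\homsig{\slack_\parameter(d_i)}$ exactly when it is sorted by $\homsig{\fundcycle(T,d_i)}$.

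There is essentially no obstacle here; the entire content of the lemma was packaged into Lemma~\ref{lem:slack_homology}, which isolates the dart-dependent homology contribution. The only thing worth verifying carefully is the translation-invariance claim for the specific notion of lexicographic order used in the paper, but this is completely standard and follows by induction on the first coordinate in which $a$ and $b$ differ. So the proof will be a short paragraph invoking Lemma~\ref{lem:slack_homology} and noting that the two non-dart-dependent terms cancel in any pairwise comparison $\homsig{\slack_\parameter(d_i)}$ versus $\homsig{\slack_\parameter(d_j)}$.
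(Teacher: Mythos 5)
Your proposal is correct and matches the paper's intent: the paper does not give an explicit proof, stating only that this lemma (along with Lemmas~\ref{lem:zero_slack} and~\ref{lem:slack_homology}) ``follows immediately from the symmetry present in the definitions of slack and our perturbation scheme,'' and your derivation via Lemma~\ref{lem:slack_homology} plus translation-invariance of lexicographic order is exactly the intended immediate implication. Nothing further is needed.
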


\begin{lemma}
  \label{lem:change_homology}
  Suppose a pivot inserts dart~\(d^+\) into the holiest path tree~\(T\) while removing dart~\(d^-\).
  Let \(d = x \arcto y\) be any dart where~\(x\) turns from red to blue while~\(y\) does not change
  colors.
  We have~\(\homsig{\fundcycle(T - d^- + d^+, d)} = \homsig{\fundcycle(T, d)} +
  \homsig{\fundcycle(T, d^+)}\).
\end{lemma}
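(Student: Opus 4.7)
The plan is to prove Lemma~\ref{lem:change_homology} by first establishing the sharper $1$-chain identity $\fundcycle(T', d) = \fundcycle(T, d) + \fundcycle(T, d^+)$ and then applying the homology signature map, which is linear. Write $T' = T - d^- + d^+$ and let $d^+ = x^+ \arcto y^+$. Since $d^+$ is the active dart being pivoted in and $d^-$ is the predecessor of $y^+$ in $T$, the only structural change from $T$ to $T'$ is that $y^+$'s predecessor pointer switches from $d^-$ to $d^+$, and the vertices that turn from red to blue in this pivot are exactly the descendants of $y^+$ in $T$. The hypothesis on $d = x \arcto y$ therefore places $x$ inside this subtree and $y$ strictly outside it.

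The first step is to observe that $\fundcycle(T', d) - \fundcycle(T, d)$ is a cycle supported on the edges of $T + d^+$. Both fundamental cycles contain $d$ with coefficient $+1$, so $d$ cancels in the difference, and every other dart appearing in either cycle lies in $T \cup T' = T + d^+$. A tree plus one edge contains a unique cycle, namely $\fundcycle(T, d^+)$, so the difference must equal an integer multiple of $\fundcycle(T, d^+)$.

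The second step is to fix the multiplier by examining the coefficient of $d^+$. The $T'$-path from $y$ to $x$ that completes $\fundcycle(T', d)$ must enter the subtree rooted at $y^+$, and the only dart entering that subtree in $T'$ is $d^+$, since $d^-$ has been removed and $y^+$'s new predecessor is $x^+$ via $d^+$. So $d^+$ appears in $\fundcycle(T', d)$ with coefficient $+1$, while the coefficient of $d^+$ in $\fundcycle(T, d)$ is $0$ (with the degenerate case $d = \rev(d^+)$ accounted for by the sign flip $\fundcycle(T, d) = -\fundcycle(T, d^+)$ as an antisymmetric chain). Hence the multiplier is $1$, giving the $1$-chain identity, and applying the homology signature map yields the claim.

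The main obstacle is the structural bookkeeping that characterizes the vertices changing color as the descendants of $y^+$ in $T$ and confirms that $d^+$ is the unique dart entering that subtree in $T'$. Both facts follow from the local nature of a single pivot, which reroutes only $y^+$'s parent pointer, but they are the geometric heart of the argument; once they are in hand, the cycle-difference identity and the coefficient check are routine.
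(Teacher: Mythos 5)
Your proof is correct and takes essentially the same route as the paper's: both rest on the chain-level identity $\fundcycle(T',d)=\fundcycle(T,d)+\fundcycle(T,d^+)$, to which one then applies the (linear) homology-signature map. The paper states this in one line, writing it as $\homsig{\fundcycle(T',d)} = \homsig{\fundcycle(T,d^+)} + \homsig{\fundcycle(T,d)} - \homsig{\fundcycle(T,d^-)}$ and then dropping the last term since $d^-\in T$; you flesh out exactly why that identity holds (difference supported on $T+d^+$, hence a multiple of $\fundcycle(T,d^+)$, with the multiplier pinned to $+1$ by the $d^+$-coefficient check), which is the content the paper leaves implicit. One small loose end: you explicitly treat the degenerate case $d=\rev(d^+)$, but $d=\rev(d^-)$ can also arise (its tail $y^+$ turns blue while its head $x^-$ does not); there $\fundcycle(T,d)=0$ and $\fundcycle(T',d)=\fundcycle(T,d^+)$, so the identity still holds, though your opening remark that "$d$ cancels in the difference" is not literally true in that case. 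It does not affect the conclusion since the support argument and the coefficient check still go through.
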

\begin{longversion}
\begin{proof}
  We have~\(\homsig{\fundcycle(T - d^- + d^+, d)} = \homsig{\fundcycle(T, d^+)} +
  \homsig{\fundcycle(T, d)} - \homsig{\fundcycle(T, d^-)}\).
  However,~\(\homsig{\fundcycle(T, d^-)} = 0\).
\end{proof}
\end{longversion}

\begin{lemma}
  \label{lem:fundamental_cycle_sigs}
  Fix an oriented cut path~\(\cutdart\), and let \(d_1\) and \(d_2\) be darts of~\(\cutdart\).
  Cycles \(\cycle_1 = \fundcycle(T, d_1)\) and \(\cycle_2 = \fundcycle(T, d_2)\) are homologous.
\end{lemma}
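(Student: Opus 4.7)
My plan is to prove \([\cycle_1] = [\cycle_2]\) and invoke Lemma~\ref{lem:signatures}, which guarantees that two cycles are homologous if and only if their homology signatures agree and, crucially, that this equality is independent of the particular cohomology basis used to define signatures. I am therefore free to work with whatever basis makes the computation easiest.

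The basis I will use consists of \(2g\) linearly independent dual cycles supported entirely inside the 2-core~\(\core\) of \(\cutgraph\). Such a basis exists because the iterative leaf-pruning that produces \(\core\) from \(\cutgraph\) destroys no cycles, so the cycle space of \(\core\) has the same dimension~\(2g\) as that of \(\cutgraph\); and this cycle space maps isomorphically onto \(H_1(\Sigma;\Z)\) because \(\cutgraph\) is a cut graph whose dual complement is a disk. Now write \(\cycle_i = d_i + \tau_i\), where \(\tau_i \subseteq T\) is a tree path. Every dart of \(\tau_i\) lies in \(T\) and therefore appears in no basis cycle (all of which are contained in \(\cutgraph = E \setminus T\)), so \([\cycle_i] = [d_i]\) and it suffices to verify \([d_1] = [d_2]\).

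Fix any basis cycle \(\alpha\) supported in \(\core\). Every interior dual vertex of the oriented cut path \(\cutdart\) has \(\core\)-degree exactly~\(2\), so whenever \(\alpha\) enters such a vertex along one incident \(\core\)-edge, it must leave along the other. Consequently \(\alpha\) either uses every dart of \(\cutdart\), uses every dart of the reverse of \(\cutdart\), or uses none of them; in all three cases every dart of \(\cutdart\) contributes the same signed indicator to the \(\alpha\)-component of \([\cdot]\), giving \([d_1] = [d_2]\) and hence \([\cycle_1] = [\cycle_2]\).

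The main obstacle I expect is the topological setup in the second paragraph, namely verifying that \(\core\) really does carry a basis of \(H_1(\Sigma;\Z)\); once that is in hand, the cut-path argument is purely combinatorial, with the only subtlety being consistent sign-tracking when \(\alpha\) traverses the cut path in its reverse orientation.
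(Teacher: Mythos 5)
Your argument is correct and matches the paper's proof in all essentials: both choose a signature basis of dual cycles lying entirely inside the cut graph, both use the degree-two structure of interior cut-path vertices to argue that each basis cycle traverses a cut path all-or-nothing (hence $[d_1]=[d_2]$), and both kill the tree-path contribution to get $[\cycle_1]=[\cycle_2]$ before invoking Lemma~\ref{lem:signatures}. The only cosmetic difference is how the basis is produced: the paper picks a tree-cotree decomposition $(\tree,\leftovers,\cotree)$ with $\tree$ the current shortest-path tree, so $\leftovers\cup\cotree=\cutgraph$ and the dual fundamental cycles automatically lie in $\cutgraph$, whereas you derive its existence abstractly from the isomorphism between the cycle space of $\core$ and $H_1(\Sigma;\Z)$.
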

\begin{longversion}
\begin{proof}
  Let \((T, \leftovers, \cotree)\) be a tree-cotree decomposition of \(G\) where \(\cutgraph\)
  contains all edges of \(\cotree\) and \(\leftovers\).
  Without loss of generality, assume the homology signatures \(\homsig{\cdot}\) are defined using
  \((T, \leftovers, \cotree)\).
  For any set of dual cycles \(\gamma^*_1, \gamma^*_2, \dots\) in \(\cutgraph\) and edge \(e\) in
  \(\cutgraph\), the set of cycles passing through \(e\) is determined by which cut path \(e\)
  belongs to.
  Therefore \([d_1] = [d_2]\).
  Because \([d] = 0\) for every dart in \(T\) and its reversal, we have \([\cycle_1] =
  [\cycle_2]\).
  Lemma~\ref{lem:signatures} implies \(\cycle_1\) and \(\cycle_2\) are homologous.
\end{proof}
\end{longversion}

\begin{lemma}
  \label{lem:cut_path_order}
  Let \(\cutdarts = \set{\cutdart_{i_1}, \cutdart_{i_2}, \dots}\) be the set of oriented cut paths
  containing darts that are \(\emph{active}\) and whose perturbed slacks have equal homology part.
  Finally, let \(D = \seq{d_1, d_2, \dots}\) be the sequence of darts within these oriented cut
  paths in increasing order of perturbed slack.
  We have \(D\) is a subsequence of the blue boundary walk~\(\corewalk\).
  In particular, the darts within any one oriented cut path \(\cutpath \in \cutpaths\) appear as a
  consecutive subsequence of \(D\).
\end{lemma}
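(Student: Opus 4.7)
My plan is to reduce the ordering in $D$ to a lexicographic ordering on (scalar slack, face part) of perturbed slacks, and then to exhibit monotonicity of this ordering along $\corewalk$, first within a single cut path and then across distinct cut paths in $\cutdarts$.

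First I would invoke Lemmas~\ref{lem:slack_homology} and~\ref{lem:fundamental_cycle_sigs} to observe that every active dart on a single oriented cut path shares the same fundamental-cycle homology with $T$, so (after adding the common shift $\homsig{\cost'(v \arcto u)} - \homsig{\parameter}$) every active dart on that cut path has the same homology part of perturbed slack. By hypothesis this common homology part is also equal across every cut path in $\cutdarts$, so by Lemma~\ref{lem:zero_slack} the ordering in $D$ reduces to the lexicographic order on the pair (scalar slack, face part), where the scalar slack equals $\cost(\fundcycle(T,d))$ and the face part equals $\coflow(\fundcycle(T,d))$.

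Next I would analyze the slack change between consecutive active darts. Fix an oriented cut path $\cutdart \in \cutdarts$ and let $d_1, d_2$ be two active darts of $\cutdart$ appearing consecutively in $\corewalk$. By Lemma~\ref{lem:fundamental_cycle_sigs}, $\fundcycle(T, d_1)$ and $\fundcycle(T, d_2)$ are homologous, so their difference is the boundary of some set of faces $F^\ast$ lying on one side of the segment of $\cutdart$ between $d_1$ and $d_2$. Since $d_1$ and $d_2$ are consecutive active darts, $F^\ast$ consists of the single face on the red side of $\cutdart$ between them; applying Lemma~\ref{lem:drainages} to $F^\ast$, the face part of the slack strictly changes by $\pm 1$ as we move from $d_1$ to $d_2$. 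Orienting $\cutdart$ in the direction of $\corewalk$, the sign is positive because $r$ lies on the opposite side of the cut graph. Meanwhile the scalar slack changes by $\cost(\partial F^\ast) \geq 0$ because all unperturbed costs are non-negative. Hence perturbed slacks are strictly increasing along each cut path in $\corewalk$-order, and darts of $\cutpath \in \cutdarts$ appear in $D$ in exactly the order they appear in $\corewalk$.

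For the relative order between distinct cut paths in $\cutdarts$ I would iterate the same argument, using that as we sweep $d$ forward along $\corewalk$, the fundamental cycle $\fundcycle(T,d)$ progressively swallows a well-defined face of $\Sigma \setminus \core$ at each consecutive-active-dart transition, with sign dictated by the drainage having sink $r$. Because $\corewalk$ visits each primal face at most once per traversal of its boundary, this ``sweep'' is monotone, so the face part of the slack strictly increases along $\corewalk$ globally, not just within a single cut path. Combined with non-negativity of scalar slack differences, this yields the desired subsequence property, and the ``in particular'' clause follows because the darts of a given $\cutpath$ form a contiguous segment of $\corewalk$ by construction of the 2-core.

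The main obstacle will be the cross-cut-path step: carefully identifying which face of $F \setminus \{r\}$ is added to the enclosed region each time we cross from one cut path to another in $\corewalk$, and verifying that this bookkeeping really does produce a monotone sequence of face counts (rather than a merely signed one). This likely hinges on the fact that the surface minus $\core$ is a disk, so the blue boundary walk bounds a simply connected region in which $\fundcycle(T,d)$ can be viewed as partitioning the faces in a well-ordered way according to their position along $\corewalk$ relative to $r$.
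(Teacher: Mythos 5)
Your overall strategy---equal homology parts imply a certain difference cycle is null-homologous, then invoke Lemma~\ref{lem:drainages} on the bounded face set to get a strict sign---is the right idea and matches the paper in spirit. But the paper avoids the two main difficulties you run into by making a different decomposition: rather than comparing \emph{consecutive} darts and iterating, it compares \emph{any} two active darts $d_{i_1}$, $d_{i_2}$ with $d_{i_1}$ earlier in $\corewalk$ (same or different cut paths) at once. It forms the canonical walks $P_i = \shortestpath(v,x_i) \concat d_{i_j} \concat \rev(\shortestpath(u,y_i))$, extracts a simple cycle $\gamma$ from their non-overlapping parts, shows $[\gamma]=0$, and then---this is the key step your sketch lacks---shows $r$ lies in the face set $F'$ bounded by $\gamma$, because the blue boundary walk crosses $\gamma$ only at $\rev(d_{i_1})$ and $d_{i_2}$ before terminating at $r$. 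That single observation establishes $\coflow(\cost'(\gamma)) > 0$ via Lemma~\ref{lem:drainages} and handles the within-cut-path and cross-cut-path cases uniformly, so there is no iteration and no separate ``cross-cut-path'' step to worry about.

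Two of your local claims are incorrect and a third is unnecessary. First, the symmetric difference of the fundamental cycles of two consecutive darts on a cut path does \emph{not} bound a single face: it also encloses every face in the hair $\hair$ of $\cutgraph$ hanging off the shared dual vertex (the subtrees pruned when forming the 2-core), so the face part changes by $\pm\lvert F^\ast\rvert$, not $\pm 1$, and your ``progressively swallowing a well-defined face'' picture is not accurate. Second, the claim that the scalar slack equals $\cost(\fundcycle(T,d))$ is false in general (they differ whenever the lowest common ancestor in $T$ of $d$'s endpoints is not $v$), and your appeal to Lemma~\ref{lem:zero_slack} there does not apply: that lemma concerns darts of $T$ and their reversals, whereas active darts lie in the complement of $T$. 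Fortunately the paper's proof never argues about the scalar component at all---only the face part---so that whole thread can be dropped. The gap you flag yourself (carrying monotonicity across cut-path boundaries) is precisely what the paper's one-shot two-dart construction makes moot; if you insist on the consecutive-dart route, you must define ``consecutive'' across cut-path transitions and re-establish the $r\in F'$ sign claim at each one, which is strictly harder than the direct argument.
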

\begin{longversion}
\begin{proof}
  We assume \(D\) contains at least two darts; otherwise, the proof is trivial.
  As in the proof of Lemma~\ref{lem:leafmost_explanation}, we assume without loss of generality that
  the drainage~\(\coflow\) used to define~\(\cost'\) is defined using cotree~\(\cotree\) of some
  tree-cotree decomposition \((T, \leftovers, \cotree)\) where \(\cutgraph = \leftovers \cup
  \cotree\).
  In particular, for each active dart \(d\), we have \(\coflow(\slack_\parameter(d)) =
  \coflow(\cost'(d)) - \coflow(\parameter)\).

  Let~\(d_{i_1} = x_1 \arcto y_1\) and~\(d_{i_2} = x_2 \arcto y_2\) be two distinct active darts
  in~\(D\) where~\(\corewalk\) includes \(d_{i_1}\) before \(d_{i_2}\).
  Let \(\walk_1 = \shortestpath(v, x_1) \concat x_1 \arcto y_1 \concat \rev(\shortestpath(u, y_1))\)
  and \(\walk_2 = \shortestpath(v, x_2) \concat x_2 \arcto y_2 \concat \rev(\shortestpath(u,
  y_2))\).
  Let \(P_1'\) be the maximal consecutive subsequence of \(P_1\) that does not contain darts of
  \(P_2\) and define \(P_2'\) similarly.
  Finally, let \(\cycle = P_2' \concat \rev(P_1')\).

  By assumption, \(\homsig{\fundcycle(T, d_{i_1})} = \homsig{\fundcycle(T, d_{i_2})}\).
  Therefore, \(\homsig{\cycle} = 0\).
  Also, \(\cycle\) is simple and non-empty.
  Together with Lemma~\ref{lem:signatures} these facts imply~\(\cycle\)'s dual darts enter a strict
  subset of faces~\(F' \subseteq F\).
  Blue boundary walk~\(\corewalk\) crosses~\(\cycle\) only at~\(\rev(d_{i_1})\) and~\(d_{i_2}\)
  before ending at~\(r\), implying~\(r \in F'\).
  Lemmas~\ref{lem:cost_components} and~\ref{lem:signatures} imply \(
  \coflow(\slack_\parameter(d_{i_2})) - \coflow(\slack_\parameter(d_{i_1})) =
  \coflow(\cost'(d_{i_2})) - \coflow(\cost'(d_{i_1})) = \coflow(\cost'(\cycle)) > 0\).
\end{proof}
\end{longversion}

\begin{lemma}
  \label{lem:final_parameter}
  Suppose \(\parameter_0 \tuplecat \homsig{\parameter} = \cost'(v \arcto u)_0 \tuplecat
  \homsig{\cost'(v \arcto u)}\) and \(\coflow(\parameter) \leq \coflow(\cost'(v \arcto u))\).
  Then, there are no more pivots to perform in the current iteration.
  In particular, for any dart~\(d\) with~\(\slack_\parameter(d)_0 = 0\), we
  have~\(\coflow(\slack_\parameter(d)) > \coflow(\cost'(v \arcto u)) - \coflow(\parameter)\).
\end{lemma}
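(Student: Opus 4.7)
My plan is to derive a face-part analog of Lemma~\ref{lem:slack_homology} and then appeal to Lemma~\ref{lem:drainages} to extract a strictly positive lower bound on the face part of every relevant slack. The ``in particular'' claim is the key one, since pivots only involve active darts becoming lexicographically tense, and once I show the face-part inequality for each active dart it immediately prevents any further pivot as $\coflow(\parameter)$ increases toward $\coflow(\cost'(v \arcto u))$.

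First I would establish the identity $\coflow(\slack_\parameter(d)) = \coflow(\fundcycle(T, d)) + \coflow(\cost'(v \arcto u)) - \coflow(\parameter)$ for every active dart $d = x \arcto y$. Because $x$ is blue, $\shortestpath_\parameter(v, x)$ lies in $T$ and avoids $v \arcto u$, while $\shortestpath_\parameter(v, y) = (v \arcto u) \concat \shortestpath_\parameter(u, y)$ because $y$ is red. Expanding the slack definition, using antisymmetry $\coflow(u \arcto v) = -\coflow(v \arcto u)$ of the drainage, and recognizing that $\fundcycle(T, d) = \shortestpath_\parameter(v, x) \concat d \concat \rev(\shortestpath_\parameter(u, y)) \concat (u \arcto v)$, the identity falls out by direct computation. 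Combined with Lemma~\ref{lem:slack_homology}, the hypothesis $\homsig{\parameter} = \homsig{\cost'(v \arcto u)}$ yields $\homsig{\slack_\parameter(d)} = \homsig{\fundcycle(T, d)}$; if this is non-zero the slack is already lexicographically positive and $d$ is not a candidate for pivoting.

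For the remaining case $\homsig{\fundcycle(T, d)} = 0$, Lemma~\ref{lem:signatures} tells us that the simple cycle $\fundcycle(T, d)$ is null-homologous and therefore separates the underlying surface into two regions. Let $F'$ be the set of faces to the right of $\fundcycle(T, d)$ in its given orientation. The cycle traverses $u \arcto v = \rev(v \arcto u)$, whose dual dart is $q \fencewith r$; by the paper's drawing convention, the dual dart crosses the primal dart from left to right, which places $r$ locally to the right of the cycle at that position. Since a null-homologous simple cycle is separating, this local observation globalizes to $r \in F'$, and $F'$ is a non-empty strict subset of $F$.

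Finally, I would apply Lemma~\ref{lem:drainages}: the darts of the oriented cycle $\fundcycle(T, d)$ are, in dual, exactly the darts of $\cwneighbors(F')$, so $\coflow(\fundcycle(T, d)) = \sum_{q \in F'} \coimbalance{\coflow}(q) > 0$, because $r \in F'$ is the unique face carrying positive dual imbalance and it dominates the negative contributions of every other face in $F'$ by the definition of a drainage. Plugging this into the identity from the first step yields $\coflow(\slack_\parameter(d)) > \coflow(\cost'(v \arcto u)) - \coflow(\parameter) \geq 0$, so $\slack_\parameter(d)$ remains lexicographically positive and no pivot can occur. The main obstacle will be the topological step of situating $r$ relative to $\fundcycle(T, d)$: on a higher-genus surface the geometry of null-homologous simple cycles is less immediate than in the plane, but pushing the local observation at the dart $u \arcto v$ through the separating property of the cycle is enough to lock in the correct sign.
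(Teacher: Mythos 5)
Your proposal is correct and follows essentially the same route as the paper's proof: split on whether $\homsig{\fundcycle(T,d)}$ vanishes, observe the cycle $\gamma = \fundcycle(T,d)$ contains the dart $u\arcto v = q\fencewith r$ so its bounded region contains $r$, and invoke Lemma~\ref{lem:drainages} to get $\coflow(\gamma) > 0$. You are slightly more explicit than the paper on two points, both to the good: you state the face-part identity $\coflow(\slack_\parameter(d)) = \coflow(\fundcycle(T,d)) + \coflow(\cost'(v \arcto u)) - \coflow(\parameter)$ cleanly (the paper compresses this into a loosely-worded sentence about the slack's ``counterparts in the cost of $\gamma$''), and you spell out the orientability argument that pushes the local ``$r$ is on the right of $u\arcto v$'' observation to the global conclusion $r \in F'$, which the paper leaves implicit.
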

\begin{longversion}
\begin{proof}
  Consider any active dart \(d\).
  The homology and face parts of \(\slack_\parameter(d)\) are equal to the their counterparts in the
  cost of \(\gamma = \shortestpath(v, x) \concat x \arcto y \concat \rev(\shortestpath(u, y))
  \concat u \arcto v\) according to~\(\cost'_\parameter\).
  If \(\homsig{\slack_\parameter(d)}\) is positive, then \(d\) will never be tense in
  the current iteration.
  Therefore, \(\homsig{\cost'_\parameter(\gamma)}\) must be zero for \(d\) to become tense in the
  current iteration.
  However, \(\homsig{\cost'_\parameter(\gamma)} = \homsig{\cost'(\gamma)}\).
  By Lemmas~\ref{lem:cost_components} and~\ref{lem:signatures}, \(\gamma\) is a boundary
  circulation.
  Because \(q \fencewith r\) is in \(\gamma\), the dual darts of \(\gamma\) must enter a strict
  subset of faces~\(F' \subseteq F\) containing~\(r\).
  Lemma~\ref{lem:drainages} implies~\(\coflow(\cost'(\gamma)) > 0\).
  In particular, \(\parameter\) cannot rise high enough to make \(d\) tense before it
  reaches~\(\cost'(v \arcto u)\).
\end{proof}
\end{longversion}

\subsection{Algorithm Outline}
\label{subsec:linear-time_outline}
Based on the previous observations, we use the following strategy to compute multiple-source
shortest paths.
Recall, in each \emph{iteration} of the multiple-source shortest paths algorithm, we move the source
of \(T\) between consecutive vertices \(u\) and \(v\) on \(r\).
Each iteration begins with the special pivot which sets \(\parameter := -\dist_{\cost'}(u, v)\).
Now, consider continuously increasing \(\parameter\) until it reaches \(\cost'(v \arcto u)\).

We divide the remainder of the iteration into a number of \EMPH{rounds}.
Over the course of each round, \(\parameter_0\) remains a static integer while the homology and face
parts of \(\parameter\) continuously increase.
The round ends when either \(\parameter = \cost'(v \arcto u)\) or the homology and face parts of
\(\parameter\) become infinitely positive.
If the latter case occurs, we say the round is \EMPH{fully completed}.
At the end of fully completed rounds, \(\parameter_0\) increases by \(1\), and the homology and face
parts of \(\parameter\) become infinitely negative.
We perform all pivots necessary to increase \(\parameter\) to the end of each round in the order
these pivots occur.

An active dart~\(d\) can pivot into~\(T\) during the current round only if \(\slack_\parameter(d)_0
= 0\) which implies \(\homsig{\slack_\parameter(d)} \tuplecat \coflow(\slack_\parameter(d)) \geq
0\);
the pivot occurs when \(\parameter\) becomes large enough that the entire slack vector goes to
\(0\).
Therefore, we check each active dart \(d\) at the moment that \(\homsig{\slack_\parameter(d)}
\tuplecat \coflow(\slack_\parameter(d)) = 0\) to see if \(\slack_\parameter(d)_0 = 0\) as well.
Between pivots, and in accordance with Lemmas~\ref{lem:slack_homology}--\ref{lem:cut_path_order}, we
do these checks cut path-by-cut path, first in lexicographic order by \(\homsig{\fundcycle(T, d)}\),
and then in the order they appear along the blue boundary walk~\(\corewalk\).
We check the active darts within each of the oriented cut paths in order.
When we detect a pivot must occur, we perform the pivot and then resume checking darts that still
have non-negative homology part to their slack.
By Lemma~\ref{lem:zero_slack}, the first of these checks occurs on the reversal of the dart just
removed from~\(T\) in the previous pivot.
We say a dart \(d\) has been \EMPH{passed} in the current round when \(\homsig{\slack_\parameter(d)}
\tuplecat \coflow(\slack_\parameter(d)) < 0\).
Each round of checking and pivoting is further broken into three \EMPH{stages} as follows:
\begin{enumerate}
  \item
    During the first stage, \(\homsig{\parameter} < \homsig{\cost'(v \arcto u)}\).
    Following Lemma~\ref{lem:slack_homology}, we only check for pivots along cut paths whose
    darts~\(d\) have \(\homsig{\fundcycle(T, d)} < 0\).
    By Lemma~\ref{lem:final_parameter}, the current iteration finishes at the end of this stage if
    \(\parameter_0 = \cost'(v \arcto u)_0\).
  \item
    At the beginning of this stage, the homology and face parts of \(\parameter\) are equal to the
    homology and face parts of \(\cost'(v \arcto u)\).
    To increase \(\homsig{\parameter}\), we must check for pivots from active darts~\(d\) where
    \(\homsig{\fundcycle(T, d)} = 0\).
    By Lemma~\ref{lem:cut_path_order}, darts in~\(\cutdart_q\) must be checked first.
    The stage ends \emph{immediately} after we verify~\(\slack_\parameter(d)_0 > 0\) for every
    dart~\(d\) in~\(\cutdart_q\).
    This stage requires a bit of care, because we cannot afford to explicitly update~\(\cutpath_q\)
    after every pivot.
    This stage most closely resembles how Eisenstat and Klein~\cite{ek-lafms-13} handle each of the
    rounds as defined above, because it is the only non-trivial stage when~\(G\) is a planar graph.
  \item
    At the beginning of this stage, \(\homsig{\parameter} = \homsig{\cost'(v \arcto u)}\).
    We now check for pivots along cut paths other than~\(\cutpath_q\) whose darts~\(d\) have
    \(\homsig{\fundcycle(T, d)} \geq 0\).
    The end of this stage marks a full completion of the current round.
\end{enumerate}

The rest of this section is organized as follows.
In Section~\ref{subsec:linear-time_structures}, we go over the data structures used to efficiently
implement our algorithm.
We then discuss checking for and performing pivots during stages 1 and 3
(Section~\ref{subsec:linear-time_stages1and3}) and stage 2 (Section~\ref{subsec:linear-time_stage2})
separately before discussing the special pivot (Section~\ref{subsec:linear-time_special-pivot}) in
more detail.
We discuss how to efficiently pick which cut path to perform pivot checks on in
Section~\ref{subsec:linear-time_searching-cut-darts} before finally analyzing the running time of
our algorithm in Section~\ref{subsec:linear-time_running-time}.

\subsection{Data Structures}
\label{subsec:linear-time_structures}

We begin by describing the data structures used by our algorithm.
These data structures extend the ones used by Eisenstat and Klein~\cite{ek-lafms-13} to work with
more general embedded graphs.
As a general rule, we explicitly store lots of information about cut paths other than
\(\cutpath_q\).
Handling \(\cutpath_q\) requires more care as we do not have time to explicitly maintain it or even
remember both of its endpoints at certain moments in the algorithm.
\begin{itemize}
  \item
    For each vertex \(v \in V\), we store its predecessor dart \(pred(v)\) in the shortest path tree
    \(T\).
  \item
    For each dual vertex \(p \in F\) we store its successor dart \(succ(p)\) in an arbitrary dual
    spanning tree of \(\cutgraph\) rooted at \(r\) as well as a boolean \(visited(p)\).
    These values will aid us in maintaining an implicit list of darts along \(\cutdart_q\).
    From the end of stage 2 to immediately before the end of stage 1, \(visited(p)\) is
    \(\textsc{True}\) if \(succ(p)\) lies on \(\cutdart_q\).
    At the end of stage 1 and between iterations, \(visited(p)\) is set to \(\textsc{False}\) for
    every dual vertex~\(p\).
    During stage 2, it is only \(\textsc{True}\) if \(succ(p)\) lies on \(\cutdart_q\) \emph{and} it
    has been passed in the current round.
  \item
    We maintain the unperturbed part of parameter~\(\parameter\) as \(\uparameter\).
  \item
    We maintain a \EMPH{reduced cut graph}~\(\reducedcutgraph\), an embedded graph of genus~\(g\)
    with a vertex for every vertex of the 2-core \(\core\) and an edge for every cut
    path.
    We refer to vertices and edges of \(\reducedcutgraph\) as cut vertices and cut edges,
    respectively.
    As in all embedded graphs, each cut edge~\(\cutedge_i\) has two \EMPH{cut darts} representing
    the two orientations of \(\cutpath_i\).
    We say a cut dart is~\EMPH{active} if its oriented cut path contains active darts.
  \item
    For each dart \(d \in \dartsof{E}\) we store its \emph{unperturbed} slack \(\uslack(d) :=
    \slack_\parameter(d)_0\) as an integer.
    If \(d\)'s edge lies on a cut path other than \(\cutpath_q\), we also store \(\cutdart(d)\), the
    cut dart for \(d\)'s oriented cut path;
    otherwise, we set \(\cutdart(d)\) to \(\textsc{Null}\).
  \item
    Lemma~\ref{lem:fundamental_cycle_sigs} implies that for each cut dart~\(\cutdart\), there is a
    unique \EMPH{fundamental cycle homology signature} equal to \(\homsig{d}\) for any dart~\(d\)
    in~\(\cutdart\)'s oriented cut path.
    We store this value as~\(\homsig{\cutdart}\).
    We also store a boolean \(passed(\cutdart)\) that is \(\textsc{True}\) if \(\cutdart\) is active
    and has been passed in the current round.
    Finally, if~\(\cutdart\)'s cut path is not \(\cutpath_q\), then we store its darts in a
    list \(darts(\cutdart)\).
  \item
    Finally, we maintain a \EMPH{finger}~\(\finger\) which points to a single dual vertex.
    When searching \(\cutdart_q\) for the next dart to pivot into \(T\) we will use \(\finger\) to
    track our progress.
\end{itemize}

We begin the algorithm by computing the initial holiest tree \(T\) rooted at some vertex \(u\) on
\(r\).
We cannot simply apply the linear time algorithm of Henzinger \etal~\cite{hkrs-fspap-97}, because
the perturbed cost of some darts may be negative.
Therefore, we begin by computing \emph{some} shortest path tree rooted at \(r\) using the
unperturbed costs~\(\cost\) in~\(O(n)\) time.
Let~\(H \subseteq \dartsof{E}\) be the subset of darts with~\(0\) unperturbed slack.
The holiest tree uses only darts of~\(H\).
Further, our assumption that~\(G\) contains no zero-cost cycles guarantees~\(H\) is acyclic.
We compute the holiest tree~\(T\) from~\(u\) in~\(O(gn)\) time using the standard shortest path tree
algorithm for directed acyclic graphs on~\(H\).
The initial computation of~\(T\) is the \emph{only} time our algorithm will explicitly refer to the
face parts of the darts' perturbed costs.
The rest of the data structures can be initialized in~\(O(gn)\) time.
We now discuss how to handle each of the stages as described above.
Performing the special pivot is handled last, because the procedure closely resembles pivots
performed during these stages.

\subsection{Stages 1 and 3}
\label{subsec:linear-time_stages1and3}

In both stages 1 and 3, we check for and perform pivots using darts on cut paths other
than~\(\cutpath_q\).
Suppose we have just performed a pivot or the current stage has just begun.
We begin by discussing how to check for pivots.

\paragraph*{Checking for pivots}
We find the oriented cut path~\(\cutdart\) containing the next dart that needs checking for a pivot,
assuming one exists.
The search can be accomplished by taking a walk along the cut darts of the reduced cut graph that
correspond to the blue boundary walk~\(\corewalk\).
The active cut darts are those whose edges are encountered once during this walk.
The oriented cut path we seek corresponds to the first active cut dart~\(\cutdart\) in the walk with
lexicographically least fundamental cycle homology signature among all unpassed active cut darts.
Later, we describe a more efficient way to find \(\cutdart\).

If~\(\cutdart = \cutdart_q\), then we have completed stage 1.
We take a walk in the dual graph from~\(q\), following \(succ\) pointers until we reach a dual
vertex \(p\) for which \(visited(p) \neq \textsc{True}\).
We unset the \(visited\) booleans for each dual vertex encountered during the walk.
The \(visited\) booleans will be fixed to accurately represent~\(\cutdart_q\) before the next run of
stage 1 or 3.
If there is no choice for~\(\cutdart\), because every active dart has been passed, then we have
completed stage 3.
We increment~\(\uparameter\) and unset the \(passed\) variables for every cut dart.
Then, for each active dart \(d\), we decrement \(\uslack(d)\) and increment \(\uslack(\rev(d))\).

If the current stage is still active, then we check the darts of \(darts(\cutdart)\) in
order, performing a pivot if we find a dart \(d\) in \(darts(\cutdart)\) with \(\uslack(d) = 0\).
Suppose dart~\(d^-\) was removed from~\(T\) in the previous pivot, we have not yet completed a round
since that pivot, and~\(\rev(d^-)\) is in \(darts(\cutdart)\).
In this case,~\(\rev(d^-)\) is the first dart we check.%
\footnote{As discussed above, we start at~\(\rev(d^-)\), because the homology and face parts of its
slack are equal to~\(0\).
Our analysis still goes through if we check all the earlier darts of~\(darts(\cutdart)\)
as well, although we will not find any pivots that use those earlier darts.}
Otherwise, we start with the first member of \(darts(\cutdart)\).
If we find no dart~\(d\) that can be pivoted into~\(T\), we set \(passed(\cutdart)\) to
\(\textsc{True}\) and repeat the search procedure for the next \(\cutdart\).

\paragraph*{Performing a pivot}
Suppose we decide to pivot dart \(d^+ = x^+ \arcto y = o^+ \fencewith p^+\) into the holiest tree
\(T\) while removing dart \(d^- = x^- \arcto y = o^- \fencewith p^-\).
Let~\(\cutdart^+\) be the cut dart whose oriented cut path contains~\(d^+\).
Dart~\(d^+\) lies on some cut path \(\cutpath \neq \cutpath_q\).
Cycle \(\fundcycle(T - d^- + d^+, d^-) = \fundcycle(T, d^+)\) is non-contractible in the surface
\(\Sigma - r\).
Therefore, \(d^-\) will belong to a new cut path other than \(\cutpath_q\) after the pivot.
We walk back through the hair of \(\cutgraph\) from \(p^-\) following \(succ\) darts
until we encounter a dual vertex~\(p_1\) such that either \(\cutdart(succ(p_1))\) is set or
\(visited(p_1)\) is set.
In the first case,~\(p_1\) lies on a cut path other than~\(\cutpath_q\).
In the latter case, either~\(p_1 = q\) or it lies interior to~\(\cutpath_q\).
We also walk forward through the hair from \(o^-\) following \(succ\) darts until we encounter a
dual vertex~\(p_2\) lying on some cut path.
Note we cannot have both walks end at a \(visited\) dual vertex.
Otherwise, we will create a contractible dual cycle after adding \(d^-\)'s edge to \(\cutgraph\),
implying \(T + d^- + d^+\) is not connected.

After finding \(p_1\) and \(p_2\), we modify the reduced cut graph \(\reducedcutgraph\) according to
the new cut path from~\(p_1\) to~\(p_2\) that we found.
Let \(\cutedge^-\) be the new cut edge for this cut path.
We add \(\cutedge^-\) to \(\reducedcutgraph\), possibly subdividing existing cut edges according to
where we found \(p_1\) and \(p_2\).
When we subdivide cut edges other than the one for~\(\cutpath_q\), we split the old cut edge's cut
dart's dart lists.
If we subdivide the cut edge for~\(\cutpath_q\), we unset \(visited\) for each dual vertex no longer
on~\(\cutpath_q\) and build new dart lists from scratch for the two new cut darts that are not
orientations of the newly shortened~\(\cutpath_q\).
The fundamental cycle homology signature for these new cut darts are initially set to~\(0\) to
continue respecting the fundamental cycles of~\(T\).
We create new dart lists for both cut darts of \(\cutedge^-\) by just including every dart we
encountered during the walks and their reversals.

We must then \emph{remove} the cut edge~\(\cutedge^+\) for~\(\cutdart^+\).
In a reversal of the above steps, we do a walk from~\(o^+\) along darts
of~\(darts(\rev(\cutdart^+))\) until we encounter a dual cut vertex~\(p_3\) of~\(\core\).
Every dual vertex \(p \neq p_3\) encountered during the walk becomes a hair of~\(\cutgraph\), so we
set \(succ(p)\) for each of these dual vertices to follow the walk.
If~\(p_3\) is an endpoint of~\(\cutpath_q\), \(\cutpath^+\), and exactly one other cut
path~\(\cutpath\), then the walk continues along~\(\cutpath\) until another vertex of~\(\core\) is
encountered, except now each dual vertex~\(p\) has both \(succ(p)\) set to follow the walk and
\(visited(p)\) set to~\(\textsc{True}\) to represent~\(\cutpath_q\) being enlongated to the new
endpoint.
A similar walk and setting of dual vertex variables is performed from~\(p^+\) using darts
of~\(darts(\cutdart^+)\).
We then update~\(\reducedcutgraph\) by removing~\(\cutedge^+\), changing the endpoint
of~\(\cutpath_q\) if necessary, and merging any cut edges sharing degree~\(2\) vertices
of~\(\reducedcutgraph\) as well as their lists.
Finally, we add~\(d^+\) to~\(T\) by setting \(pred(y) := d^+\).

Call each vertex turning from red to blue during the current pivot \EMPH{\color{purple} purple}.
We must now compute new fundamental cycle homology signatures for cut darts whose darts
in~\(G\) have one purple endpoint.
Observe,~\(\homsig{\fundcycle(T, d^+)} = \homsig{\cutdart^+}\).
Following Lemma~\ref{lem:change_homology}, we reassign~\(\homsig{\cutdart} := \homsig{\cutdart} +
\homsig{\fundcycle(T, d^+)}\) for each cut dart~\(\cutdart\) whose oriented cut edge contains darts
with purple tails but not purple heads.
The reversals of these cut darts are assigned the opposite fundamental cycle homology signatures.

Finally, we need to figure out which cut darts have been passed in the current round.
Cut dart~\(\rev(d^-)\) has now become active, because its tail is purple but not its head.
Let~\(\cutdart^-\) be the orientation of~\(\cutedge^-\) whose oriented cut path contains~\(d^-\).
Similar to above, we walk along the~\(O(g)\) cut darts that correspond to the blue boundary
walk~\(\corewalk\).
By Lemma~\ref{lem:zero_slack}, \(\homsig{\slack_\parameter(\rev(d))} \tuplecat
\coflow(\slack_\parameter(\rev(d))) = 0\).
Following Lemmas~\ref{lem:seek_homology} and~\ref{lem:cut_path_order}, we set \(passed(\cutdart) :=
\textsc{True}\) for each cut dart \(\cutdart\) such that either \(\homsig{\cutdart} <
\homsig{\rev(\cutdart^-)}\) or \(\homsig{\cutdart} = \homsig{\rev(\cutdart^-)}\) and \(\cutdart\)
appears earlier in the walk than~\(\rev(\cutdart^-)\).

\subsection{Stage 2}
\label{subsec:linear-time_stage2}

In stage 2, we check for and performs pivots using darts on~\(\cutdart_q\).
We begin by setting the finger \(f := q\) and then start checking for pivots.

\paragraph*{Checking for pivots}

We do the following iterative procedure to discover the darts of~\(\cutdart_q\) and look for pivots.
If \(\cutdart(succ(f))\) is set, we have discovered the dual vertex \(f\) where \(\cutpath_q\) is
incident to other cut paths as well as which of the currently stored cut edges contains~\(f\).
We update the reduced cut graph by moving~\(\cutpath_q\)'s endpoint to this newly discovered
intersection, subdividing or merging cut edges and cut darts' dart lists as discussed above.
Afterward, the current stage is over.

Suppose \(\cutdart(succ(f))\) is not set.
Then, we set \(visited(f)\) to true, because \(succ(f)\) must lie on~\(\cutdart_q\).
We then check if \(\uslack(succ(f)) = 0\).
If so, we pivot \(succ(f)\) into~\(T\) as discussed below.
If not, we set \(f\) to the head of \(succ(f)\) to continue the search for the next pivot.

\paragraph*{Performing a pivot}
Suppose we decide to pivot dart \(d^+ = x^+ \arcto y = o^+ \fencewith p^+\) into the holiest tree
\(T\) while removing dart \(d^- = x^- \arcto y = o^- \fencewith p^-\).
We need to reassign \(succ\) pointers and unset some \(visited\) booleans based on the changing cut
path~\(\cutpath_q\).
Dart \(\rev(d^-)\) will lie on~\(\cutdart_q\), because otherwise there will be no path from~\(o^+\)
to \(r\) in the cut graph.
We add \(d^-\)'s edge to \(\cutgraph\) by setting \(succ(p^-) := \rev(d^-)\).
Then, we walk from~\(p^-\), following \(succ\) darts until we encounter~\(o^+\).
We flip the \(visited\) boolean for every dart head encountered on this walk and reverse each dart
we walk along to reflect the new route for~\(\cutdart_q\) through \(\rev(d^-)\).
Finally, we add~\(d^+\) to~\(T\) by setting \(pred(y) := d^+\).
We set~\(f := p^-\) so that~\(\rev(d^-)\) will be the next dart checked for a pivot.%
\footnote{As in the previous footnote, we can afford to check each new dart up to~\(\rev(d^-)\)
on~\(\cutdart_q\) for a pivot as well, but doing so is unnecessary.}

\subsection{The Special Pivot}
\label{subsec:linear-time_special-pivot}
At the beginning of the iteration, we pivot dart \(v \arcto u = r \fencewith q\) into the holiest
tree \(T\) while removing dart \(d^- = x^- \arcto v = o^- \fencewith p^-\).
We consider two cases for how to handle the special pivot depending on whether it more closely
resembles a normal pivot during stages 1 or 3 or a normal pivot during stage 2.
We first remark that~\(\cutpath_q\) may not be known before the special pivot occurs, because all
the \(visited\) booleans are unset and \(q\) was just reassigned for the current iteration.
However,~\(r \fencewith q\) lies on a cut path other than~\(\cutpath_q\) if and only if it did so
immediately before reassigning \(q\) and performing the special pivot.

Suppose~\(r \fencewith q\) lies on a cut path~\(\cutpath^+\) other than~\(\cutpath_q\).
In this case,~\(\cutpath_q\) is actually trivial, because~\(q\) lies on~\(\cutpath^+\) as well.
In particular, the \(visited\) booleans are accurately set to be \(\textsc{False}\) everywhere.
We update the reduced cut graph by subdividing~\(\cutpath^+\) and its dart lists to represent the
location of~\(q\).
We then follow the same strategy as when we perform a pivot in stage 1 or 3, except we
interpret~\(q\) as~\(o^+\) and~\(r\) as~\(p^+\) (with this interpretation, \(o^+ \fencewith p^+\)
has the same primal head as \(d^-\)).
Of course, we actually set \(pred(u) := v\) at the end of the process instead of the other way
around.
As before, the process ends with~\(d^-\) belonging to a cut path other than~\(\cutpath_q\).
By Lemma~\ref{lem:zero_slack}, we have \(\homsig{\slack_\parameter(\rev(d^-))} \tuplecat
\coflow(\slack_\parameter(\rev(d^-))) = 0\).
Together with Lemma~\ref{lem:final_parameter}, we see the algorithm is now in the middle of stage 1
or stage 3.

Now, suppose~\(r \fencewith q\) is not on a cut path other than~\(\cutpath_q\).
Similar to the strategy for stage 2, we set \(succ(p^-) := \rev(d^-)\) and then we follow \(succ\)
darts starting at~\(p^-\) until we reach~\(q\).
We reverse all the \(succ\) pointers along the way and set \(visited\) to true for each dual vertex
at the tail of the new \(succ\) pointers to represent the new route~\(\cutdart_q\) must take
through~\(\rev(d^-)\).
We set \(pred(u) := v\).
By Lemmas~\ref{lem:zero_slack} and~\ref{lem:final_parameter}, the algorithm must be in the middle of
stage 2, so we set \(f := p^-\).

\subsection{Searching for Cut Darts}
\label{subsec:linear-time_searching-cut-darts}

At this point, we have a functioning algorithm that, as argued below, runs in linear time assuming
\(g\) is a constant.
However, the algorithm spends~\(O(g^2)\) time every time it starts searching a new dart list of a
cut dart.
To improve the running time, we maintain an ordered dictionary~\(\darttree\) containing active cut
darts.
The cut darts within~\(\darttree\) are sorted first in lexicographic order by their fundamental
cycle homology signatures and then in the order their darts of~\(G^*\) appear in the blue boundary
walk~\(\corewalk\).
Dictionary~\(\darttree\) will contain at most~\(O(g)\) cut darts at any one time.
It should support insertion and deletion in~\(O(g \log g)\) time each, finding the first entry in
constant time, and finding the successor of \emph{any of its current entries} in constant time.

Observe that a prefix of the cut darts in~\(\darttree\) have been passed.
We maintain a pointer to the first member of~\(\darttree\) that has not been passed.
Therefore, we can select a cut dart along which to do pivot checks in constant time.
If no pivots are found, then the pointer is moved to the next member of the dictionary.
If there is no next member, then a round has been fully completed.
Slack changes are performed in time linear in the number of active darts by touching only darts and
their reversals from~\(\cutdart_q\) and cut darts in~\(\darttree\).
Afterward, the pointer is moved to the first member of the dictionary.
The dictionary is updated after a special pivot that puts the algorithm in stage 1 or 3, after
completing a stage 2 that starts immediately after the special pivot, after each pivot in stages 1
and 3, and after any other stage 2 that performs a pivot.

For updates immediately following the special pivot or a stage 2 immediately following it, we remove
every cut dart from~\(\darttree\), perform the walk in the reduced cut graph corresponding to the
blue boundary walk, and add each active cut dart we encounter.
The other updates occur when the reduced cut graph is changing.
Each cut dart leaving the reduced cut graph is removed from the dictionary~\(\darttree\).
Then, a walk is done in the new reduced cut graph adding each new cut dart encountered
to~\(\darttree\).
The pointer is updated to the cut dart containing the reverse of the dart of~\(G\) just pivoted out
of~\(T\).

\subsection{Running Time Analysis}
\label{subsec:linear-time_running-time}

Having described how our algorithm is implemented, we turn to bounding its running time.
From prior work, we already know there are at most \(O(gn)\) pivots~\cite{cce-msspe-13}.
However, we still need a bound on the time spent interacting with individual darts in \(\cutgraph\)
that do not get pivoted into \(T\) as well as the time spent working with the reduced cut graph and
cut dart dictionary~\(\darttree\).

Fix a dart \(x \arcto y = o \fencewith p\).
Let \(\seq{u_1, u_2, \dots, u_k}\) be the sequence of sources for the shortest path tree \(T\) in
order around \(r\).
We say two paths \(\walk_1\) and \(\walk_2\) from a vertex on \(r\) to \(x\) are \EMPH{restricted
homotopic} with respect to \(x \arcto y\) if there is a homotopy from \(\walk_1\) to \(\walk_2\) in
\(\Sigma_o = \Sigma - \set{o, r}\) where the first endpoint of the path may move forward or backward
along the path \(u_1 \arcto u_2 \arcto \dots \arcto u_{k-1}\).
\begin{longversion}
We have the following lemma.
\end{longversion}
\begin{lemma}
  The shortest path to \(x\) in \(T\) changes restricted homotopy class with respect to \(x \arcto
  y\)~\(O(g)\) times.
\end{lemma}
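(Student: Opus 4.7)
The plan is to leverage the non-crossing property of holiest shortest paths guaranteed by our perturbation scheme, combined with an Euler-characteristic argument on the twice-punctured surface $\Sigma_o = \Sigma \setminus \{o, r\}$, to generalize the bound that Cabello~\etal~\cite[Lemma 4.3]{cce-msspe-13} establish for the number of ordinary homotopy classes of paths to a fixed vertex.

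First I would show that for any two iterations $i \neq j$, the holiest shortest paths $\shortestpath(u_i, x)$ and $\shortestpath(u_j, x)$ in the corresponding shortest path trees do not cross. If they crossed at some common internal vertex $w$, then swapping their suffixes after $w$ would produce two paths to $x$ with the same total perturbed cost as the originals. Corollary~\ref{cor:unique_path} forces the holiest path from each source to $x$ to be unique, so at least one of the swapped paths would be strictly cheaper than one of the originals under the lexicographic order, contradicting optimality.

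Next, let $i_1 < i_2 < \cdots < i_k$ be the iterations at which the restricted homotopy class of the tree path to $x$ actually changes, and, for each consecutive pair, form the closed curve $\cycle_j$ by concatenating $\shortestpath(u_{i_j}, x)$ with the boundary sub-arc $u_{i_j} \arcto u_{i_j+1} \arcto \cdots \arcto u_{i_{j+1}}$ along $r$ and with the reverse of $\shortestpath(u_{i_{j+1}}, x)$. By the definition of restricted homotopy, $\cycle_j$ is contractible in $\Sigma_o$ exactly when $\shortestpath(u_{i_j}, x)$ and $\shortestpath(u_{i_{j+1}}, x)$ lie in the same restricted homotopy class. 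The non-crossing property from the previous step forces $\cycle_j$ to be simple (after collapsing any shared prefix of the two paths) and hence to bound a well-defined region $R_j$ in $\Sigma$; since $\cycle_j$ is non-contractible in $\Sigma_o$, this region must contain either the puncture at $o$ or at least one handle of $\Sigma$.

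Finally, the non-crossing property also forces the regions $R_1, \ldots, R_k$ to be organized so that their interiors share topology only with $O(1)$ multiplicity per handle or puncture, so each handle and each puncture can be ``charged'' by only a constant number of regions. Because $\Sigma_o$ has Euler characteristic $-2g$, this yields $k = O(g)$. The main obstacle is making the charging precise: consecutive paths $\shortestpath(u_{i_j}, x)$ and $\shortestpath(u_{i_{j+1}}, x)$ may share arbitrarily long prefixes and the regions $R_j$ may nest in complicated ways, so one must first reduce to a clean system of simple disjoint arcs (e.g.\ by cutting $\Sigma_o$ along the tree edges shared by many paths) before the Euler-characteristic bookkeeping goes through.
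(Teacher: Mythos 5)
Your proposal shares the paper's two core ideas — uniqueness of holiest paths forces the family $\{\shortestpath(u_i,x)\}$ to be pairwise non-crossing, and a topological bound on non-crossing curves in the twice-punctured surface controls the number of homotopy changes — but it stops short of closing the critical step, and you correctly flag this at the end. The gap is exactly the ``clean system of disjoint arcs'' you say you would need: without it, the number of \emph{changes} of restricted homotopy class could in principle exceed the number of \emph{distinct} classes, because the sequence could oscillate between a constant number of classes arbitrarily many times. Your charging scheme to handles and punctures would not by itself rule this out, since the regions $R_j$ you describe can nest and overlap without any obvious bound on how often a fixed handle is ``touched.''

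The paper closes this gap with a contiguity observation rather than direct Euler-characteristic bookkeeping. If $\shortestpath_i$ and $\shortestpath_j$ (with $i<j$) are restricted homotopic, then the closed walk $r[u_i,u_j]\circ\shortestpath_j\circ\rev(\shortestpath_i)$ bounds a disk in $\Sigma_o$; since all intermediate $\shortestpath_{i'}$ start on $r[u_i,u_j]$ and cross neither bounding path, they are trapped in the disk and hence all lie in the same restricted homotopy class. Thus each class is a \emph{contiguous block} in the sequence $A$, so the number of changes equals the number of distinct classes minus one. The proof then picks one representative per class, extends each representative backward along $r$ to $u_1$ to obtain paths with common endpoints, and invokes the known bound of Chambers \etal~\cite[Lemma 2.1]{ccelw-scsih-08} that at most $O(g)$ pairwise non-crossing, pairwise non-homotopic paths with common endpoints can exist in $\Sigma_o$. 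If you want to keep your Euler-characteristic route rather than citing that lemma, you still need the contiguity argument first — it is precisely what reduces the oscillating sequence to a fan of $O(g)$ non-crossing arcs whose complementary regions you can then count against $\chi(\Sigma_o)=-2g$.
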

\begin{longversion}
\begin{proof}
  Let \(A = \seq{\shortestpath_1, \shortestpath_2, \dots, \shortestpath_k}\) be the sequence of
  shortest paths to \(x\).
  Suppose \(\shortestpath_i\) and \(\shortestpath_j\) are restricted homotopic and \(j > i\). 
  Let \(r[u_i, u_j]\) denote the subpath around \(r\) from \(u_i\) to \(u_j\).
  We see \(r[u_i, u_j] \circ \shortestpath_j \circ \rev(\shortestpath_i)\) forms a disk in
  \(\Sigma_o\).
  Because the shortest path from each vertex \(u_i\) to \(x\) is unique, the shortest paths in this
  set are pairwise non-crossing.
  In particular, no shortest path between \(\shortestpath_i\) and \(\shortestpath_j\) can lie in
  another restricted homotopy class, and each restricted homotopy class is represented by a contiguous
  subsequence of \(A\).
  Let \(A'\) be the subsequence containing the first example of each homotopy class.
  Extend the beginning of each path in \(A'\) backward along \(r\) to \(u_1\);
  these extended paths still do not cross.
  The maximum number of pairwise non-crossing, non-homotopic paths in~\(\Sigma_o\) with common
  endpoints is \(O(g)\) (see Chambers~\etal~\cite[Lemma 2.1]{ccelw-scsih-08}), and this same bound
  applies to the size of \(A'\).
\end{proof}
\end{longversion}

The general strategy for bounding the running time of our algorithm is to charge various
interactions with darts to changes in the restricted homotopy class of their endpoints.
We begin by discussing pivots from stages 1 and 3.
The argument for the first case of the special pivot is similar.
Suppose we decide to pivot dart \(d^+ = x^+ \arcto y = o^+ \fencewith p^+\) into the
holiest tree \(T\) while removing dart \(d^- = x^- \arcto y = o^- \fencewith p^-\).
Recall, a vertex~\(x\) is purple if it turns from red to blue during a pivot.
We may interact with every dart \(x \arcto y\) where \(x\) is purple and \(y\) is not and their
reversals.
Cycle \(\fundcycle(T, d^+)\) is non-contractible in the surface \(\Sigma - r\).
Therefore, for each of these darts \(x \arcto y\), the restricted homotopy class of \(x\) relative
to \(x \arcto y\) changes.
We conclude that we spend~\(O(gn)\) time total interacting with individual darts during pivots in
stages 1 and~3.

We now bound the time spent checking darts for pivots during stages 1 and 3.
Suppose we check a dart~\(d = x \arcto y\).
After the check, dart~\(d\) has been passed.
It will either continue to be passed until the algorithm terminates, it is no longer active, or its
unperturbed slack decreases at the end of a round.
However, we can only decrease dart \(x \arcto y\)'s unperturbed slack at most \(\cost(x \arcto y) +
\cost(y \arcto x)\) times before we have to deactivate it and start decreasing the slack of its
reversal~\(y \arcto x\).
Deactivating~\(d\) requires \(x\) being purple, but not~\(y\), during a pivot, which requires a
change in the restricted homotopy class of \(x\) relative to \(x \arcto y\) as described above.
Overall, we spend at most~\(O(gL)\) time doing pivot checks during stages 1 and 3.

All other interactions with individual darts occur when preparing for or executing stage 2 of a
round, and in every case the \(visited\) boolean is switched for one of the darts' endpoints.
We will bound the number of times the \(visited\) booleans are set to \(\textsc{True}\) at any point
in the algorithm.
Suppose we set \(visited(p)\) to \(\textsc{True}\) and let \(succ(p) = x \arcto y = o \fencewith
p\).
If the setting occurs during stage 1 or stage 3, then the restricted homotopy class of \(x\)
relative to \(x \arcto y\) changes as discussed above.
Now suppose the setting occurs during the special pivot or during a pivot in stage 2 where dart
\(d^+ = x^+ \arcto y = o^+ \fencewith p^+\) pivots into holiest tree \(T\) while dart \(d^- = x^-
\arcto y = o^- \fencewith p^-\) leaves the tree.
Recall, \(p\) lies on a walk in the cut graph from~\(o^+\) to~\(p^-\).
Every face along this walk, including~\(p\), is enclosed by~\(\fundcycle(T, d^+)\).
The restricted homotopy class of \(x\) relative to \(x \arcto y\) changes, because the difference
between the old shortest path to \(x\) and the new shortest path to \(x\) encloses \(o\) but not
\(r\).
Finally, suppose \(visited(p)\) is set to \(\textsc{True}\) during stage 2 while checking for pivots
along~\(\cutdart_q\).
Similar to above,~\(visited(p)\) cannot be set again until~\(succ(p)\) is deactivated during stage
2,~\(succ(p)\) is deactivated during stage 1 or 3, or the slack of~\(succ(p)\) is decremented at the
end of the round so \(visited(p)\) can be unset at the end of stage 1.
The first case requires a change of restricted homotopy class as described for when \(visited(p)\)
is set during a pivot, the second case requires a change of restricted homotopy class as described
for darts touched during stages 1 and 3, and the third case can only occur \(\cost(x \arcto y) +
\cost(y \arcto x)\) times before a deactivation has to occur anyway.
The \(visited\) boolean settings, and therefore all individual dart interactions, occur~\(O(gL)\)
times total.

Finally, we must account for the time spent interacting with the reduced cut graph and the ordered
dictionary~\(\darttree\).
There is one~\(O(g)\)-time walk around the reduced cut graph per pivot and~\(O(gn)\) pivots total,
so we spend~\(O(g^2 n)\) time doing walks around the reduced cut graph.
There are~\(O(n)\) special pivots, and we can build a fresh copy of~\(\darttree\) in~\(O(g^2 \log
g)\) time after each of them for~\(O(g^2 n \log g)\) time total building these fresh copies.
A total of~\(O(gn)\) cut darts enter and leave the reduced cut graph~\cite[Section
4.2]{cce-msspe-13}, and each deletion or insertion from~\(\darttree\) takes~\(O(g \log g)\) time, so
the individual insertions and deletions during pivots take~\(O(g^2 n \log g)\) time total as well.

The overall running time of our algorithm is \(O(g(g n \log g + L))\).
We have the following theorem.

\begin{theorem}
  \label{thm:linear-time}
  Let \(G = (V, E, F)\) be a graph of size \(n\) and genus \(g\), let \(\cost :
  \dartsof{E} \to \N^+\) be a non-negative, integral dart cost function with dart costs summing to
  \(L\), and let \(r \in F\) be any face of \(G\).
  We can compute every dart entering and leaving the shortest path trees from each vertex incident
  to \(r\) in order in \(O(g(g n \log g + L))\) time.
\end{theorem}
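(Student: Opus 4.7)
The plan is to verify that the algorithm just described correctly computes the sequence of pivots defining the holiest shortest path trees rooted at each vertex of~$r$, and then to charge every unit of work either to a pivot, to a decrement of an unperturbed slack, to a change of restricted homotopy class of some shortest path, or to a data-structure operation on the reduced cut graph or the ordered dictionary~$\darttree$. Correctness follows from Lemmas~\ref{lem:zero_slack}--\ref{lem:final_parameter} together with the three-stage decomposition: inside each round these lemmas certify that at each moment the chosen cut dart really does contain the next active dart whose perturbed slack reaches zero, and the round ends exactly when every active dart's homology/face-perturbed slack has gone strictly below the analogous coordinates of $\cost'(v \arcto u)$.

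For the running time, I would first invoke the known bound that the total number of pivots across the whole algorithm is $O(gn)$ (Cabello~\etal). Each individual pivot performs a constant number of walks in the hair and along cut paths, and I would charge the darts touched during such walks to the vertices that turn purple: for any dart $x \arcto y$ with $x$ purple and $y$ not purple, the fundamental cycle of the new tree edge is non-contractible in $\Sigma - r$, so the shortest path to $x$ changes restricted homotopy class relative to $x \arcto y$. Since there are only $O(g)$ such classes per dart, pivot-time dart work sums to $O(gn)$ across the algorithm.

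The main obstacle, and the step requiring the most care, is bounding the time spent \emph{checking} darts that do not end up being pivoted in, together with the time spent setting $visited(p) := \textsc{True}$ during the discovery of~$\cutdart_q$. For a dart $d = x \arcto y$ that is checked but not pivoted, I would argue that $d$ now has $passed(\cutdart(d)) = \textsc{True}$ and cannot be revisited until either (i) the current round ends, in which case we pay one unit against the slack decrement on $d$, (ii) $d$ becomes inactive because $x$ turns purple while $y$ does not, which I charge to a restricted homotopy change as above, or (iii) the algorithm terminates. The slack of $d$ can be decremented at most $\cost(d)+\cost(\rev(d))$ times, giving $O(gL)$ total check work. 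A parallel case analysis handles the $visited$ settings, where the stage-2 walk along $succ$-pointers enters faces enclosed by the non-contractible cycle $\fundcycle(T,d^+)$, so each visit that is not paid for by a slack decrement induces a restricted homotopy change on the corresponding primal dart.

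Finally I would account for the data-structure overhead: each pivot triggers one $O(g)$-time walk around $\reducedcutgraph$, giving $O(g^2n)$, and at most $O(gn)$ insertions and deletions in $\darttree$ each costing $O(g\log g)$, giving $O(g^2n\log g)$. The $O(n)$ special pivots can rebuild $\darttree$ from scratch in $O(g^2\log g)$ time each. Summing the pivot work $O(gn)$, the check and $visited$ work $O(gL)$, and the data-structure work $O(g^2 n\log g)$ yields the claimed $O(g(gn\log g + L))$ bound.
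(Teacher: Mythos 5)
Your proposal is correct and mirrors the paper's own analysis in Section~\ref{subsec:linear-time_running-time} almost step for step: the $O(gn)$ pivot bound from Cabello~\etal, the charging of pivot-time dart work to restricted-homotopy-class changes (using that there are $O(g)$ such classes per dart), the $O(gL)$ bound for checks and $visited$-settings via slack decrements and deactivations, and the $O(g^2 n \log g)$ overhead from walks around $\reducedcutgraph$ and operations on $\darttree$. The decomposition of the charge into pivot, slack decrement, restricted-homotopy change, and data-structure operation is exactly the paper's accounting scheme.
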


\section{Applications of Linear-time Algorithm}
\label{sec:applications}

We now turn to applications of the linear-time multiple-source shortest paths algorithm described in
the previous section.

\subsection{Shortest Path Distances}
\label{subsec:applications_distances}
The above algorithm successfully computes the pivots for multiple-source shortest paths around \(r\)
in the order that they occur.
However, most applications of multiple-source shortest paths are actually concerned with at least a
subset of the shortest path distances.
Fortunately, this subset of distances is usually structured in a convenient way.

Let \(A = \seq{u_1, u_2, \dots, u_{k_1}}\) be the sequence of vertices around \(r\), and let \(B =
\seq{v_1, v_2, \dots, v_{k_2}}\) be the sequence of vertices in an arbitrary walk through \(G\).
A \EMPH{monotone correspondence} \(\correspondence\) between \(A\) and \(B\) is a set of pairs
\((u_i, v_j)\) where for each \((u_i, v_j), (u_{i'}, v_{j'}) \in \correspondence\) with \(i' \geq
i\), we have \(j' \geq j\).
Given a monotone correspondence~\(\correspondence\), we can easily modify our linear-time algorithm
to compute the unperturbed distance from \(u_i\) to \(v_j\) for every pair \((u_i, v_j)\) appearing
in \(\correspondence\) with only an \(O(k_2)\) \emph{additive} increase in the running time.
This observation is a generalization of one by Eisenstat and Klein~\cite[Theorem 4.3]{ek-lafms-13}
for planar graphs.

We store a variable \(\dist\) that is initially the unperturbed distance from \(u_1\) to \(v_1\).
The initial value for \(\dist\) can be computed in \(O(n)\) time after computing the initial holiest
tree \(T\).
As the algorithm runs, we will update \(\dist\) with the shortest path distance between some \(u_i\)
and \(v_j\).
Suppose an iteration of the algorithm has just ended and we are storing the \(u_i\) to \(v_j\)
distance.
We can compute the unperturbed distance from \(u_i\) to \(v_{j+1}\) as as \(\dist + \cost(v_j \arcto
v_{j+1}) - \uslack(v_j \arcto v_{j+1})\) and reassign \(\dist\) to that value.
We repeat this step until we have computed distances for every pair containing \(u_i\).

Now, suppose we have just performed the special pivot to move the source of \(T\) from \(u_i\) to
\(u_{i+1}\).
After the special pivot, the distance to every vertex in \(G\) from the source of \(T\) has
decreased by the distance from \(u_i\) to \(u_{i+1}\).
We decrease \(\dist\) by that amount.
Now, the unperturbed distance from \(u_{i+1}\) to some vertex \(v_j\) increases by \(1\) at the end
of each fully completed round where \(v_j\) is red.
To easily track if \(v_j\) is red, we maintain marks on edges appearing an odd number of times along
an arbitrary walk from \(u_{i+1}\) to \(v_j\).
If there are an odd number of marked edges containing active darts when we increase \(\uparameter\),
then the unperturbed distance to \(v_j\) increases by \(1\) and we increment \(\dist\).
Otherwise, \(\dist\) remains unchanged.
To maintain these marks for any pair \((u_i, v_j)\) we compute an arbitrary \((u_1, v_1)\) walk at
the beginning of the algorithm.
Every time we consider distances to the next vertex along \(B\), we flip the mark on the next edge
used in \(B\)'s walk.
Every time we perform a special pivot, we flip the mark on the edge for \(u_i \arcto u_{i+1}\).

\begin{theorem}
  \label{thm:linear-time_distances}
  Let \(G = (V, E, F)\) be a graph of size \(n\) and genus \(g\), let \(\cost :
  \dartsof{E} \to \N^+\) be a non-negative, integral dart cost function with dart costs summing to
  \(L\).
  Let \(r \in F\) be any face of \(G\) incident to vertices \(A = \seq{u_1, u_2, \dots, u_{k_1}}\)
  in order, and let \(B = \seq{v_1, v_2, \dots, v_{k_2}}\) be the sequence of vertices along an
  arbitrary walk in \(G\).
  Let \(\correspondence\) be an arbitrary monotone correspondence between \(A\) and \(B\).
  We can compute the distance from \(u_i\) to \(v_j\) for every pair \((u_i, v_j) \in
  \correspondence\) in \(O(g(g n \log g + L) + k_2)\) time.
\end{theorem}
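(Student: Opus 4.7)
The plan is to augment the multiple-source shortest paths algorithm of Theorem~\ref{thm:linear-time} to maintain a running variable \(\dist\) that always equals the unperturbed distance from the current source \(u_i\) of~\(T\) to some ``current'' vertex \(v_j\) along the walk~\(B\). Since the correspondence \(\correspondence\) is monotone, we may process its pairs in order and advance either the source index \(i\) or the walk index \(j\) one step at a time, reporting the value of \(\dist\) whenever \((u_i,v_j)\in \correspondence\). The only thing we need to show is that each unit advance can be performed in amortized constant extra time on top of the work already done by the algorithm of Theorem~\ref{thm:linear-time}, giving a total additive overhead of \(O(k_2)\) (for walking along \(B\)) plus \(O(k_1)=O(n)\) (already absorbed into the base running time).

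First I would initialize \(\dist\) to the unperturbed distance from \(u_1\) to \(v_1\); this takes \(O(n)\) time by summing unperturbed costs along the \(u_1\)-to-\(v_1\) path in the initial holiest tree \(T\), which is built as part of the base algorithm. To advance the walk index from \(v_j\) to \(v_{j+1}\) while the source is fixed at \(u_i\), I would use the identity \(\dist_c(u_i,v_{j+1}) = \dist_c(u_i,v_j) + \cost(v_j\arcto v_{j+1}) - \uslack(v_j\arcto v_{j+1})\), which follows directly from the definition of unperturbed slack and the fact that \(T\) is the holiest shortest path tree. The value \(\uslack(v_j \arcto v_{j+1})\) is already maintained by the base algorithm, so this costs \(O(1)\) per walk step and \(O(k_2)\) total.

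To advance the source from \(u_i\) to \(u_{i+1}\), I would decrement \(\dist\) by \(\dist_c(u_i,u_{i+1})\) at the moment the special pivot fires, since every distance from the source decreases by exactly that amount when the source moves; this quantity is known from the base algorithm (it is the initial value of \(\lambda_0\) for the iteration, up to sign). The subtler issue is that between source moves, the unperturbed distance from \(u_{i+1}\) to \(v_j\) increases by one at the end of each fully completed round, but only if \(v_j\) is currently red. The hard part will be detecting the color of \(v_j\) in \(O(1)\) time per round, since we cannot afford to recompute colors from scratch. To do this I would maintain an edge marking so that \(v_j\) is red if and only if an odd number of marked edges currently carry active darts, where the marked edges are those appearing an odd number of times along a fixed walk from the current source to \(v_j\). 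This walk is built incrementally: initially marks are placed along an arbitrary \(u_1\)-to-\(v_1\) walk, and each unit advance of \(j\) or \(i\) flips the mark on one edge (either the edge of \(B\) just traversed, or the edge \(u_i\arcto u_{i+1}\) on \(r\)). Since the base algorithm already touches every active edge whenever \(\uparameter\) increments, maintaining a count of marked active edges adds only \(O(1)\) time per slack-change operation and \(O(1)\) time per mark flip, all of which fits inside the existing budget. Summing these contributions yields the claimed \(O(g(gn\log g+L)+k_2)\) bound.
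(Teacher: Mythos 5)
Your proposal is correct and takes essentially the same approach as the paper's proof: initialize \(\dist\) from the initial holiest tree, advance along \(B\) via the slack identity, decrement by \(\dist(u_i,u_{i+1})\) at each special pivot, and use the parity of marked edges with active darts to decide whether to increment \(\dist\) at the end of each fully completed round.
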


\subsection{The Applications}

We can use the algorithm of Theorem~\ref{thm:linear-time_distances} to easily derive deterministic
linear-time algorithms for a variety of problems on unweighted undirected embedded graphs of
constant genus.
We discuss a few of these problems in this section.
For each problem, there is a published algorithm that runs in near-linear time that can use the
multiple-source shortest paths algorithm of Cabello~\etal~\cite{cce-msspe-13} in a black box fashion.
While Cabello \etal~require uniqueness of shortest paths, it is otherwise unnecessary in these
algorithms.
In every case, the set of shortest path distances required by the algorithm have the form required
by Theorem~\ref{thm:linear-time_distances} with \(k_2 = O(n)\).

We note our multiple-source shortest paths algorithm is not necessary for \(g^{O(g)} n\) or (in some
cases) \(2^{O(g)} n\) time algorithms for these problems, assuming that the input graph is unweighted
and undirected.
However, we are unaware of any publications stating this observation explicitly.
In every case, the use of our linear-time multiple-source shortest paths algorithm results in a
substantial decrease in the dependency on \(g\).
\begin{shortversion}
We now give our improvements.
\end{shortversion}

\begin{longversion}
We give one example of such an algorithm here;
the slower algorithms for the other problems use similar ideas.
\end{longversion}
Given an undirected possibly edge weighted graph \(G\) and two vertices \(s\) and \(t\), an
\(s,t\)-cut is a bipartition \((S, T)\) of the vertices such that \(s \in S\) and \(t \in T\).
The \EMPH{capacity} of an \(s,t\)-cut \((S, T)\) is the total number/total weight of all edges with
exactly one endpoint in \(S\).
The \EMPH{minimum \(s,t\)-cut} is the \(s,t\)-cut of minimum capacity.
\begin{shortversion}
Plugging our multiple-source shortest paths procedure into an algorithm of Erickson and
Nayyeri~\cite{en-mcsnc-11}, we derive the following result.
\end{shortversion}
\begin{longversion}
Chambers~\etal~\cite{cen-mcshc-09} reduce the problem of computing a minimum \(s,t\)-cut in an
undirected genus \(g\) graph to \(g^{O(g)}\) computations of minimum \(s',t'\)-cuts in planar graphs
of size \(O(gn)\).
If the input graph is unweighted, then these computations can each be done in \(O(gn)\) time each
using an algorithm of Weihe~\cite{w-edstp-97}, yielding a \(g^{O(g)} n\) time algorithm for minimum
\(s,t\)-cut in unweighted undirected graphs.
We now turn more efficient algorithms that use our linear-time multiple-source shortest paths
algorithm instead.

Instead of using the algorithm given above, we can compute minimum \(s,t\)-cuts in a genus \(g\)
surface embedded graph using an algorithm of Erickson and Nayyeri~\cite{en-mcsnc-11} that runs
\(2^{O(g)}\) instances of multiple-source shortest paths in a derived graph of size \(2^{O(g)} n\)
and genus \(2^{O(g)}\).
\end{longversion}
\begin{theorem}
  \label{thm:linear_st_cut}
  Let \(G = (V, E, F)\) be an unweighted undirected graph of size \(n\) and genus~\(g\),
  and let \(s,t \in V\).
  There exists a deterministic algorithm that computes a minimum \(s,t\)-cut of \(G\) in \(2^{O(g)}
  n\) time.
\end{theorem}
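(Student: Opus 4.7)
The plan is to invoke the reduction of Erickson and Nayyeri~\cite{en-mcsnc-11} from minimum $s,t$-cut in a genus $g$ graph to multiple-source shortest paths in a derived graph, and then replace every MSSP call with an application of Theorem~\ref{thm:linear-time_distances}. In broad strokes, Erickson--Nayyeri exploit the duality between $s,t$-cuts in the primal graph and $s^*,t^*$-separating cycles in the dual graph; they enumerate over the $2^{O(g)}$ $\Z_2$-homology classes that a separating dual cycle could inhabit, and for each class they compute a minimum-length cycle by ``cutting'' the dual surface along a basis of the class, ``pasting'' together $2^{O(g)}$ copies of the cut-open surface, and invoking MSSP from the boundary of the resulting universal-cover-like graph. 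The resulting derived graph has size $2^{O(g)} n$ and genus $2^{O(g)}$, and the only information needed from each MSSP call is the distance from each boundary source vertex to a bounded number of target vertices arranged so that successive shortest-path queries respect the monotone correspondence of Theorem~\ref{thm:linear-time_distances}.

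First I would recall the relevant reduction: given $G$ unweighted and undirected with $s,t \in V$, construct the $2^{O(g)}$ derived ``covering'' graphs of Erickson--Nayyeri and identify, within each, the distinguished face $r$ whose incident vertices act as sources, together with the $O(2^{O(g)} n)$ target vertices and the walk $B$ along which they are queried. Because the targets arise from traversing the boundary of a cut-open surface once, their ordering along a walk in the derived graph matches the cyclic order of the sources around $r$ in the natural way required by monotone correspondence. Next I would apply Theorem~\ref{thm:linear-time_distances} to each derived graph. Each derived graph has $n' = 2^{O(g)} n$, genus $g' = 2^{O(g)}$, and since it is unweighted we have $L' = O(n') = 2^{O(g)} n$; the theorem then yields running time
\[
O\bigl(g'(g' n' \log g' + L') + k_2\bigr) \;=\; 2^{O(g)} n
\]
per derived graph, where we absorb the $\log g' = O(g)$ and additive $g$ factors into the $2^{O(g)}$. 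Summing over the $2^{O(g)}$ derived graphs and taking the minimum cycle length found gives total time $2^{O(g)} n$, which converts to an $s,t$-cut in the primal via a standard planar-face-to-cycle-to-cut back-translation in linear additional time.

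The hard part will not be any of the individual steps but rather verifying that the reduction of Erickson--Nayyeri truly only needs distances in the monotone-correspondence form demanded by Theorem~\ref{thm:linear-time_distances}, rather than an arbitrary distance oracle. In particular, one must check that for each class the shortest cycle is determined by distances between source vertices on a single face and consecutive target vertices along a fixed walk, and that we never need to ``randomly access'' distances. This amounts to re-expressing their use of Cabello--Chambers--Erickson's MSSP oracle in the more restricted streaming-distance form supplied by our Section~\ref{subsec:applications_distances} extension; fortunately, inspection of their algorithm shows the target vertices are indeed visited in a single boundary sweep, so the reformulation is direct. A secondary concern is ensuring the derandomization hypothesis of Theorem~\ref{thm:linear-time_distances}, namely the absence of zero-cost directed cycles: this is trivial here since the derived graph is unweighted with strictly positive edge costs.
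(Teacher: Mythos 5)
Your proof takes essentially the same route as the paper: plug the linear-time multiple-source shortest paths procedure (Theorem~\ref{thm:linear-time_distances}) into the Erickson--Nayyeri reduction, which runs $2^{O(g)}$ MSSP instances on a derived graph of size $2^{O(g)} n$ and genus $2^{O(g)}$, and absorb the polynomial-in-$g$ factors into the $2^{O(g)}$. The paper's own argument is no more detailed than yours; it likewise asserts (rather than re-derives) that the distances required by Erickson--Nayyeri fit the monotone-correspondence form of Theorem~\ref{thm:linear-time_distances} with $k_2 = O(n)$, which is the same point you flag as the step requiring verification.
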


A \EMPH{global minimum cut} is an \(s,t\)-cut of minimum capacity across all choices of distinct
vertices \(s\) and \(t\).
\begin{shortversion}
We combine our algorithm with algorithms by Erickson~\etal~\cite{efn-gmcse-12}, Erickson and
Nayyeri~\cite{en-mcsnc-11}, and Chang and Lu~\cite{cl-cgpgl-13}.
\end{shortversion}
\begin{longversion}
We can compute global minimum cuts in linear time as well using an algorithm of Erickson
\etal~\cite{efn-gmcse-12} with some slight modifications.
We replace their use of an algorithm by Chambers~\etal~\cite{cen-mcshc-09} for computing minimum
weight homologous subgraphs (with coefficients in \(\Z_2\)) with the algorithm of Erickson and
Nayyeri~\cite{en-mcsnc-11} for the same problem, plugging in our multiple-source shortest paths
algorithm when needed by Erickson and Nayyeri.
Again, this results in \(2^{O(g)}\) uses of our multiple-source shortest paths algorithm in graphs of
size \(2^{O(g)} n\) of genus \(2^{O(g)}\).
Also, we replace Erickson~\etal's~\cite{efn-gmcse-12} constant number of uses of an algorithm by
Łącki and Sankowski~\cite{ls-mcsc-11} for global minimum cut in planar graphs with a linear time
algorithm for the same problem in unweighted planar graphs by Chang and Lu~\cite{cl-cgpgl-13}.
\end{longversion}
\begin{theorem}
  \label{thm:linear_global_cut}
  Let \(G = (V, E, F)\) be an unweighted undirected graph of size \(n\) and genus~\(g\).
  There exists a deterministic algorithm that computes a global minimum cut of \(G\) in \(2^{O(g)}
  n\) time.
\end{theorem}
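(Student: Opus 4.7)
The plan is to adapt the global minimum cut algorithm of Erickson, Fox, and Nayyeri~\cite{efn-gmcse-12} for surface embedded graphs, replacing each internal subroutine with a deterministic linear-time variant that exploits the unweighted structure of $G$. Their framework reduces the global minimum cut problem to a constant number of planar global minimum cut instances together with $2^{O(g)}$ computations of minimum weight $\Z_2$-homologous subgraphs in a derived graph of size~$2^{O(g)} n$. The original paper uses the Chambers~\etal algorithm for the homologous-subgraph step and the Łącki--Sankowski algorithm for the planar step; each of these is the obstacle to obtaining the desired bound.

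First, I would swap in the Erickson--Nayyeri~\cite{en-mcsnc-11} algorithm for minimum weight $\Z_2$-homologous subgraphs. Crucially, this algorithm is structured around multiple-source shortest paths in a derived graph of size and genus $2^{O(g)}$, which is exactly where our new machinery enters. Plugging in Theorem~\ref{thm:linear-time_distances} instead of the algorithm of Cabello~\etal~\cite{cce-msspe-13} should bring each homologous-subgraph computation down to $2^{O(g)} n$ time. Across the $2^{O(g)}$ invocations demanded by the Erickson--Fox--Nayyeri reduction, this contributes $2^{O(g)} n$ total.

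Second, I would replace each planar global minimum cut call by the linear-time algorithm of Chang and Lu~\cite{cl-cgpgl-13}, which works specifically for unweighted planar graphs; since only a constant number of such calls are made, this contributes only $O(n)$ time. Summing the two contributions yields the claimed $2^{O(g)} n$ bound, and determinism is preserved throughout because every subroutine is deterministic (our perturbation scheme, the MSSP algorithm, Erickson--Nayyeri's reduction, and Chang--Lu's planar algorithm).

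The main obstacle is verifying that the distance queries needed by Erickson--Nayyeri inside each homologous-subgraph computation fit the form required by Theorem~\ref{thm:linear-time_distances}: they must arise from a monotone correspondence between the face $r$ and some walk of length $O(n)$ in the derived graph. Inspecting their algorithm, the queries are exactly of this type, because the shortest paths are drawn between vertices on a common dual face and vertices traversed along a single walk around the relevant cut boundary; this is the same pattern that allowed the original Erickson--Nayyeri analysis to use an $O(n \log n)$ MSSP procedure, and the correspondence is monotone by the non-crossing property of shortest paths ensured by Corollary~\ref{cor:unique_path}. Once this structural point is checked, the running-time accounting is routine.
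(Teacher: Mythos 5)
Your proposal matches the paper's proof essentially verbatim: both start from the Erickson--Fox--Nayyeri global minimum cut framework~\cite{efn-gmcse-12}, both replace the Chambers~\etal\ minimum-weight $\Z_2$-homologous-subgraph subroutine with the Erickson--Nayyeri algorithm~\cite{en-mcsnc-11} and feed it the new linear-time MSSP routine, and both swap the Łącki--Sankowski planar global minimum cut calls for the unweighted linear-time algorithm of Chang and Lu~\cite{cl-cgpgl-13}. The only slight imprecision is in your accounting of where the $2^{O(g)}$ factor arises: you ascribe it to the number of homologous-subgraph computations demanded by the Erickson--Fox--Nayyeri reduction, whereas it actually comes from inside each Erickson--Nayyeri homologous-subgraph computation, which itself performs $2^{O(g)}$ MSSP instances in a derived graph of size and genus $2^{O(g)}$; the total invocation count, and hence the running time, is the same either way, so this does not affect the argument.
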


Let graph \(G\) be embedded in surface \(\Sigma\).
A \EMPH{non-separating cycle}~\(\cycle\) in \(G\) is one for which \(\Sigma - \cycle\) is connected.
A shortest non-separating or non-contractible cycle is a non-separating or non-contractible cycle
with a minimum number of edges or total cost if the edges have costs.
\begin{longversion}
We can compute shortest non-separating and non-contractible cycles by simply substituting our
multiple-source shortest paths algorithm into the algorithms of Erickson~\cite{e-sncds-11} and
Fox~\cite{f-sntcd-13}.
Erickson uses \(O(g)\) instances of multiple-source shortest paths in derived graphs of size
\(O(n)\) and genus \(O(g)\) to compute a shortest non-separating cycle.
Fox uses \(O(g^2 + b)\) instances in derived graphs of size \(O(n)\) and genus \(O(g)\) to compute a
shortest non-contractible cycle when the embedding has \(b\) boundary components.
Note that these algorithms are actually designed for directed graphs with dart costs, but the
undirected case follows from the obvious reduction.
For these two problems, there is no other linear time algorithm known for directed graphs with small
integer dart costs.
\end{longversion}
\begin{shortversion}
For these two problems, there was no linear time algorithm known for directed graphs with small
integer dart costs.
However, there were~\(2^{O(g)} n\) time algorithms for the undirected case as mentioned above.
We combine our algorithm with algorithms by Erickson~\cite{e-sncds-11} and Fox~\cite{f-sntcd-13}.
\end{shortversion}
\begin{longversion}
However, a running time of \(2^{O(g)} n\) is achievable in unweighted undirected graphs by applying
Weihe's~\cite{w-edstp-97} minimum \(s,t\)-cut algorithm as a subroutine in Fox's~\cite{f-sntcd-13}
modification of Kutz's~\cite{k-csnco-06} original near-linear time algorithm for these problems.
\end{longversion}
\begin{theorem}
  \label{thm:linear_non-trivial}
  Let \(G = (V, E, F)\) be a graph of size \(n\) and genus \(g\), embedded in a surface with \(b\)
  boundary components, and let \(\cost : \dartsof{E} \to \N^+\) be a positive, integral dart cost
  function with dart costs summing to \(L\).
  We can compute a shortest non-separating cycle in \(G\) in \(O(g^2(g n \log g + L))\) time and a
  shortest non-contractible cycle in \(G\) in \(O((g^2+b)g(g n \log g + L))\) time.
\end{theorem}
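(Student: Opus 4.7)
The plan is to prove Theorem~\ref{thm:linear_non-trivial} by plugging our Theorem~\ref{thm:linear-time_distances} into the existing reductions of Erickson~\cite{e-sncds-11} for shortest non-separating cycles and Fox~\cite{f-sntcd-13} for shortest non-contractible cycles. Both reductions express the problem in terms of a bounded number of multiple-source shortest paths (MSSP) instances in a derived graph whose size and genus remain $O(n)$ and $O(g)$, respectively. Specifically, Erickson's algorithm invokes MSSP on $O(g)$ derived graphs, while Fox's invokes MSSP on $O(g^2 + b)$ derived graphs. Since these algorithms are originally analyzed for unique-shortest-path inputs and the only combinatorial subroutine whose running time we seek to improve is MSSP, the bookkeeping around each reduction is unchanged.

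First, I would check that each MSSP call in the two reductions only needs distances from a common face to a structured sequence of vertices compatible with a monotone correspondence of length $k_2 = O(n)$, so that Theorem~\ref{thm:linear-time_distances} applies and yields $O(g(g n \log g + L))$ per call. The derived graphs produced by the Erickson/Fox reductions arise from cutting along a small system of loops or paths, which preserves the genus up to a constant factor, preserves the sum of integer edge costs up to a constant factor, and collects all query vertices along a single boundary face traversed in order along a walk --- exactly the regime where our linear-time distance oracle is efficient. A secondary detail is that these reductions are stated for directed graphs with dart costs; since we can interpret each undirected edge as a pair of opposite darts, our assumption of positive integer costs is preserved and our no-zero-cycle requirement from Section~\ref{sec:linear-time} is satisfied.

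Once this compatibility is verified, multiplication of the per-call cost with the number of calls gives the two advertised bounds: $O(g) \cdot O(g(g n \log g + L)) = O(g^2(g n \log g + L))$ for shortest non-separating cycle, and $O(g^2+b) \cdot O(g(g n \log g + L)) = O((g^2+b) g (g n \log g + L))$ for shortest non-contractible cycle. Any post-processing performed by the Erickson or Fox reductions (such as reconstructing a concrete cycle from the distance information) runs within these bounds, because both reductions only do $O(n)$ additional work per MSSP call.

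The main obstacle I anticipate is confirming the monotone correspondence structure in Fox's reduction for non-contractible cycles, where one iterates over several homotopy representatives and the candidate endpoint vertices traversed within each MSSP run are dictated by a walk around a boundary component or a cut graph. Showing that these walks give rise to an $O(n)$-length monotone correspondence --- rather than one that doubles back and forces $\Omega(n \log n)$ queries --- is the crucial observation that saves the $\log$ factor and lets us beat the black-box $O(g^2 n \log n)$ bound achievable via Theorem~\ref{thm:MSSP} alone. Fortunately, the candidate endpoints in both reductions lie on a single facial walk of the cut-open surface, so a single pass along that walk provides the needed monotone structure; no additional algorithmic idea beyond careful invocation of Theorem~\ref{thm:linear-time_distances} is required.
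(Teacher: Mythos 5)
Your proposal matches the paper's proof: both plug the linear-time MSSP routine (Theorem~\ref{thm:linear-time_distances}) into the reductions of Erickson (using $O(g)$ MSSP instances on derived graphs of size $O(n)$ and genus $O(g)$) and Fox (using $O(g^2+b)$ instances), multiply the per-call cost $O(g(gn\log g + L))$ by the number of calls, and note that the undirected case follows from the obvious dart-pair reduction. The paper is equally terse about verifying the monotone-correspondence structure, asserting only that the required distances ``have the form required by Theorem~\ref{thm:linear-time_distances} with $k_2 = O(n)$,'' so your more explicit discussion of that point is a reasonable elaboration rather than a divergence.
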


Finally, a \EMPH{homology basis} is a maximal collection of cycles belonging to linearly independent
homology classes with coefficients in \(\Z_2\).
A \EMPH{shortest homology basis} is one in which the total number of edges or edge costs is
minimized.
We combine our algorithm with one by Borradaile~\etal~\cite{bcfn-mchbs-17}.
\begin{longversion}
We can compute a shortest homology basis in an unweighted undirected graph by plugging our
multiple-source shortest paths algorithm into the algorithm of Borradaile \etal~\cite{bcfn-mchbs-17}.
Their algorithm uses~\(O((g + b)^2)\) instances of multiple-source shortest paths in a derived graph
of size \(O(n)\) and genus \(O(g+b)\) when the embedding has \(b\) boundary faces.
\end{longversion}
\begin{theorem}
  \label{thm:linear_homology-basis}
  Let \(G = (V, E, F)\) be an unweighted undirected graph of size \(n\) and genus~\(g\), embedded in
  a surface with \(b\) boundary components.
  There exists a deterministic algorithm that computes a minimum homology basis of \(G\) in
  \(O((g+b)^4 n \log (g + b))\) time. 
\end{theorem}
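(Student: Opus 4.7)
The plan is to apply the algorithm of Borradaile, Chambers, Fox, and Nayyeri~\cite{bcfn-mchbs-17} in a black box manner and substitute our linear-time multiple-source shortest paths procedure (Theorem~\ref{thm:linear-time_distances}) for the multiple-source shortest paths subroutine they invoke. Their algorithm reduces the minimum homology basis problem to \(O((g+b)^2)\) instances of multiple-source shortest paths, each on a derived graph of size \(O(n)\) and genus \(O(g+b)\), embedded so that the sources lie on a common face. The only remaining work is to verify that our algorithm can service each such invocation within the time budget claimed.

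First I would check that in each of Borradaile et al.'s invocations, the set of distances actually consumed by the outer algorithm fits the monotone correspondence template demanded by Theorem~\ref{thm:linear-time_distances}. Each invocation asks for shortest path distances from the \(O(n)\) vertices around a designated face to the \(O(n)\) endpoints of a walk through the derived graph, indexed in a way consistent with how the moving source advances around the face; this is precisely a monotone correspondence with \(k_2 = O(n)\). Second, I would plug the parameters into Theorem~\ref{thm:linear-time_distances}: the derived graph has size \(O(n)\) and genus \(g' = O(g+b)\), and because the input is unweighted and undirected every dart cost is \(1\), so \(L = O(n)\). One invocation therefore costs
\[
O\!\left(g'(g' n \log g' + L) + k_2\right) = O\!\left((g+b)^2 n \log(g+b)\right).
\]
Third, I would multiply by the \(O((g+b)^2)\) invocations to obtain the total bound of \(O((g+b)^4 n \log(g+b))\), which dominates the \(O((g+b)n)\) preprocessing Borradaile et al.~charge outside the shortest path subroutine.

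The only mild obstacle is the undirected-unweighted hypothesis required by Theorem~\ref{thm:linear-time_distances}: the derived graphs constructed by Borradaile et al.~must inherit this property so that all dart costs really are \(1\) and the ``no zero-cost cycle'' assumption needed for our modified perturbation scheme is trivially satisfied. This is immediate because the standard reduction replaces each undirected edge of \(G\) by a pair of oppositely oriented unit-cost darts and the cut/glue constructions used to form the derived graph preserve unit edge lengths. Once this is observed, the theorem follows directly from the cited bound on the number of multiple-source shortest paths calls and the time bound of Theorem~\ref{thm:linear-time_distances}.
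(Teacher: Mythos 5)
Your proposal is correct and follows essentially the same approach as the paper: both plug Theorem~\ref{thm:linear-time_distances} into Borradaile et al.'s reduction, which makes $O((g+b)^2)$ calls to multiple-source shortest paths on derived graphs of size $O(n)$ and genus $O(g+b)$, giving the claimed $O((g+b)^4 n\log(g+b))$ bound. The paper states this quite tersely; your write-up supplies the routine but useful verifications (the monotone-correspondence fit, $L=O(n)$ from unit costs, preservation of the unweighted-undirected property by the derived-graph construction) that the paper leaves implicit.
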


\ifx\compilelong\undefined
\begin{acks}
\else
\paragraph*{Acknowledgements.}
\fi
The authors would like to thank Sergio Cabello, Erin W. Chambers, and Shay Mozes for many helpful
discussions.
\ifx\compilelong\undefined
They would also like to thank the anonymous reviewers for their helpful comments on the writing.
\end{acks}
\else
They would also like to thank the anonymous reviewers of the conference version of this paper for
their helpful comments on the writing.
\fi

\begin{longversion}
\bibliographystyle{abbrv}
\end{longversion}
\begin{shortversion}
\bibliographystyle{ACM-Reference-Format}
\balance
\end{shortversion}
\bibliography{jeff-bib/topology,jeff-bib/optimization,jeff-bib/data-structures,new_refs}

\end{document}